\documentclass{article}

\usepackage[T1]{fontenc}
\usepackage{booktabs,comment}  
\usepackage{fullpage}
\usepackage{enumitem}
\usepackage{xspace,xcolor}
\usepackage{multirow}
\usepackage{subcaption}
\usepackage{bbm,nicefrac}
\usepackage{tikz}
\usepackage{pgfplots}
\usepackage{subcaption}
\usepackage[square]{natbib}
\pgfplotsset{compat=1.18}
\usepackage{footnote}

\usepackage{amsmath,amsthm} 
\usepackage{thmtools,mathtools} 
\usepackage{thm-restate}
\usepackage{amsfonts,bm,dsfont} 
\usepackage[symbol]{footmisc}
\usepackage{placeins}
\usepackage{nicefrac,bbm}
\usepackage{subcaption}
\usepackage{appendix}
\usepackage[colorlinks=true,allcolors=black]{hyperref}

\usepackage[ruled]{algorithm2e} 

\SetAlFnt{\small}
\SetAlCapFnt{\small}
\SetAlCapNameFnt{\small}
\SetAlCapHSkip{0pt}
\IncMargin{-\parindent}

\usepackage{pgf,tikz,pgfplots}
\usetikzlibrary{matrix,calc,patterns,intersections,pgfplots.fillbetween}
\pgfplotsset{compat=1.18}

\newtheorem{theorem}{Theorem}[section]
\newtheorem*{theorem*}{Theorem}
\newtheorem{lemma}[theorem]{Lemma}
\newtheorem{proposition}[theorem]{Proposition}
\newtheorem{corollary}[theorem]{Corollary}
\newtheorem*{corollary*}{Corollary}
\newtheorem{claim}[theorem]{Claim}

\newtheorem{definition}[theorem]{Definition}

\DeclareMathOperator{\E}{\mathbb{E}}
\newcommand{\pr}{\mathrm{Pr}}
\allowdisplaybreaks

\newcommand{\R}{\mathbb{R}}

\newcommand{\opt}{\mathrm{OPT}}
\newcommand{\alg}{\text{ALG}}

\newcommand{\xm}{m}

\newcommand{\con}{C}
\newcommand{\rob}{R}

\newcommand{\talg}{\text{TAL}}
\newcommand{\talga}{\text{TAL}_{\alpha}}
\newcommand{\thr}{\talg}
\newcommand{\thra}{\talga}
\definecolor{DarkGreen}{RGB}{1,50,32}
\usepackage[]{color-edits}
\addauthor{JB}{blue!90!black}

\addauthor{IC}{blue!90!black}

\usepackage[nameinlink,capitalise]{cleveref}

\newcommand\blfootnote[1]{
  \begingroup
  \renewcommand\thefootnote{}
  \NoHyper\footnote{#1}\endNoHyper
  \addtocounter{footnote}{-1}
  \endgroup
}

\title{Prophet Inequality with Conservative Prediction}

\author{
Johannes Br\"ustle \\ Sapienza University of Rome
\and
Ilan Reuven Cohen \\ Bar Ilan University
\and
Stefano Leonardi \\ Sapienza University of Rome
}

\begin{document}

\maketitle

\blfootnote{\hspace*{-2.5em}
This project has been partially funded by the ERC Advanced Grant 788893 AMDROMA ``Algorithmic and Mechanism Design Research in Online Markets'', by the FAIR (Future Artificial Intelligence Research) project PE0000013, funded by the NextGenerationEU program within the PNRR-PE-AI scheme (M4C2, investment 1.3, line on Artificial Intelligence), and by the PNRR MUR project IR0000013-SoBigData.it, and by the MUR PRIN grant 2022EKNE5K (Learning in Markets and Society).
The work of I.R.~Cohen was supported by the Israel Science Foundation (grant No.~1737/21).}

\begin{abstract}
Prophet inequalities compare online stopping strategies against an omniscient "prophet" using distributional knowledge. In this work, we augment this model with a conservative prediction of the maximum realized value. We quantify the quality of this prediction using a parameter $\alpha \in [0,1]$, ranging from inaccurate to perfect. Our goal is to improve performance when predictions are accurate (consistency) while maintaining theoretical guarantees when they are not (robustness).

We propose a threshold-based strategy oblivious to $\alpha$ (i.e., with $\alpha$ unknown to the algorithm) that matches the classic competitive ratio of $1/2$ at $\alpha=0$ and improves smoothly to $3/4$ at $\alpha=1$. We further prove that simultaneously achieving better than $3/4$ at $\alpha=1$ while maintaining $1/2$
at $\alpha=0$ is impossible. Finally, when $\alpha$ is known in advance, we present a strategy achieving a tight competitive ratio of $\frac{1}{2-\alpha}$. 

\end{abstract}

\maketitle

\newpage

\section{Introduction}


Prophet inequalities are a pivotal tool in revenue management for analyzing robust, real-time decision-making under uncertainty. Originating in the optimal-stopping literature \citep{KrengelSucheston, samuel-cahn1984}, this framework compares the performance of online policies—which rely solely on distributional knowledge of future arrivals—against an optimal offline benchmark, the ``prophet,'' who possesses full foresight.

In the standard \emph{prophet inequality} setting, a decision-maker observes a sequence of non-negative random variables
\[
X_1, X_2, \ldots, X_n,
\]
which arrive sequentially. While the distributions are known a priori, the specific realizations are revealed one by one. Upon observing a variable, the decision-maker must strictly and irrevocably decide whether to accept it (terminating the process) or reject it (proceeding to the next). The objective is to design an online algorithm that maximizes the expected value of the selected element, providing a performance guarantee relative to the expected maximum
\[
\mathrm{OPT} = \mathbb{E}\!\left[\max_{i \in [n]} X_i\right],
\]
which represents the value achieved by a prophet who knows all realizations in advance.

Classical results by \citet{KrengelSucheston} and \citet{samuel-cahn1984} establish that threshold-based online algorithms can guarantee an expected reward of at least half the prophet's value. For instance, accepting the first variable exceeding the median of the distribution of the maximum yields this guarantee. A similar bound—optimal for general distributions—is achieved by the algorithm proposed by \citet{Kleinberg2012}, which selects the first realization exceeding half the expected maximum.

In practice, however, decision-makers often leverage machine-learned predictors or auxiliary information that estimate the maximum value more accurately than distributional data alone. We capture this scenario using a parsimonious prediction model. The online algorithm receives a conservative prediction $\tilde{v}$ approximating the specific realization of the prophet's value, $M(X) = \max_{i \in [n]} X_i$.  
We assume $\tilde{v}$ satisfies a multiplicative lower bound determined by a \emph{prediction quality parameter} $\alpha \in [0,1]$:
\[
\tilde{v} \;\ge\; \alpha \cdot M(X).
\]

Here, $\alpha \approx 1$ indicates a highly accurate lower bound, whereas $\alpha = 0$ reduces to the setting with no meaningful guarantees. A prediction comes from an external source like a machine learning algorithm that is able to predict a lower bound on the value that will be realized. Our main results are for the most interesting case in which the prediction is conservative, i.e., it is  not higher than the maximum realized value.  Learning conservative estimations of relevant quantities has recently been  studied in machine learning for reinforcement learning \cite{KZTL20, CYJ25, BGB21,LYS23}.  In statistics, several well known methods are available to provide a lower bound on the confidence interval of a random variable, (e.g., \cite{Hoeffding1963}) that will be exceeded by the realization with high probability.   On the other hand, we also study the case when the prediction fails to be lower than the realized maximum value. 

Our approach is closely related to the \emph{algorithms with predictions} framework \cite{MV22}, seeking methods that are simultaneously \emph{consistent} (enhancing performance when predictions are accurate) and \emph{robust} (maintaining reliability when predictions are inaccurate). 

Formally:
\begin{itemize}
    \item An algorithm is \emph{$\con$-consistent} if it is $\con$-competitive when the prediction is correct.
    \item An algorithm is \emph{$\rob$-robust} if it remains $\rob$-competitive even when the prediction is arbitrarily erroneous.
\end{itemize}

Our primary objective is to design an algorithm that remains $1/2$-robust when predictions fail (e.g., $\alpha =0$), while achieving $> 1/2$-consistency that improves smoothly as the prediction quality $\alpha$ increases. 
We demonstrate that this is achievable even when the algorithm is oblivious to the parameter $\alpha$. When $\alpha$ is known to the algorithm in advance, we find the exact level of consistency possible for all $\alpha \in [0,1]$.

\medskip

\subsection{Our Results and Techniques}

\paragraph{Thresholds for Prophets with Conservative Predictions}
 Our main results assume the conservative prediction model: the prediction $\tilde{v}$ is a lower bound on the maximum value that will be realized and always less than or equal to the realized maximum, thus satisfying $M(X) \ge \tilde{v} \ge \alpha \cdot M(X)$. 
 
 We adopt a prediction-augmented policy that sets an effective threshold $\max\{\tilde{v}, \tau\}$, where $\tau$ is a base threshold. However, the choice of the base threshold $\tau$ is non-trivial; standard threshold selections such as the median-based threshold \cite{samuel-cahn1984} or the mean-based threshold \cite{Kleinberg2012} fail to effectively leverage the prediction, as illustrated by the following examples.

\medskip

\begin{enumerate}
    \item \label{ex: median-ex} \textbf{(Median-based threshold)} Consider the standard threshold $\tau$ such that $\Pr[M(X) \ge \tau] = 1/2$. In an instance where $X_1=1$ deterministically and $X_2=2$ with probability $1/2$, the median threshold is $\tau=2$. This base threshold is too large: even with perfect predictions ($\alpha=1$), the algorithm is forced to reject $X_1$, yielding a competitive ratio of only $2/3$ while we will show that at least $3/4$ is achievable.
    \item \label{ex: mean-ex} \textbf{(Mean-based threshold)} Consider the threshold $\tau = \frac{1}{2}\mathbb{E}[M(X)]$. For an instance where $X_1=1$ and $X_2 = 1/\epsilon + 1/\sqrt{\epsilon}$ with probability $\epsilon$, $\E[M(X)] = 2+\sqrt{\varepsilon} - \varepsilon$, so that $\tau > 1$. Consequently, the algorithm always rejects the safe value $X_1$ in favor of the rare high value. In this case, the competitive ratio is limited to $1/2$ for any $\alpha$.
\end{enumerate}

\paragraph{Algorithms with Known Prediction Quality}
First, we consider the setting where the parameter $\alpha$ is known. Even in very simple instances, selecting the correct base
threshold is non-trivial once predictions are incorporated. As shown in Appendix~\ref{app:motivating_example}, the threshold achieving the desired competitive ratio may vary sharply with the distribution parameters, despite a fixed prediction quality $\alpha$.
This motivates a systematic approach aimed to determine the optimal way to incorporate  prediction into the threshold selection.  

\medskip

\noindent \textbf{Theorem (Competitive Ratio for Known $\alpha$)} For any instance where the prediction quality $\alpha\in [0,1]$ is known, we propose a threshold algorithm achieving a competitive ratio of $1/(2-\alpha)$. Moreover, no online algorithm can achieve a competitive ratio strictly better than ${1}/(2-\alpha)$ under $\alpha$-quality predictions.

\medskip

This result, depicted in \Cref{fig:known_alpha}, is significant as it establishes the fundamental limit of what can be achieved with conservative prediction in the setting where $\alpha$ is known; the bound is tight for every possible value of $\alpha$.

To show this, we define a novel base threshold $\tau^*_\alpha$, based on the threshold introduced by \citet{samuel1988prophet}.  
Since we obtain a tight approximation ratio, this demonstrates a surprising connection to \citet{samuel1988prophet}, who was seeking to strengthen a result by \citet{HillKertz1981} on prophet inequalities with the additional information of values being bounded. 

The intuition behind our threshold $\tau^*_\alpha$ lies in the structural properties of the worst-case instances. We show that the performance of the algorithm is convex in the probability of the largest item. This implies that the worst-case scenarios occur at the ``extremes'': either the largest item is effectively absent, or it is sufficiently likely to dictate the threshold. Our threshold is specifically calibrated to handle this latter ``direct case,'' effectively adapting the strategy of \citet{samuel1988prophet} for bounded variables to the prediction-augmented setting.

\paragraph{Algorithm Oblivious to Prediction Quality}
Second, we address the scenario where the algorithm is oblivious to $\alpha$. 

\medskip

\noindent \textbf{Theorem (Competitive Ratio of $\alpha$-oblivious Algorithm)}  We propose an $\alpha$-oblivious mechanism that guarantees a competitive ratio $f(\alpha)$ bounded by ${1}/{2} + \alpha^3/4 \leq f(\alpha) \leq {1}/{2} + \alpha/4$. Moreover, we show that no online algorithm can simultaneously maintain a competitive ratio of $1/2$ at $\alpha=0$ while achieving $3/4$ at $\alpha=1$.

\medskip

This result, depicted in \Cref{fig:oblivious_alpha}, constitutes a non-trivial generalization of the classic prophet inequality. Crucially, in the absence of actionable predictions (i.e., $\alpha=0$), our bound recovers the fundamental robustness guarantee of $1/2$. This confirms that our mechanism safely incorporates conservative predictions without incurring a penalty in the worst-case scenario; the decision-maker performs no worse than the standard baseline when predictions are uninformative, but gains significant value as quality improves.

In order to prove the result, we introduce the set of base thresholds $T_I(c)$ defined by the equation $\mathbb{E}[M \cdot \mathbb{I}[M \ge T_{I}(c)]] = c \cdot \opt$, which implicitly controls the product of the tail probability and the conditional expectation. This approach is conceptually distinct from the thresholds used thus far for standard prophet inequalities: whereas \citet{samuel-cahn1984} relies solely on a quantile (the median of the maximum) and \citet{Kleinberg2012} relies solely on the mean of the maximum, $T_I(c)$ integrates both metrics to secure a specific fraction of the optimal welfare. We set $c=3/4$ and use $\bar{\tau} =  T_I(3/4)$ as base threshold to optimize the trade-off for our purposes. 

A key property of this definition is that the resulting base threshold remains relatively small across all instances. As we observe in Examples \ref{ex: median-ex} and \ref{ex: mean-ex}, this is strategic: it ensures that the effective threshold, defined as the maximum of the base threshold and the prediction, remains sensitive to the prediction. When the prediction is accurate and high, it overrides the small base threshold, allowing the algorithm to fully leverage the power of the prediction. Nevertheless, when the prediction is low, the base threshold is sufficient to secure a robust welfare guarantee.

\begin{figure}[h]
    \centering
    \begin{subfigure}[b]{0.48\textwidth}
        \centering
        \begin{tikzpicture}
            \begin{axis}[
                width=\textwidth,
                height=6cm,
                axis lines=left,
                xlabel={$\alpha$},
                ylabel={Comp. Ratio},
                ymin=0.4, ymax=1.05,
                xmin=0, xmax=1,
                grid=major,
                title={\textbf{Known Prediction Quality}},
                clip=false
            ]
            \addplot[
                domain=0:1, 
                samples=100, 
                color=blue, 
                very thick
            ]
            {1/(2-x)};
            \addlegendentry{$\frac{1}{2-\alpha}$}
            
            \node[circle,fill,inner sep=1.5pt] at (axis cs:0,0.5) {};
            \node[anchor=south west, xshift=1pt, yshift=1pt] at (axis cs:0,0.5) {$1/2$};
            
            \node[circle,fill,inner sep=1.5pt] at (axis cs:1,1) {};
            \node[anchor=south east] at (axis cs:1,1) {$1$};
            \end{axis}
        \end{tikzpicture}
        \caption{The optimal competitive ratio, achieved by our algorithm, when $\alpha$ is known.}
        \label{fig:known_alpha}
    \end{subfigure}
    \hfill
    \begin{subfigure}[b]{0.48\textwidth}
        \centering
        \begin{tikzpicture}
            \begin{axis}[
                width=\textwidth,
                height=6cm,
                axis lines=left,
                xlabel={$\alpha$},
                ylabel={Comp. Ratio},
                ymin=0.4, ymax=1.05,
                xmin=0, xmax=1,
                grid=major,
                title={\textbf{Oblivious to Prediction Quality}},
                legend pos=north west,
                clip=false
            ]
            
            \addplot[
                domain=0:1, 
                samples=100, 
                color=red, 
                dashed, 
                thick
            ]
            {0.5 + x/4};
            \addlegendentry{Upper: $\frac{1}{2} + \frac{\alpha}{4}$}
            
            \addplot[
                domain=0:1, 
                samples=100, 
                color=blue, 
                very thick
            ]
            {0.5 + x^3/4};
            \addlegendentry{Lower: $\frac{1}{2} + \frac{\alpha^3}{4}$}
            
            \node[circle,fill,inner sep=1.5pt] at (axis cs:0,0.5) {};
            \node[anchor=south west, xshift=1pt, yshift=1pt] at (axis cs:0,0.5) {$1/2$};
            
            \node[circle,fill,inner sep=1.5pt] at (axis cs:1,0.75) {};
            \node[anchor=east, xshift=-3pt] at (axis cs:1,0.75) {$3/4$};
            
            \end{axis}
        \end{tikzpicture}
        \caption{Bounds on the competitive ratio of our algorithm when $\alpha$ is unknown.}
        \label{fig:oblivious_alpha}
    \end{subfigure}
    
    \caption{Comparison of competitive ratios for known vs. unknown prediction quality.}
    \label{fig:comparison}
\end{figure}
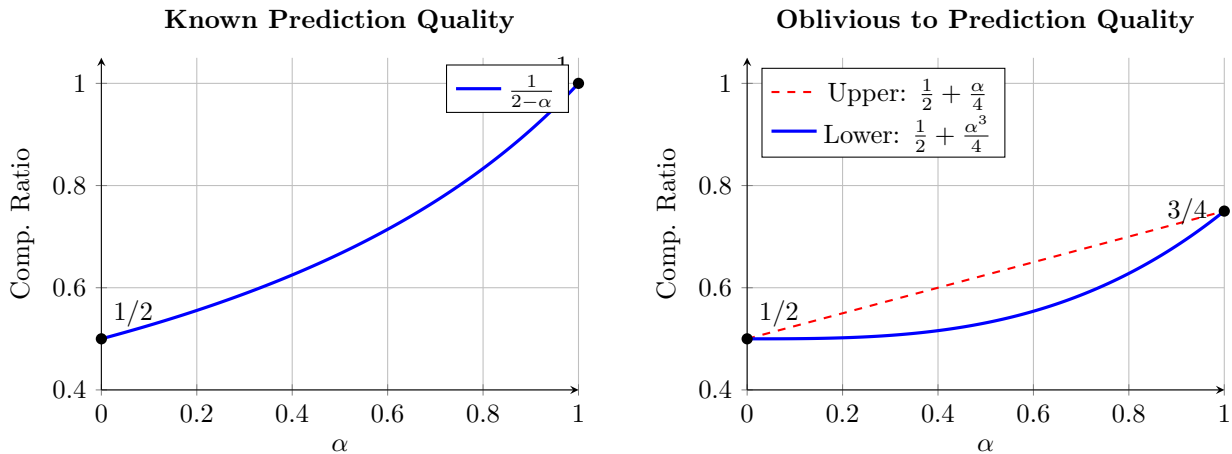

\paragraph{Non-conservative predictions.} We also consider the case where the prediction $\tilde{v}$ is \emph{non-conservative}, i.e., it may overestimate the true maximum. Suppose, however, that this overestimation is bounded by a multiplicative factor of $(1+\epsilon)$. In this setting, our scheme can still be applied by appropriately scaling down the prediction. Doing so yields slightly weaker guarantees, corresponding to a competitive ratio of $f\!\left(\frac{\alpha}{1+\epsilon}\right)$.

In Appendix~A, we further analyze the trade-off between robustness and consistency when the prediction $\tilde{v}$ may arbitrarily overestimate the true maximum. We show that even for a simple instance in this general setting, any algorithm that achieves consistency $\con$ and robustness $\rob$ must satisfy the bound $\con + \rob \le 1$. In other words, it is impossible to improve upon the classical worst-case guarantee without proportionally sacrificing performance when predictions are accurate.

\subsection{Related Work}
\citet{DT19} investigated this for the general prophet inequality problem, showing that in a sequence of $n$ values, a prior misprediction error $\epsilon$ (in Kolmogorov or L\'evy metric) degrades performance additively by $n \epsilon$. 
More recently, \citet{bai2025optimalstoppingpredictedprior} formulated the optimal stopping problem with a single predicted prior $\tilde{\mathcal{F}}$ versus an unknown true prior $\mathcal{F}$. Their work also focuses on the trade-off between consistency and robustness \cite{MV22}. For $n$ i.i.d.\ values, the secretary algorithm of \citet{dynkin} is both $1/e$-consistent and $1/e$-robust. Conversely, the optimal prophet inequality algorithm \cite{K86, HK82} is $0.745$-consistent but $0$-robust. The method in \citet{bai2025optimalstoppingpredictedprior} interpolates between these regimes, sacrificing the $1/e$-robustness of Dynkin’s algorithm to obtain the $0.745$ consistency bound.
In this setting, a recent work of \citet{KK25} shows how it is possible to simultaneously obtain $1/e$ robustness and $0.745$ consistency.  

Compared to previous works on prophet inequality with predictions referenced above, we remark that we adopt a different prediction model. Namely, we consider the general non-i.i.d. prophet inequality model and achieve perfect robustness, i.e., competitive ratio $1/2$, when the prediction is not accurate.  

 In the spirit of studying the theory of optimal stopping problems with predictions, we also mention works presenting several models of the secretary problem with prediction \cite{BS24, DLLV24, FY24, AGKK20}. In particular, \cite{AGKK20} is to some extent related to ours since it considers the secretary problem with inaccurate prediction (two-sided error/non-conservative) of the maximum value among all arriving secretaries. The results they obtain are $>1/e$ consistency when the prediction is accurate and $<1/e$ robustness when the prediction is not accurate. 
While their model of prediction is more general, we remark that their results do not extend to the general non i.i.d. prophet inequality.  Moreover, our results are stronger since we are able to prove $1/2$-robustness, corresponding to the tight lower bound obtained without prediction.

\section{Model and Definitions}

\paragraph{The Classic Prophet Inequality}

In the standard \emph{prophet inequality} setting, an online decision-maker observes a sequence of non-negative random variables with finite expectation
\[
X_1, X_2, \ldots, X_n,
\]
arriving one by one. The distributions of the variables are known in advance, but the realizations are revealed only sequentially. Upon observing $X_i$, the decision-maker must irrevocably decide whether to accept it, in which case the process stops, or to reject it and continue to the next variable.

Let 
\[
\mathrm{OPT} = \mathbb{E}\!\left[\max_{i \in [n]} X_i\right]
\]
denote the expected value obtained by a ``prophet'' who knows all realizations in advance.
The goal of the online algorithm is to maximize the expected value of the selected element,
denoted $\mathrm{ALG}$, and to provide a guarantee of the form
\[
\mathrm{ALG} \ge c \cdot \mathrm{OPT},
\]
for the largest possible constant $c \in (0,1)$.

A classical result of \cite{KrengelSucheston} and \cite{samuel-cahn1984} establishes that there exists a threshold-based online algorithm achieving the tight bound $c = 1/2$. 
\citet{Kleinberg2012} propose the algorithm that selects the first realization exceeding
\[
\tau = \frac{1}{2}\,\mathbb{E}\!\left[\max_{i \in [n]} X_i\right],
\]
guaranteeing that its expected reward is at least half of the prophet's value. This bound is optimal for general distributions.

\paragraph{Algorithms with Predictions}

We consider the prophet inequality problem in the \emph{algorithms with predictions} (or learning-augmented) paradigm. In addition to the distributions of the random variables $X_1, \ldots, X_n$, the online algorithm receives a conservative prediction $\tilde{v}$, intended to approximate the prophet's value $\max_{i \in [n]} X_i$.

We assume that the prediction satisfies a multiplicative lower bound (for general predictions, we have strong negative results; see \Cref{sec:genpred} for details). That is, they are given as
\[
\max_{i \in [n]} X_i \;\ge\; \tilde{v} \;\ge\; \alpha \cdot \max_{i \in [n]} X_i,
\]
where $\alpha \in [0,1]$ is a \emph{prediction quality parameter}. When $\alpha$ is close to $1$,
$\tilde{v}$ is guaranteed to be a reasonably accurate lower bound on the maximum, while $\alpha = 0$ corresponds to the fully adversarial case in which no meaningful quality guarantee
on $\tilde{v}$ is available.

Our goal is twofold:
\begin{enumerate}
    \item \textbf{Algorithms with known prediction quality.}
    When the prediction quality $\alpha$ is known to the algorithm, we seek to design online selection rules that optimally exploit this information. Formally, we aim to characterize the best achievable competitive ratio $f(\alpha)$ such that
    \[
    \mathbb{E}[\mathrm{ALG}(\tilde{v}, \alpha)]
    \;\ge\;
    f(\alpha) \cdot \mathrm{OPT}.
    \]

    \item \textbf{Robust algorithms without knowing prediction quality.}
    We also consider algorithms that do not know the prediction quality $\alpha$. The objective is to design a \emph{robust} algorithm that guarantees a nontrivial competitive ratio for all realizations of the prediction, while smoothly improving as $\alpha$ increases. In particular, when $\alpha = 0$ the algorithm should recover the classical prophet-inequality guarantee.
    Formally, we seek to maximize the competitive ratio $f(\alpha)$ such that 
    \[
    \mathbb{E}[\mathrm{ALG}(\tilde{v})]
    \;\ge\;
    f(\alpha) \cdot \mathrm{OPT}.
    \]
    For notational convenience, for a given algorithm $\alg$ and distribution $X$ we denote\\
    $\mathbb{E}_X[\mathrm{ALG}(\tilde{v})] = f_X(\alpha) \cdot \mathrm{OPT}$, implying $f(\alpha) = \inf_{X} f_X(\alpha)$.
\end{enumerate}

\paragraph{Prediction-Augmented Threshold Algorithms}

All of the algorithms we study are based on a natural extension of the classical threshold policy for prophet inequalities. The algorithm begins by selecting a \emph{base threshold} $\tau \ge 0$, which may depend on distributional information and, in some settings, on the prediction quality parameter $\alpha$.

Given a prediction $\tilde{v}$ satisfying $\tilde{v} \ge \alpha \cdot \max_{i} X_i$, we
define the \emph{effective threshold}
\[
\zeta \;=\; \max\{\tau, \tilde{v}\}.
\]
Intuitively, the prediction provides a lower bound on the prophet's value, and the algorithm adapts by never selecting an element below this predicted benchmark.

The prediction-augmented threshold algorithm proceeds as follows:
\begin{enumerate}
    \item Compute the effective threshold $\zeta = \max\{\tau, \tilde{v}\}$.
    \item Observe the sequence $X_1, X_2, \ldots, X_n$ online.
    \item Select the first index $i$ for which $X_i \ge \zeta$, and output $X_i$.
    \item If no such index exists, output $0$.
\end{enumerate}
Denote by $\talga(X,\tau)$ the expected value of the prediction-augmented threshold algorithm that posts threshold $\zeta$. 
Also, denote by $\talg(X,\tau)$ the expected value of the threshold algorithm that posts threshold $\tau$. We give the formal definition below.

\paragraph{Notation}

Given an instance $I = (X_1,\dots,X_n)$ and a realization $R(I) = (v_1,\dots,v_n)$, 
let $\xm(R) = \max_{i \in [n]} v_i$.

For a threshold $\tau \ge 0$, define
\[
i^*(R,\tau) = \arg\min \{\, i : v_i \ge \tau \,\},
\]
with the convention that $i^*(R,\tau)=\emptyset$ if no such index exists.
The realized threshold value is
\[
\thr(R,\tau) = 
\begin{cases}
v_{i^*(R,\tau)}, & i^*(R,\tau)\neq \emptyset, \\
0, & \text{otherwise}.
\end{cases}
\]
Accordingly, the expected thresholded value is
\[
\thr(I,\tau) = \mathbb{E}_{R \sim I}[\thr(R,\tau)].
\]

Given a prediction quality parameter $\alpha \in [0,1]$, define the 
\emph{$\alpha$–augmented threshold}
\[
\thra(R,\tau) = \thr\!\left(R,\ \max\{\tau,\, \alpha \cdot \xm(R)\}\right),
\qquad
\thra(I,\tau) = \mathbb{E}_{R \sim I}[\thra(R,\tau)].
\]

\begin{claim}
\label{claim:threshold_equivalence}
For any instance $I$ and any prediction $\tilde v$ satisfying
$\alpha \cdot \max_{i} X_i \le \tilde v \le \max_{i} X_i$,
the quantity $\thra(I,\tau)$ is a lower bound on the expected value of the
prediction--augmented threshold algorithm with base threshold $\tau$.
\end{claim}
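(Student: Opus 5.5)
Fix an instance $I$ and a realization $R=(v_1,\dots,v_n)$, and write $m=\xm(R)$ for its maximum. Let $\tilde v=\tilde v(R)$ be the prediction, which satisfies $\alpha m \le \tilde v \le m$. The plan is to show the pointwise inequality
\[
\thr\bigl(R,\max\{\tau,\tilde v\}\bigr) \;\ge\; \thr\bigl(R,\max\{\tau,\alpha m\}\bigr) = \thra(R,\tau)
\]
for every realization, and then take expectations over $R\sim I$. The left-hand side is exactly the value collected by the prediction-augmented algorithm on $R$, since that algorithm posts the effective threshold $\zeta=\max\{\tau,\tilde v\}$.

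\textbf{Case $\zeta > m$.} Here no value clears $\zeta$, so the algorithm gets $0$. Since $\tilde v\le m<\zeta$, this forces $\zeta=\tau$, so $\tau>m$. Then the comparison threshold $\zeta'=\max\{\tau,\alpha m\}\ge\tau>m$ also selects nothing. Both sides equal $0$.

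\textbf{Case $\zeta\le m$.} The algorithm selects some $v_{i^*}$ with $v_{i^*}\ge\zeta$. We use $\tilde v\ge\alpha m$, which gives $\zeta\ge\zeta'$. For the comparison threshold $\zeta'$, the selected value is either $0$ (impossible here, since $\zeta'\le\zeta\le m$) or some $v_j$ with $v_j\le m$. The crude bound $v_{i^*}\ge\zeta$ does not by itself beat $v_j$, so a sharper argument is needed. Observe that every value clearing $\zeta$ also clears $\zeta'$. If the first index clearing $\zeta'$ also clears $\zeta$, the two selections coincide. Otherwise, $\zeta'$ stops earlier, at some $v_j<\zeta$, while $\zeta$ later obtains $v_{i^*}\ge\zeta>v_j$. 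In both sub-cases the pointwise inequality holds. This "raising the threshold never hurts as long as it stays at most $m$" monotonicity is the key step, and it is the only place where both the conservativeness $\tilde v\le m$ and the quality bound $\tilde v\ge\alpha m$ are used.

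Taking expectations over $R\sim I$ then yields that $\thra(I,\tau)$ lower-bounds the algorithm's expected value. The main obstacle is the bookkeeping in the second case, namely checking that the first crossing of the higher threshold is never worse. One should also note that $\tilde v$ may be an arbitrary (even adversarial) function of $R$; the argument is pointwise, so this causes no difficulty.
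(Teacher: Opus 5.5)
Your proof is correct and is essentially the paper's argument. Fix a realization and compare the effective threshold from the actual prediction with the one from the worst-case prediction $\alpha m$: either both select the same index, or the higher threshold (which is at most $m$ by conservativeness, so it still selects something) stops later at a value at least $\zeta$ that strictly exceeds what the lower threshold took; then take expectations. (Minor: the bound $\tilde v\le m$ also drives your first case, so it is not used only in the key monotonicity step.)
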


\begin{proof}
Fix a realization $R$ and let $v^* = \alpha \cdot \xm(R)$ denote the 
worst-case consistent prediction.  
Consider any feasible prediction value $v' \in [v^*,\, \xm(R)]$.  
Let $\zeta = \max\{\tau, v^*\}$ be the effective threshold induced by $v^*$, and let $\zeta' = \max\{\tau, v'\}$ be the effective threshold induced by $v'$.

If $i^*(R,\zeta') = i^*(R,\zeta)$, then $\thr(R,\zeta') = \thr(R,\zeta)$ and the algorithm’s realized value is unchanged.

If $i^*(R,\zeta') \neq i^*(R,\zeta)$, then by definition $\zeta < \zeta' \le \xm(R)$.
Thus $i^*(R,\zeta')$ is well-defined and
\[
v_{i^*(R,\zeta')} \;\ge\; \zeta' > v_{i^*(R,\zeta)}.
\]
Hence the realized value under threshold $\zeta'$ is weakly larger than under $\zeta$.

Therefore, for every realization $R$, the value obtained with any feasible prediction 
$v' \in [v^*,\,\xm(R)]$ is at least $\thra(R,\tau)$.  
Taking expectation over realizations yields the claim.
\end{proof}

\section{Algorithm for Known \texorpdfstring{$\alpha$}{alpha}}
\label{sec:known_pred}

In this section, we study the setting in which the prediction quality parameter
$\alpha$ is known to the algorithm in advance.
Our goal is to characterize the optimal competitive ratio achievable as a function
of $\alpha$, and to design a prediction-augmented threshold algorithm that attains
this bound. Note that, a naive algorithm which simply set the prediction as the threshold is only $\alpha$-competitive.

Let $I=(X_1,\ldots,X_n)$ be a sequence of non-negative independent random variables.
For a fixed $\alpha \in [0,1]$, we show that the optimal competitive ratio is
\(
f(\alpha) \;=\; \frac{1}{2-\alpha}.
\)
In particular, there exists a suitable choice of base threshold $\tau$ such that
the corresponding prediction-augmented threshold algorithm achieves a competitive
ratio of at least $f(\alpha)$, and no online algorithm can do better.

\medskip
\noindent
Our main result is stated below.

\begin{theorem}
\label{thm:mainknown}
Fix $\alpha \in (0,1)$. For any instance $I$ of non-negative independent random variables,
there exists a base threshold $\tau$ such that
\[
\thra(I,\tau) \;\ge\; \frac{1}{2-\alpha}\,\opt(I).
\]
\end{theorem}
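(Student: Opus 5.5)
The plan is to prove \Cref{thm:mainknown} by a direct threshold analysis. For a fixed base threshold $\tau$ I will derive two separate lower bounds on $\thra(I,\tau)$, and then pick $\tau$ by a continuity (intermediate-value) argument so that the weaker of the two is at least $\frac{1}{2-\alpha}\opt(I)$. I will use the notation $p=\Pr[\xm(R)\ge\tau]$, $A=\E[\xm\,\mathbb{I}[\xm\ge\tau]]$ and $B=\E[(\xm-\tau)^+]$. The upper bound I will compare against is
\[
\opt(I)\le (1-p)\tau + A ,
\]
or, alternatively, $\opt\le \tau+B$.

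\textbf{Step 1 (prediction-driven bound).} Whenever $\xm(R)\ge\tau$, the effective threshold $\max\{\tau,\alpha\xm(R)\}$ is at most $\xm(R)$, so some item is selected. That item has value at least $\max\{\tau,\alpha\xm\}$. In addition, on the event $E_1$ that exactly one item reaches $\tau$, that item is the maximum and it is selected, so the algorithm collects all of $\xm$. This gives
\[
\thra(I,\tau)\;\ge\;\E[\xm\,\mathbb{I}[E_1]]+\alpha\,\E[\xm\,\mathbb{I}[\xm\ge\tau,\ \neg E_1]]\;=\;\alpha A+(1-\alpha)\E[\xm\,\mathbb{I}[E_1]].
\]
By independence, $\E[\xm\mathbb{I}[E_1]]=\sum_i \Pr[\forall j\ne i: X_j<\tau]\,\E[X_i\mathbb{I}[X_i\ge\tau]]\ge(1-p)A$, which is the Samuel-Cahn bounded-variables trick. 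So the first bound is
\[
\thra(I,\tau)\ge \bigl(\alpha+(1-\alpha)(1-p)\bigr)A .
\]

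\textbf{Step 2 (classical bound).} I will check that the standard argument survives raising the threshold. The selected value is always at least $\tau$ when $\xm\ge\tau$. For the excess, item $i$ is reached with probability at least $1-p$; if $X_i$ is the first item above $\tau$, then either $X_i\ge\alpha\xm$ and it is taken, or a later, larger item is taken. I expect this to give
\[
\thra(I,\tau)\ge p\tau+(1-p)\,\E\bigl[(\xm-\tau)^+\bigr]
\]
up to a term that only helps. If a clean proof of this fails, I will replace it with the weaker but safe bound $\thra\ge \E[\max\{\tau,\alpha\xm\}\mathbb{I}[\xm\ge\tau]]$ and redo the balancing with it.

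\textbf{Step 3 (choosing $\tau$).} This is the main obstacle. The two bounds pull in opposite directions as $\tau$ varies: small $\tau$ makes $p$ large and helps Step 2 less, while large $\tau$ kills $A$. I will define $\tau^*_\alpha$, in the spirit of \citet{samuel1988prophet}, as the threshold at which the two bounds, each normalized by $(1-p)\tau+A$, coincide. Writing $\opt$ as a convex combination of the ``direct'' contribution $A$ and the ``tail-free'' contribution $(1-p)\tau$, the target becomes an elementary inequality in $(p,\tau,A)$, to be verified at the balancing point.

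To make this rigorous I would first reduce to worst-case instances. Here I would use convexity of the algorithm's performance in the probability of the largest atom, as the introduction indicates. This pushes that atom's probability to an extreme: either the largest item is essentially absent, which is handled by Step 2 with ratio $\ge\frac12\cdot$(improvement from $\alpha$), or it dominates the threshold, which is handled by Step 1 and gives ratio $\alpha+(1-\alpha)(1-p)$. Equating the two extremes should produce exactly $1/(2-\alpha)$; for example $p=\frac{1}{2-\alpha}$-type choices make $\alpha+(1-\alpha)(1-p)$ match. Discontinuities of $p$ in $\tau$ (atoms) I will handle by randomized tie-breaking at $\tau$, as in the standard median-threshold proof.
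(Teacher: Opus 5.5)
\textbf{There is a genuine gap, in Step 3.}

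Your Steps 1 and 2 are valid. Step 2 follows from the pointwise inequality $\thra(R,\tau)\ge\thr(R,\tau)$, since raising the threshold to $\max\{\tau,\alpha\xm\}\le\xm$ either keeps the selected item or selects a larger one, combined with the classical bound for $\thr$.

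Step 3 is never carried out, and with these bounds it cannot be. On some instances, for every $\tau$, even the larger of your two bounds stays below $\opt/(2-\alpha)$. (The larger one would suffice, since both bound the same quantity; you do not need the weaker one to clear the target.)

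\emph{Counterexample.} Take $X_1\equiv 1$ and $X_2=1/q$ with probability $q$, where $q<\alpha$, so $\opt=2-q$.
\begin{itemize}
\item For $\tau\le 1$ you have $p=1$ and $A=2-q$. Step 1 gives $\alpha(2-q)$ and Step 2 gives $\tau\le 1$.
\item For $1<\tau\le 1/q$ you have $p=q$ and $A=1$. The bounds are $1-(1-\alpha)q$ and $1-q+q^2\tau$, both at most $1$.
\item For $\tau>1/q$ both bounds vanish.
\end{itemize}
So the best you can certify is $\max\{\alpha(2-q),1\}$. This is strictly below $(2-q)/(2-\alpha)$ because $\alpha(2-\alpha)<1$ and $q<\alpha$. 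Smoothing the atom at $1$ does not help: for $\alpha=1/2$, Step 1 then peaks at $9/8<4/3$.

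Yet $\thra(I,1)=\opt$ on this instance. When $X_2$ is realized, the prediction $\alpha/q>1$ makes the algorithm skip $X_1$.

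\emph{Where the loss comes from.} Step 1 credits the full maximum only when a single item exceeds $\tau$. The relaxation $\Pr[\forall j\ne i: X_j<\tau]\ge 1-p$ then throws even that away when $p$ is near $1$. The actual benefit of the prediction is that it discards items below $\alpha\xm$, and events defined through $\tau$ alone cannot see this.

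Your fallback bound $\E[\max\{\tau,\alpha\xm\}\mathbb{I}[\xm\ge\tau]]$ does not repair this in general. Take $\alpha=1/2$ and any instance in which $\xm$ equals $1$, $2$, $1/(2q)$ with probabilities $1/2-q$, $1/2$, $q$. Then all three bounds are at most $5/4+O(q)$ for every $\tau$, while $\opt/(2-\alpha)=4/3-O(q)$.

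\textbf{The missing ingredient} is the one the paper's proof rests on: Samuel-Cahn's bounded-support inequality $\opt\le 2T-T^2$ for values in $[0,1]$. Here $T=\thr(I,\eta(I))$ is taken at the fixed-point threshold $\eta(I)=\sup\{\tau:\thr(I,\tau)\ge\tau\}$ (Corollary~\ref{cor:ester-bound}).

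After reducing to sorted scaled Bernoulli instances, consider any prefix with $\eta(I^{(k)})\ge\alpha v_k$. There the prediction never raises the threshold, so $\thra=\thr$ and the ratio is at least $1/(2-\eta/v_k)\ge 1/(2-\alpha)$. The paper sets $\tau^*_\alpha=\eta(I^{(s_\alpha)})$ for the largest such prefix.

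Your last paragraph invokes convexity in the top atom's probability $x$. That argument is valid only for a fixed threshold, whereas your balanced $\tau$ moves with $x$. Moreover, the two endpoints are not cases to be bounded directly by something like your Steps 1 and 2:
\begin{itemize}
\item At $x=0$ the paper recurses on the shorter instance. This requires that $\tau^*_\alpha$ is still the chosen threshold there, which holds because $s_\alpha$ is unchanged.
\item At the critical $x=p_n^*$, where $\eta(I(p_n^*))=\alpha v_n$, it compares $\tau^*_\alpha$ with the threshold $v_t$, to which the Samuel-Cahn bound applies. The comparison uses the monotonicity Claim~\ref{cl:mono_pred}.
\end{itemize}
Without an inequality of this bounded-support type, and a threshold that stays valid when the top atom is deleted, your plan has no mechanism that reaches $1/(2-\alpha)$.
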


We also show that this guarantee is tight: even when $\alpha$ is known, no online
algorithm can achieve a strictly better competitive ratio under $\alpha$-quality
predictions.

\begin{lemma}
\label{lem:known_alpha_lb}
For every $\alpha \in [0,1]$, no online algorithm can achieve a competitive ratio
strictly greater than $\frac{1}{2-\alpha}$ under $\alpha$-quality predictions.
\end{lemma}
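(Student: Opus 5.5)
The plan is to exhibit, for each fixed $\alpha\in(0,1]$, a two-variable instance together with an adversarial but feasible prediction rule under which the prediction carries no information at all. The problem then collapses to a classical two-point stopping problem whose best value is exactly $\frac{1}{2-\alpha}\opt$. The case $\alpha=0$ is the classical $1/2$ bound, recovered either directly from the standard instance or as a limit of the construction below.

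\textbf{Construction.} Take $X_1=1$ deterministically, and let $X_2=H$ with probability $p$ and $X_2=0$ otherwise. We set $H=1/\alpha$, which makes $\alpha H=1$, and $p=\alpha$. The adversary always announces $\tilde v=1$. This prediction is feasible on every realization:
\begin{itemize}
\item if $X_2=0$, then $M=1\ge \tilde v=1\ge \alpha\cdot 1$;
\item if $X_2=H$, then $M=H\ge 1=\tilde v=\alpha H$.
\end{itemize}
Thus $\tilde v$ is a valid conservative $\alpha$-quality prediction that is independent of the realization. We compute
\[
\opt=(1-p)\cdot 1+pH=1-\alpha+1=2-\alpha .
\]

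\textbf{Bounding any online algorithm.} Since $\tilde v$ is constant, any online algorithm, possibly randomized and possibly using $\alpha$ and the prediction, amounts to accepting $X_1$ with some probability $q$ and otherwise waiting for $X_2$. Its expected value is
\[
q\cdot 1+(1-q)\,pH=q+(1-q)=1 .
\]
Hence $\alg\le 1=\frac{1}{2-\alpha}\opt$, and no algorithm is strictly better than $\frac{1}{2-\alpha}$-competitive.

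\textbf{Remaining points.} The only real subtlety is to define the adversarial model carefully: the prediction may be any value in $[\alpha M, M]$, and a lower bound must hold against the worst feasible choice. The construction is designed so that a single constant value is feasible in every scenario, and this is what lets us argue that the prediction is useless. A related point is that the choice $H=1/\alpha$ saturates the constraint $\tilde v\ge \alpha M$ exactly in the high scenario.

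For $\alpha=0$, $H=1/\alpha$ is undefined. Instead we use the standard instance $X_1=1$ and $X_2=1/\varepsilon$ with probability $\varepsilon$, with prediction $\tilde v=0$, which is always feasible. This gives ratio $\frac{1}{2-\varepsilon}\to\frac12$.
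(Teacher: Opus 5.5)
Your proposal is correct and is essentially the paper's proof: the same instance ($X_1\equiv 1$, $X_2=1/\alpha$ with probability $\alpha$), the same constant prediction $\tilde v\equiv 1$, and the same observation that every online policy earns at most $1$ against $\opt=2-\alpha$. Your separate limiting argument for $\alpha=0$, where $1/\alpha$ is undefined, is a small completion that the paper's proof leaves implicit.
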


The proof of Lemma~\ref{lem:known_alpha_lb} appears in \Cref{app:deferred_proofs}.

\subsection*{Reduction to Canonical Instances}

We begin by showing that it suffices to analyze instances of a particularly simple
form.

\begin{definition}[Sorted scaled Bernoulli instance]
An instance $I=(X_1,\ldots,X_n)$ is a \emph{sorted scaled Bernoulli instance} if
\begin{enumerate}[label=(\roman*)]
    \item each $X_i$ is supported on $\{0,v_i\}$ for some $v_i \ge 0$, and
    \item the values are ordered as $0 \le v_1 \le v_2 \le \cdots \le v_n$.
\end{enumerate}
\end{definition}

The following lemma shows that restricting attention to this class of instances
incurs no loss of generality.

\begin{lemma}
\label{lem:transform}
Fix $\alpha \in [0,1]$ and a function $f(\alpha)$.
If for every sorted scaled Bernoulli instance $I$ there exists a base threshold
$\tau$ such that
\[
\thra(I,\tau) \;\ge\; f(\alpha)\,\opt(I),
\]
then the same guarantee holds for every instance of non-negative independent
random variables.
\end{lemma}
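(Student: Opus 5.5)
The plan is to transform an arbitrary instance $I$ into a sorted scaled Bernoulli instance $I'$ such that $\opt(I')=\opt(I)$ and, for every base threshold $\tau$, $\thra(I',\tau)\le \thra(I,\tau)$. Then the threshold $\tau$ guaranteed for $I'$ by hypothesis also works for $I$:
\[
\thra(I,\tau)\ge\thra(I',\tau)\ge f(\alpha)\opt(I')=f(\alpha)\opt(I).
\]
I would first do this for instances in which every $X_i$ has finite support, and treat general distributions by approximation at the end.

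\textbf{Step 1 (splitting into Bernoullis).} Let $X_i$ take values $a_{1}<\dots<a_{k}$ with probabilities $p_{1},\dots,p_{k}$. Replace it by independent variables $Y_{i1},\dots,Y_{ik}$, where $Y_{ij}\in\{0,a_j\}$ and
\[
\Pr[Y_{ij}=a_j]=q_j=\frac{p_j}{\Pr[X_i\le a_j]}.
\]
This is the hazard-rate construction. A direct computation shows $\Pr[\max_j Y_{ij}\le a_\ell]=\prod_{j>\ell}(1-q_j)=\Pr[X_i\le a_\ell]$, so $\max_j Y_{ij}$ has the same law as $X_i$. We can therefore couple the two instances so that $X_i=\max_j Y_{ij}$ pointwise, with independence across blocks $i$ preserved. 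Under this coupling $\xm$ is identical in the two instances. Consequently $\opt$ is unchanged, and for each realization the effective threshold $\zeta=\max\{\tau,\alpha\,\xm(R)\}$ is also the same.

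\textbf{Step 2 (sorting).} Arrange all the $Y_{ij}$ in nondecreasing order of their values $a_j$; the result is a sorted scaled Bernoulli instance $I'$. Fix a coupled realization and its common threshold $\zeta$.
\begin{itemize}
\item If the original algorithm selects nothing, then no $X_i\ge\zeta$, hence no $Y_{ij}\ge\zeta$, and $I'$ also yields $0$.
\item Otherwise the original algorithm selects some $X_i\ge\zeta$. In $I'$, because the values arrive in increasing order, the threshold rule selects the smallest realized value that is $\ge\zeta$.
\end{itemize}
In the second case, $X_i=\max_j Y_{ij}$ is itself such a realized value, so the selection in $I'$ is at most $X_i$. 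Hence $\thra(R',\tau)\le\thra(R,\tau)$ pointwise, and taking expectations gives the desired inequality. This step also uses that $\zeta$ depends only on $\xm$, which is order-invariant.

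\textbf{Step 3 (general distributions).} This is the step I expect to be the main obstacle. My first attempt is to round each $X_i$ down to a grid $\{0,\varepsilon,2\varepsilon,\dots\}$ and truncate at a large level $B$, giving a finite-support instance $I_\varepsilon$. The hypothesis, via Steps 1--2, then yields a threshold $\tau_\varepsilon$ for $I_\varepsilon$. We have $\opt(I_\varepsilon)\to\opt(I)$ by finite expectation and monotone convergence.

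The delicate point is comparing $\thra(I,\tau_\varepsilon)$ with $\thra(I_\varepsilon,\tau_\varepsilon)$, since rounding can move values across the threshold $\zeta$. I would address this by posting the slightly shifted threshold $\tau_\varepsilon$ (rounded to the grid) in $I$. A value of $I$ is at least a grid point exactly when its rounding is, so the selected indices agree whenever the prediction term does not interfere. The remaining discrepancy, coming from $\alpha\,\xm$ rounding, is controlled by $O(\varepsilon)$ plus the tail beyond $B$. This gives $\thra(I,\tau_\varepsilon)\ge f(\alpha)\opt(I)-\delta(\varepsilon,B)$ with $\delta\to0$.

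To obtain an exact threshold, I would restrict $\tau_\varepsilon$ to a compact range such as $[0,\opt]$, extract a convergent subsequence, and use right-continuity of $\tau\mapsto\thra(I,\tau)$ along a suitable one-sided limit. If that fails, the fallback is to state the result as an approximate threshold for every $\delta>0$, which suffices for the competitive-ratio statements.
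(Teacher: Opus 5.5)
Your Steps 1 and 2 are correct and follow the paper's discrete-to-Bernoulli reduction: the same hazard-rate splitting and the same coupling $X_i=\max_j Y_{ij}$. Your one-shot sorting argument is a tidier substitute for the paper's adjacent-swap argument: in a sorted instance the rule selects the smallest realized value that is at least $\zeta$, and the original selection $X_i$ is such a value. The gap is in Step 3.

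\textbf{Where Step 3 fails.} Your claim that the discrepancy from rounding $\alpha\,\xm$ is $O(\varepsilon)$ plus a tail term is false.
\begin{itemize}
\item Your observation that $x\ge g$ iff the rounding of $x$ is $\ge g$ holds only for grid points $g$. The prediction threshold $\alpha\,\xm(R_\varepsilon)$ is in general not a grid point, because additive rounding does not commute with multiplication by $\alpha$.
\item So an item that clears $\alpha\,\xm(R)$ in $I$ can fail to clear $\alpha\,\xm(R_\varepsilon)$ in $I_\varepsilon$. The rounded run then skips it and may go on to collect the maximum. This changes the value by order $(1-\alpha)\xm$, not $\varepsilon$.
\end{itemize}
Concretely, let $\alpha=\tfrac12$, $X_1\equiv c$, and $X_2=2c$ with probability $p$. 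If you want a non-discrete instance, append $X_3\sim U[0,c/4]$; it never matters. Suppose $c/\varepsilon=N+f$ with $f\in[\tfrac12,1)$, which happens for arbitrarily small $\varepsilon$. Then $X_1^\varepsilon=N\varepsilon$ but $X_2^\varepsilon=(2N+1)\varepsilon$, so $\alpha X_2^\varepsilon>X_1^\varepsilon$. Now fix any grid threshold $\tau\le N\varepsilon$.
\begin{itemize}
\item The run on $I$ always takes $X_1=c$; when $X_2=2c$ the effective threshold is exactly $c$.
\item The run on $I_\varepsilon$ takes $X_2^\varepsilon\approx 2c$ whenever $X_2\neq 0$.
\end{itemize}
Hence $\thra(I_\varepsilon,\tau)-\thra(I,\tau)\approx pc$ for all such $\varepsilon$. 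With $p=\alpha$ this is the tight instance of Lemma~\ref{lem:known_alpha_lb}. A small threshold solves $I_\varepsilon$ almost optimally, while no threshold beats ratio $2/3$ on $I$. So this rounding is optimistic, which is the wrong direction for a reduction. For atomless distributions, near-ties at ratio $\alpha$ have vanishing probability and an $o(1)$ version of your estimate survives. Atoms in ratio exactly $\alpha$ defeat it.

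\textbf{How to repair it.} The missing idea is to make the discretization compatible with the multiplicative threshold $\alpha\,\xm$.
\begin{itemize}
\item The paper rounds to the geometric grid $\{\delta(1+\delta)^kM\}$ after truncation. It compares $\thra(I,\tau)$ with $\thr_{\alpha'}(I',\tau)$ for the slightly smaller quality $\alpha'=\alpha/(1+\delta)$. This gives $\alpha'\,\xm(R')\le\alpha\,\xm(R)$, so every item that clears the prediction threshold in $I$ still clears it in $I'$. That is exactly what excludes the case $i<r$ in Claim~\ref{cl:general_to_discrete}. It then invokes the hypothesis at $\alpha'$ and lets $\delta\to0$.
\item If you want the hypothesis only at the given $\alpha$, replace your additive grid by a geometric grid with ratio $\alpha^{1/m}$, rounding down. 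This grid is closed under multiplication by $\alpha$ above a negligible bottom cutoff, so $x\ge\alpha y$ implies $\phi(x)\ge\alpha\phi(y)$. With $\tau$ on the grid, the rounded run stops no later than the original. If it stops earlier, it does so at a value below $\alpha\,\xm(R)$, hence at most the original selection.
\end{itemize}

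\textbf{A smaller point.} For the rule ``accept if $X_i\ge\zeta$'', the map $\tau\mapsto\thra(I,\tau)$ is left- rather than right-continuous. A limit from above therefore only yields the strict-inequality rule. The paper also settles for the guarantee up to an arbitrarily small loss, as in your fallback.
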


The proof of Lemma~\ref{lem:transform}, which closely follows the approach of
\citet{ekbatani2024prophet}, is deferred to the appendix.

\subsection*{Threshold Construction for Bernoulli Instances}

It therefore remains to establish the desired competitive ratio for sorted scaled
Bernoulli instances.

\begin{lemma}
\label{lem:mainknownexist}
Fix $\alpha \in (0,1)$ and let $I$ be a sorted scaled Bernoulli instance.
Then there exists a base threshold $\tau$ such that
\[
\thra(I,\tau) \;\ge\; \frac{1}{2-\alpha}\,\opt(I).
\]
\end{lemma}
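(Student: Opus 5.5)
The plan is an induction on the number of variables $n$ that peels off the largest-value variable $X_n$ (value $v_n$, probability $p_n$). The base threshold will be chosen among a finite family of natural candidates, specifically $\tau\in\{0,v_1,\dots,v_n\}$ together with a Samuel-Cahn-type threshold $\tau^*_\alpha$ defined below. Write $c=\frac{1}{2-\alpha}$, let $I'=(X_1,\dots,X_{n-1})$, and set $q=p_n$, $v=v_n$. Then
\[
\opt(I)=qv+(1-q)\,\opt(I').
\]
For any fixed $\tau\le v$, conditioning on $X_n$ shows that $\thra(I,\tau)$ is affine in $q$.

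On $\{X_n=0\}$ the run coincides with a run on $I'$. On $\{X_n=v\}$ the prediction floor is $\alpha v$. The algorithm then either accepts an earlier active $X_i$ with $v_i\ge\max\{\tau,\alpha v\}$, or it reaches and accepts $X_n$. So the value obtained is always at least $\alpha v$, and equals $v$ when no earlier item clears the floor.

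The proof would compare two strategies, in the spirit of the convexity/extremes heuristic described in the introduction.

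\textbf{Top-only strategy.} Take $\tau=v$, i.e.\ essentially accept only $X_n$. This yields at least $qv$, which suffices whenever $qv\ge c\,\opt(I)$.

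\textbf{Recursive strategy.} Take $\tau=\tau'$, the threshold supplied by the induction hypothesis for $I'$. On $\{X_n=0\}$ this gives at least $c\,\opt(I')$ in expectation. On $\{X_n=v\}$ it gives at least $\alpha v$, and in fact more.

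The naive accounting, bounding the $\{X_n=v\}$ branch by $\alpha v$, fails. It would require $\alpha\ge c$, which is false for $\alpha\in(0,1)$. So the key is to sharpen this branch. Items of $I'$ that are accepted when $X_n=v$ are exactly the items with $v_i\ge\max\{\tau',\alpha v\}$. Moreover, whenever no such item is active, the algorithm collects the full $v$.

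For this reason I would strengthen the induction hypothesis to a two-parameter statement. It should bound from below, simultaneously, the value $\thra(I',\tau)$ and the probability $\Pr[\text{no active } X_i \text{ with } v_i\ge\max\{\tau,\alpha v\}]$, as a function of an external cap $v\ge v_{n-1}$. This would mirror how Samuel-Cahn (1988) handles variables bounded by a known constant: the later large item plays the role of the bound.

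Concretely, I would define $\tau^*_\alpha$ as the solution of a fixed-point equation of the form
\[
\tau=(1-\alpha)\,\E\bigl[(M-\tau)^+\bigr]+\alpha\,\tau\Pr[M\ge\tau],
\]
or a close variant, and verify it by the standard prophet decomposition. This decomposition lower-bounds the algorithm's value by
\[
\zeta\cdot\Pr[\text{accept}]\;+\;\E\Bigl[\textstyle\sum_i (X_i-\zeta)^+\,\mathbb{I}[\text{reach }i]\Bigr],
\]
where the effective threshold $\zeta=\max\{\tau,\alpha M\}$ contributes a term of size at least $\alpha M$ on the event $M\ge\tau/\alpha$. The exact coefficients in the fixed-point equation would be tuned so that the final inequality closes precisely at $c=1/(2-\alpha)$. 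Lemma~\ref{lem:known_alpha_lb} indicates where the bound must be tight, and the extremal cases should be used as sanity checks when fixing the constants.

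The main obstacle is the interior regime, where neither extreme strategy dominates. This is where $q$ is moderate and $\opt(I')$ is comparable to $qv$. Because $\max_\tau \thra(I,\tau)$ is a maximum of affine functions of $q$, it is convex in $q$, while $c\,\opt(I)$ is affine in $q$. Hence the ratio can be worst at an interior point. I would first try to show that for the single threshold $\tau^*_\alpha$, the slack $\thra(I,\tau^*_\alpha)-c\,\opt(I)$ is itself convex (or at least quasi-concave-free) in $q$ with $\tau^*_\alpha$ re-solved as $q$ varies. This would reduce to the endpoints: $q\to0$ (the induction hypothesis) and the ``direct case'' where $X_n$ alone pins down $\tau^*_\alpha=\alpha v$-like behaviour. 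Both endpoints can be checked by a one-variable calculation. If convexity in $q$ fails, the fallback is to collapse $I'$ to a two-point worst case: a deterministic value $a$ followed by $(v,q)$. On this case the ratio is explicitly minimized at $c=1/(2-\alpha)$, after which I would argue monotonicity to reduce general $I'$ to this case.
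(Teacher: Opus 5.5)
Your proposal is a plan rather than a proof. The two decisive steps are left open, and the concrete pieces you do commit to do not work.

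\textbf{The threshold.} The fixed-point rule $\tau=(1-\alpha)\E[(M-\tau)^+]+\alpha\tau\Pr[M\ge\tau]$, as written, fails already on $X_1\equiv 1$, $X_2=1/\alpha$ w.p.\ $q$. For $\tau\le 1$ it reduces to $\tau=\opt/2=(1+q(1/\alpha-1))/2$, i.e.\ the Kleinberg--Weinberg threshold. Take $\alpha=1/4$ and $q\uparrow 1/3$. Then this is the unique solution and it lies below $1$. The effective threshold $\max\{\tau,\alpha M\}\le 1$ therefore always accepts $X_1$, earning $1$ against $\opt\to 2$, a ratio tending to $1/2<4/7$.

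Nor could you certify any such threshold by the classical decomposition. Since $\zeta=\max\{\tau,\alpha M\}$ depends on $X_i$ and on every later variable, the event of reaching $i$ is correlated with $(X_i-\zeta)^+$. The product form used in the standard proof is then unavailable.

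\textbf{The interior regime.} You propose to show the slack is convex in $q$ ``with $\tau^*_\alpha$ re-solved as $q$ varies'' and then reduce to endpoints. This is backwards: a convex function puts its \emph{maximum} at the endpoints, and can dip below zero in the interior while being nonnegative at both ends. Letting the threshold move with $q$ also destroys the affine structure you would need. Your other two claims are unsupported: that the direct-case endpoint is a one-variable calculation, and the two-point-collapse fallback.

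\textbf{The missing idea.} You need to commit to one threshold that does not depend on $p_n$ over the whole range. The paper uses the SC-threshold $\eta(I)=\sup\{\tau:\thr(I,\tau)\ge\tau\}$ together with Samuel-Cahn's bounded-variable inequality $\opt\le 2\thr(I,\eta(I))-\thr(I,\eta(I))^2/v_n$ (Corollary~\ref{cor:ester-bound}).
\begin{itemize}
\item If $\eta(I)\ge\alpha v_n$, the prediction never binds at base threshold $\eta(I)$, and this inequality gives ratio at least $1/(2-\alpha)$ directly.
\item Otherwise the paper takes $\tau^*_\alpha=\eta(I^{(s_\alpha)})$ for the longest prefix with $\eta(I^{(k)})\ge\alpha v_k$. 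This depends only on $I^{(n-1)}$, so the same threshold serves every $p_n\in[0,p_n^*]$, where $\eta(I(p_n^*))=\alpha v_n$.
\item For that fixed threshold, $\thra$ and $\opt$ are affine in $p_n$. Their ratio is therefore monotone and is minimized at $p_n=0$ (induction, with the same threshold) or at $p_n=p_n^*$.
\item The endpoint $p_n^*$ is not a one-variable check. It uses the Samuel-Cahn inequality on the whole instance $I(p_n^*)$ at threshold $v_t$ (the smallest $v_i\ge\alpha v_n$). The monotonicity Claims~\ref{cl:basester} and~\ref{cl:mono_pred} then give $\thra(I(p_n^*),\tau^*_\alpha)\ge\thra(I(p_n^*),v_t)$.
\end{itemize}
None of these ingredients appears in your plan in usable form: the prefix SC-threshold, its independence from $p_n$, the Samuel-Cahn bound, and monotonicity in the base threshold.
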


The proof of Lemma~\ref{lem:mainknownexist} constitutes the main technical contribution
of this section and is presented below.
Together, Lemmas~\ref{lem:transform} and~\ref{lem:mainknownexist} imply
Theorem~\ref{thm:mainknown}.

\medskip
\noindent
From this point onward, we assume without loss of generality that the instance $I$
is a sorted scaled Bernoulli instance. Our threshold selection builds on the \emph{SC-threshold}, introduced by \citet{samuel1988prophet}.
For an instance $I$, the \emph{SC-threshold} is defined as
\(
\eta(I) \;=\; \sup\{\tau \ge 0 \mid \thr(I,\tau) \ge \tau\}.
\)

In \citet{samuel1988prophet}, it is shown that when each random variable is supported on $[0,1]$, the following inequality holds:
\begin{equation}
\label{eq:ester}
\opt(I)
\;\le\;
2\,\thr(I,\eta(I)) - \thr(I,\eta(I))^2.
\end{equation}

For a sorted scaled Bernoulli instance $I$ with maximum value $v_n$, if $\eta(I) \ge \alpha \cdot v_n$,
then using the base threshold $\eta(I)$ the prediction-augmented algorithm coincides with the
standard threshold algorithm. In this case, as we show later, Equation~\eqref{eq:ester} implies
\(
\thra(I,\eta(I)) = \thr(I,\eta(I))
\;\ge\;
\frac{1}{2-\alpha}\,\opt(I),
\)
as required.
If this condition fails, we show that it suffices to restrict attention to a suitable prefix
$I^{(k)}$ of the instance, where $I^{(k)}$ denotes the sub-instance consisting of the first
$k$ variables $(X_1,\ldots,X_k)$. 

We now define the base threshold used in the proof of
Lemma~\ref{lem:mainknownexist}.
Specifically, we select the SC-threshold of the largest prefix $I^{(k)}$ for which
$\eta(I^{(k)}) \ge \alpha v_k$.

Formally, let
\[
s_\alpha = \max\bigl\{\, k \in [n] \mid \eta(I^{(k)}) \ge \alpha v_k \bigr\},
\qquad
\tau^*_\alpha = \eta(I^{(s_\alpha)}).
\]

\begin{lemma}
\label{lem:mainknown}
Fix $\alpha \in (0,1)$ and let $I$ be a sorted scaled Bernoulli instance. Then
\[
\thra(I,\tau^*_\alpha) \;\ge\; \frac{1}{2-\alpha}\,\opt(I).
\]
\end{lemma}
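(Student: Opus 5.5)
Write $s=s_\alpha$, $\tau=\tau^*_\alpha=\eta(I^{(s)})$ and $T=\thr(I^{(s)},\tau)$. First I will check that $s_\alpha$ is well defined. For $k=1$ the SC-threshold of a single Bernoulli with value $v_1$ and probability $p_1$ is $\eta=p_1v_1$. So $k=1$ qualifies only if $p_1\ge\alpha$. If no $k$ qualifies, I plan to add a dummy deterministic-zero variable, or to argue directly that the same induction below starts from an empty prefix with $\tau=0$; I would handle this edge case separately.

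\textbf{Step 1: the prefix itself.} Scale so that $v_s=1$; the prefix is then supported on $[0,1]$. By \eqref{eq:ester}, $\opt(I^{(s)})\le 2T-T^2$. Since $\tau\ge\alpha v_s$ and every value in the prefix is at most $v_s$, on the prefix the effective threshold equals $\tau$. Moreover $T\ge\tau\ge\alpha$ (right-continuity of $\thr$ at the sup gives $T\ge\tau$). Then $(2T-T^2)/T=2-T\le 2-\alpha$, so $T\ge\opt(I^{(s)})/(2-\alpha)$. This is the ``direct case'' mentioned in the introduction.

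\textbf{Step 2: the suffix $k>s$.} Items $s+1,\dots,n$ come later and have larger values $v_k\ge v_s$. For every $k>s$ we have $\eta(I^{(k)})<\alpha v_k$, and this is what I must exploit.

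The algorithm's value decomposes as follows. If some prefix item exceeds $\tau$, we collect it, except when a suffix item realizes with $\alpha v_k>$ that item's value; in that case the prefix item is skipped. Otherwise we collect the first suffix item above $\max(\tau,\alpha\,\xm)$. I would lower bound this by two contributions: the prefix contribution, at least $T$ minus losses, plus the suffix contribution, at least $\Pr[\text{no prefix accepted}]\cdot\E[\text{suffix max contribution}]$. Since suffix values are increasing, accepting any suffix item $k$ that realizes when no larger-index item realizes is safe: the threshold $\alpha\xm$ equals $\alpha v_{\max}\le v_k$ only for the max.

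I plan to prove the bound by induction on $n-s$. Adding an item $X_{k}$ with $k>s$ and probability $p_k$ changes $\opt$ by $p_k(v_k-\E[\max\text{ of earlier, capped}])$. The condition $\eta(I^{(k)})<\alpha v_k$ says that $\thr(I^{(k)},\alpha v_k)<\alpha v_k$. This bounds how much probability mass/value the earlier items carry relative to $v_k$, and should let me show that ALG gains at least a $1/(2-\alpha)$ fraction of the increment in $\opt$. The paper's hint suggests a cleaner route: ALG and OPT are affine in $p_n$ (ALG is linear in $p_n$, and the ratio's worst case is at the extremes by convexity). So I would reduce to $p_n\in\{0,\text{boundary}\}$. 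At $p_n=0$ the item is absent and we recurse on $n-1$. At the other extreme, $p_n$ is large enough that $\eta(I^{(n)})=\alpha v_n$, which makes $s=n$, and Step 1 applies. Iterating this removes the suffix.

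\textbf{Main obstacle.} The obstacle is Step 2, specifically the convexity/extreme-point reduction. Changing $p_n$ changes $s_\alpha$ and hence $\tau$, so the algorithm is not fixed along the segment. I would first try to fix the threshold $\tau$ and compare the fixed-threshold algorithm with OPT as $p_n$ varies on the interval where $s_\alpha$ is constant. There the ratio ALG/OPT, a ratio of two affine functions, is monotone, so the minimum is at an endpoint. At the upper endpoint $\eta(I^{(n)})\ge\alpha v_n$, where Step 1 applies, possibly after continuity arguments. At the lower endpoint $p_n=0$ the induction hypothesis applies.
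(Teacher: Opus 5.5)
\textbf{Gap: the upper endpoint $x=p_n^*$.} Your architecture is the paper's. The direct case comes from \eqref{eq:ester}. For a fixed base threshold, $\thra(I(x),\tau)$ and $\opt(I(x))$ are affine in $x=p_n$. The endpoint $x=0$ is handled by recursion, which is legitimate because dropping item $n$ changes neither $s_\alpha$ nor $\tau^*_\alpha$.

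The gap is at $x=p_n^*$, where $\eta(I(p_n^*))=\alpha v_n$. You noticed that $\tau$ jumps there, but the fix is not a continuity argument. Along the segment you are evaluating a single algorithm with the \emph{old} base threshold $\tau^*_\alpha(I)=\eta(I^{(s)})\le\eta(I)<\alpha v_n$. Step~1 applied to $I(p_n^*)$ certifies a different algorithm: the one with base threshold $\alpha v_n$, equivalently $v_t$ with $t=\min\{i: v_i\ge \alpha v_n\}$. What you actually need is
\[
\thra(I(p_n^*),\tau^*_\alpha)\;\ge\;\thra(I(p_n^*),v_t),
\]
which compares two different thresholds on the same instance. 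Nothing in your proposal supplies this, and without it the extreme-point reduction proves nothing at the upper endpoint.

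\textbf{How the paper closes it.} It conditions on the maximum.
\begin{itemize}
\item On $\{M=v_n\}$, both base thresholds induce effective threshold $\alpha v_n$, so the outcomes coincide.
\item On $\{M<v_n\}$ you are on $I^{(n-1)}$, where $\tau^*_\alpha$ is still the chosen threshold. Iterating Claim~\ref{cl:mono_pred} gives $\thra(I^{(n-1)},\tau^*_\alpha)\ge\thra(I^{(n-1)},v_t)$. That claim says that for every $v_k>\tau^*_\alpha$, $\thra(I,v_k)\ge\thra(I,v_{k+1})$.
\end{itemize}

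This monotonicity is where the maximality of $s_\alpha$ is really used. In your final route, maximality only serves to keep $\tau$ constant on $[0,p_n^*)$. The proof of Claim~\ref{cl:mono_pred} runs as follows:
\begin{itemize}
\item On $\{M>v_k/\alpha\}$, base thresholds $v_k$ and $v_{k+1}$ give the same effective threshold.
\item On $\{M\le v_k/\alpha\}$ the prediction is inactive, and you are on the prefix $I^{(j)}$ with $j=\max\{i: v_i\le v_k/\alpha\}$.
\item If $\eta(I^{(j)})\ge v_k\ge \alpha v_j$, then $j$ would qualify, so $s_\alpha\ge j$. Prefix-monotonicity of $\eta$ would then give $\tau^*_\alpha\ge\eta(I^{(j)})\ge v_k>\tau^*_\alpha$, a contradiction. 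Hence $\eta(I^{(j)})<v_k$.
\item The peak property of the SC-threshold (Claim~\ref{cl:basester}) then gives $\thr(I^{(j)},v_{k+1})\le v_k$. By \eqref{eq:thrdiff}, $\thr(I^{(j)},v_k)-\thr(I^{(j)},v_{k+1})=p_k\bigl(v_k-\thr(I^{(j)},v_{k+1})\bigr)\ge 0$.
\end{itemize}

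\textbf{Minor slip.} $\thr(I,\cdot)$ is constant on each interval $(v_{r-1},v_r]$. So $\thr(I,\eta(I))\ge\eta(I)$ follows from left-continuity (approaching $\eta$ from below), not right-continuity.
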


First, we establish several basic properties of the SC-threshold and of the base
threshold $\tau^*_\alpha$ that will be used throughout the proof. Let $I=(X_1,\ldots,X_n)$ be a sorted scaled Bernoulli instance, where each random variable
$X_i$ is supported on $\{0,v_i\}$ and
\[
\Pr[X_i=v_i]=p_i,
\qquad
\Pr[X_i=0]=1-p_i.
\]
 Then the prophet’s expected value satisfies the standard recursion
\begin{equation}
\label{eq:opt}
\opt(I)
= v_n p_n + (1-p_n)\opt(I^{(n-1)}).
\end{equation}

Next, fix a threshold $\tau \in (v_{r-1},\,v_r]$. Since the instance is sorted, the only
values that can meet or exceed $\tau$ are $v_r,\ldots,v_n$, and the threshold algorithm
accepts the first index whose realization is at least~$\tau$. Therefore, posting threshold
$\tau$ is equivalent to posting threshold $v_r$, and we obtain the recursion
\begin{equation}
\label{eq:thr}
\thr(I,\tau)=\thr(I,v_r)
= p_r\,v_r + (1-p_r)\,\thr(I,v_{r+1}).
\end{equation}

\begin{claim}
For every $k \in \{1,\ldots,n-1\}$,
\begin{equation}
\label{eq:thrdiff}
\thr(I,v_k)-\thr(I,v_{k+1})
\;=\;
p_k\bigl(v_k-\thr(I,v_{k+1})\bigr).
\end{equation}
\end{claim}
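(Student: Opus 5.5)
The identity \eqref{eq:thrdiff} follows directly from the one-step recursion \eqref{eq:thr}. Fix $k\in\{1,\ldots,n-1\}$ and apply \eqref{eq:thr} with $r=k$. This is justified because a threshold $\tau=v_k$ lies in $(v_{k-1},v_k]$ when $v_{k-1}<v_k$. The degenerate case $v_{k-1}=v_k$ needs a separate check, addressed below. The recursion gives
\[
\thr(I,v_k)=p_k v_k+(1-p_k)\,\thr(I,v_{k+1}).
\]
Subtracting $\thr(I,v_{k+1})$ from both sides yields
\[
\thr(I,v_k)-\thr(I,v_{k+1}) = p_k v_k - p_k\,\thr(I,v_{k+1}) = p_k\bigl(v_k-\thr(I,v_{k+1})\bigr),
\]
which is exactly \eqref{eq:thrdiff}.

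The only point requiring care is how to read $\thr(I,v_{k+1})$ in the recursion. The claimed identity should be understood as comparing the policy that starts accepting at index $k$ with the policy that starts at index $k+1$. So I will justify the recursion directly from the definition of $\thr(R,\cdot)$, rather than through the interval $(v_{r-1},v_r]$.

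Consider the policy with threshold $v_k$. If $X_k=v_k$, which happens with probability $p_k$, it stops at index $k$ and collects $v_k$. This uses the fact that indices before $k$ carry values at most $v_k$; strictly smaller values are never accepted. If $X_k=0$, the remaining run over indices $k+1,\ldots,n$ behaves exactly like the threshold-$v_{k+1}$ policy. Every $v_j$ with $j>k$ is at least $v_{k+1}$, and hence at least $v_k$, so both policies accept precisely the first $j>k$ with $X_j=v_j$. Independence of $X_k$ from the later variables then gives the recursion with weights $p_k$ and $1-p_k$.

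When ties $v_{k-1}=v_k$ occur, I interpret $\thr(I,v_k)$ as the index-based policy "accept the first $j\ge k$ with $X_j=v_j$". Equivalently, I break ties by perturbing the values infinitesimally, which preserves the sorted order and the identity by continuity. I expect this tie-handling convention to be the only, minor, obstacle; the algebra itself is immediate.
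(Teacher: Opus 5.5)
Your proposal is correct and follows the paper's proof exactly: apply the recursion \eqref{eq:thr} with threshold $v_k$, then subtract $\thr(I,v_{k+1})$ from both sides. Your extra care about ties is a refinement the paper skips. Under the literal threshold definition, the recursion breaks when $v_k=v_{k+1}$ or $v_{k-1}=v_k$, and your index-based reading (equivalently, infinitesimal perturbation of the values) is the right way to make the recursion hold.
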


\begin{proof}
Applying the recursion \eqref{eq:thr} with threshold $v_k$ gives
\[
\thr(I,v_k)=p_k v_k+(1-p_k)\thr(I,v_{k+1}).
\]
Subtracting $\thr(I,v_{k+1})$ from both sides yields
\[
\thr(I,v_k)-\thr(I,v_{k+1})
= p_k v_k - p_k \thr(I,v_{k+1})
= p_k\bigl(v_k-\thr(I,v_{k+1})\bigr),
\]
which proves the claim.
\end{proof}

For a sorted scaled Bernoulli instance, the maximum value is $v_n$, and therefore we can
apply the above bound after rescaling by $v_n$. This yields the following corollary.

\begin{corollary}\label{cor:ester-bound}
\[
\frac{\opt(I)}{v_n}
\;\le\;
2\cdot \frac{\thr(I,\eta(I))}{v_n}
-\left(\frac{\thr(I,\eta(I))}{v_n}\right)^2.
\]
\end{corollary}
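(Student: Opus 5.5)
The plan is to derive the corollary directly from inequality~\eqref{eq:ester} of \citet{samuel1988prophet} by rescaling. For $v_n>0$ (if $v_n=0$ everything vanishes and there is nothing to prove), define the rescaled instance $I'=(X_1/v_n,\ldots,X_n/v_n)$. Since $I$ is a sorted scaled Bernoulli instance, each $X_i$ takes values in $\{0,v_i\}\subseteq[0,v_n]$. Hence each $X_i/v_n$ is supported on $[0,1]$, so $I'$ satisfies the hypothesis under which \eqref{eq:ester} holds.

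Next I record how the relevant quantities behave under scaling by $c=1/v_n>0$:
\begin{itemize}
\item $\opt(I')=\opt(I)/v_n$, by linearity of expectation, since the maximum commutes with positive scaling.
\item $\thr(I',\tau)=\thr(I,v_n\tau)/v_n$ for every $\tau\ge 0$. Realization by realization, the first index with $X_i/v_n\ge\tau$ is the first index with $X_i\ge v_n\tau$, and the accepted value is scaled by $1/v_n$.
\item The SC-threshold is therefore equivariant: $\thr(I',\tau)\ge\tau$ holds if and only if $\thr(I,v_n\tau)\ge v_n\tau$. Taking suprema gives $\eta(I')=\eta(I)/v_n$.
\end{itemize}
Combining the last two items, $\thr(I',\eta(I'))=\thr(I,\eta(I))/v_n$.

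Substituting these identities into \eqref{eq:ester} for $I'$ yields exactly the displayed inequality. The only point needing care is the equivariance of $\eta$ through the supremum. This is immediate because the map $\tau\mapsto v_n\tau$ is an order-preserving bijection of $[0,\infty)$ that carries the defining set for $I'$ onto the defining set for $I$. No attainment of the supremum is needed, since \eqref{eq:ester} is applied as stated to $\eta(I')$. I expect no real obstacle; the step that deserves the most attention is confirming that the support condition $[0,1]$ holds, which uses the sortedness $v_i\le v_n$.
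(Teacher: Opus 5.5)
Your proposal is correct and follows the paper's own route: the paper obtains the corollary by rescaling the sorted scaled Bernoulli instance by $v_n$, so that Samuel-Cahn's bound \eqref{eq:ester} for $[0,1]$-supported variables applies. Your explicit check that $\opt$, $\thr$, and $\eta$ scale equivariantly fills in details the paper leaves implicit.
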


We first prove two helpful claims. The following one proves a monotonicity property for the standard prophet inequality. It shows that once the base threshold exceeds the SC-threshold $\eta(I)$, increasing it further to the next available value $v_{k+1}$ cannot increase the expected reward. This confirms that the SC-threshold effectively identifies a ``peak'' in performance for the standard algorithm, as selecting strictly higher values yields diminishing returns.

\begin{claim}
\label{cl:basester}
For any $k \in [n]$, if \(v_k > \eta(I)\), then  
\(
\thr(I,v_k) \ \ge\ \thr(I,v_{k+1}).
\)
\end{claim}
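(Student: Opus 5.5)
By \eqref{eq:thrdiff}, $\thr(I,v_k)-\thr(I,v_{k+1}) = p_k\bigl(v_k-\thr(I,v_{k+1})\bigr)$. Since $p_k\ge 0$, the claim reduces to showing
\[
\thr(I,v_{k+1}) \le v_k .
\]
Here we use the convention $\thr(I,v_{n+1})=0$ when $k=n$, in which case the claim is trivial because $\thr(I,v_n)\ge 0$. So assume $k<n$.

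To get this inequality I will argue by contradiction, using the definition of $\eta(I)$ as a supremum. Suppose $\thr(I,v_{k+1}) > v_k$. Because the instance is sorted, any threshold $\tau \in (v_k, v_{k+1}]$ satisfies $\thr(I,\tau)=\thr(I,v_{k+1})$, as noted just before \eqref{eq:thr}.

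I then pick $\tau = \min\{v_{k+1}, \thr(I,v_{k+1})\}$, which is strictly larger than $v_k$. This threshold lies in $(v_k, v_{k+1}]$, so $\thr(I,\tau) = \thr(I,v_{k+1}) \ge \tau$. Hence $\tau$ belongs to the set $\{\tau\ge 0 : \thr(I,\tau)\ge\tau\}$, which gives $\eta(I) \ge \tau > v_k$. This contradicts the hypothesis $v_k > \eta(I)$.

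It follows that $\thr(I,v_{k+1})\le v_k$, and multiplying by $p_k\ge 0$ in \eqref{eq:thrdiff} yields $\thr(I,v_k)\ge\thr(I,v_{k+1})$.

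The only delicate point is the case of ties $v_k = v_{k+1}$. Then the interval $(v_k,v_{k+1}]$ is empty, and the argument above does not go through as written. I would handle this directly: when $v_k=v_{k+1}$, we have $\thr(I,v_{k+1}) \le v_n$, but more usefully, any threshold $\tau\in(v_{k'} ,v_{k+1}]$, where $k'<k$ is the last index with $v_{k'}<v_k$, again gives $\thr(I,\tau)=\thr(I,v_{k+1})$. This lets the same contradiction argument run with $v_k$ as the lower endpoint. Alternatively, one can assume distinct values without loss of generality by merging equal-valued Bernoullis.
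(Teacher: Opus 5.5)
Your proof is correct and is essentially the paper's argument. Both reduce via \eqref{eq:thrdiff} to $\thr(I,v_{k+1})\le v_k$, and both get a contradiction by exhibiting a threshold in $(v_k,v_{k+1}]$ that lies in the set defining $\eta(I)$. The paper first uses $v_{k+1}>\eta(I)$ to get $\thr(I,v_{k+1})<v_{k+1}$ and then takes $\tau=\thr(I,v_{k+1})$; your choice $\tau=\min\{v_{k+1},\thr(I,v_{k+1})\}$ skips that preliminary step.

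On ties, your first suggested fix does not work as stated: every $\tau$ in $(v_{k'},v_{k+1}]$ satisfies $\tau\le v_{k+1}=v_k$, so $\eta(I)\ge\tau$ cannot contradict $v_k>\eta(I)$. No fix is actually needed, because when $v_k=v_{k+1}$ the two thresholds coincide and the claim holds with equality; your alternative of merging equal-valued Bernoullis is also fine.
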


\begin{proof}
The definition of \(\eta(I)\) implies that  
\[
\thr(I,\tau) < \tau \qquad \text{for all } \tau > \eta(I).
\]
In particular, because \(v_{k+1} \ge v_k > \eta(I)\), we obtain  
\[
\thr(I,v_{k+1}) < v_{k+1}.
\]

Suppose for contradiction that \(\thr(I,v_{k+1}) = v_k + \epsilon\) for some \(\epsilon > 0\).  
Then since $v_k+\epsilon \in (v_k,v_{k+1}]$, we would have  
\[
\thr(I,v_k+\epsilon) = \thr(I,v_{k+1}) = v_k + \epsilon,
\]
which contradicts the fact that \(v_k + \epsilon > \eta(I)\) and hence  
\(\thr(I,v_{k+1}) \leq v_{k}\).

Now using identity~\eqref{eq:thrdiff}, we have  
\[
\thr(I,v_k) - \thr(I,v_{k+1})
    = p_k \bigl(v_k - \thr(I,v_{k+1})\bigr).
\]
Since \(\thr(I,v_{k+1}) \le v_k\), the right-hand side is nonnegative, and thus  
\[
\thr(I,v_k) \ge \thr(I,v_{k+1}),
\]
as claimed.
\end{proof}

The next claim extends the monotonicity result to the prediction-augmented setting. It demonstrates that for any base threshold above our chosen $\tau^*_\alpha$, the expected performance of the algorithm with predictions does not increase. We show that when the realized maximum is large enough to trigger the prediction, both thresholds result in the same effective threshold. When the prediction is not used, the logic reduces to the standard monotonicity established in Claim \ref{cl:basester}.

Recall that we denote $M(X) = \max\{X_1,\ldots,X_n\}$, or $M$ when the dependence on $X$ is implicit. 
For event $A$
we denote  $\thra(I,\tau |  A) = \mathbb{E}_{R \sim I}[\thra(R,\tau) | A]$.

\begin{claim}
\label{cl:mono_pred}
For any $k \in [n]$, if $v_k > \tau^*_\alpha$, then 
\[
\thra(I,v_k) \;\ge\; \thra(I,v_{k+1}).
\]
\end{claim}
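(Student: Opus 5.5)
Fix $k$ with $v_k>\tau^*_\alpha$; we may assume $k<n$ (otherwise set $v_{n+1}=\infty$, so $\thra(I,v_{n+1})=0$ and the claim is trivial). The plan is to condition on the realized maximum $M$ and compare the two effective thresholds $\max\{v_k,\alpha M\}$ and $\max\{v_{k+1},\alpha M\}$. Since the instance is sorted and each $X_i\in\{0,v_i\}$, the event $\{M=v_j\}$ means $X_j=v_j$ and $X_{j+1}=\dots=X_n=0$. The first $X_i$ reaching $\max\{\tau,\alpha M\}$ depends only on $X_1,\dots,X_j$, and the prefix is conditionally independent with its original law.

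\textbf{Step 1 (prediction dominates).} On realizations where $\alpha M\ge v_{k+1}$, both effective thresholds equal $\alpha M$. The realized values therefore coincide and contribute equally. The same holds when $M<v_k$, since then neither threshold is met and both yield $0$.

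\textbf{Step 2 (prediction inactive).} On realizations where $\alpha M\le v_k$ and $M\ge v_{k+1}$, the effective thresholds are $v_k$ and $v_{k+1}$. Conditional on $M=v_j$, the comparison reduces to $\thr(I^{(j-1)},\cdot)$ plus the fallback to $v_j$. Equivalently, by identity~\eqref{eq:thrdiff} restricted to $I^{(j)}$ with $X_j$ forced to $v_j$, the difference equals $p_k\bigl(v_k-\thr(\cdot,v_{k+1})\bigr)$ on the prefix. So it suffices to show $\thr(\text{prefix},v_{k+1})\le v_k$. This is the analogue of Claim~\ref{cl:basester}, which needs $v_k>\eta$ of the relevant prefix.

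Here the definition of $s_\alpha$ is used. The relevant prefixes are $I^{(j)}$ with $j\ge k+1$ and $\alpha v_j\le v_k$. If $j>s_\alpha$, maximality gives $\eta(I^{(j)})<\alpha v_j\le v_k$, and Claim~\ref{cl:basester} applied to $I^{(j)}$ finishes. If $j\le s_\alpha$, I would need $\eta(I^{(j)})\le\eta(I^{(s_\alpha)})=\tau^*_\alpha<v_k$. For this I would use monotonicity of $\eta$ in the prefix: adding variables only increases $\thr(\cdot,\tau)$ for every $\tau$ (the new variables have larger values and act as fallbacks), so $\eta$ is nondecreasing in $k$.

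\textbf{Step 3 (mixed case).} On realizations where $v_k<\alpha M<v_{k+1}$, the thresholds are $\alpha M$ versus $v_{k+1}$. Since no value lies strictly between $v_k$ and $v_{k+1}$, threshold $\alpha M$ is equivalent to $v_{k+1}$, and the contributions are equal.

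Summing the three cases gives the claim. I expect Step 2 to be the main obstacle. There one must check that the conditioning on $M=v_j$ (which forces the last nonzero variable) still allows the recursion~\eqref{eq:thrdiff} with the correct fallback term. It also requires the $\eta$-monotonicity bookkeeping across prefixes, and in particular handling $\thr(I^{(j)},v_{k+1})\le v_k$ via the contradiction argument of Claim~\ref{cl:basester}, which is best applied directly to $I^{(j)}$.
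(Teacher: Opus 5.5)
\paragraph{Verdict.} Your proposal has a genuine gap in Step 2: the inequality you reduce to is false there.

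\paragraph{Why Step 2 fails.} Condition on $M=v_j$ with $j\ge k+1$. Then $X_j$ is forced to $v_j\ge v_{k+1}$, so under base threshold $v_{k+1}$ the algorithm always accepts some value in $\{v_{k+1},\dots,v_j\}$. Hence the quantity you call $\thr(\text{prefix},v_{k+1})$ is at least $v_{k+1}>v_k$ (whenever $v_{k+1}>v_k$). The difference $p_k\bigl(v_k-\thr(\text{prefix},v_{k+1})\bigr)$ is therefore nonpositive. On every Step~2 event the higher base threshold $v_{k+1}$ does weakly \emph{better}, not worse.

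Appealing to Claim~\ref{cl:basester} on $I^{(j)}$ does not rescue this. That claim bounds $\thr(I^{(j)},v_{k+1})$ for the unforced prefix, where $X_j$ may be $0$. This is a different quantity from the conditional one you need. So the claim cannot be proved pointwise in the value of $M$. The ``main obstacle'' you flag at the end is not a bookkeeping issue: the conditional inequality goes the wrong way.

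\paragraph{The missing case.} Your partition omits the realizations with $\alpha M\le v_k$ and $v_k\le M<v_{k+1}$ (e.g.\ $M=v_k$). There base threshold $v_{k+1}$ yields $0$ while $v_k$ yields $M$. This is the only source of gain for the lower threshold, and it must be weighed against the losses incurred on the Step~2 events. Those realizations therefore have to be aggregated rather than treated one value of $M$ at a time.

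\paragraph{How to repair it.} The paper aggregates as follows.
\begin{itemize}
\item Let $t=\max\{i: v_i\le v_k/\alpha\}$. The event $\{M\le v_k/\alpha\}$ equals $\{X_{t+1}=\cdots=X_n=0\}$.
\item Under this event, $X_1,\dots,X_t$ keep their original independent law, and the prediction is inactive for both base thresholds. So the two conditional values are exactly $\thr(I^{(t)},v_k)$ and $\thr(I^{(t)},v_{k+1})$.
\item Your own $\eta$-bookkeeping, run with $t$ in place of $j$, gives $v_k>\eta(I^{(t)})$. If $t>s_\alpha$, maximality gives $\eta(I^{(t)})<\alpha v_t\le v_k$. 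Otherwise $\eta(I^{(t)})\le\eta(I^{(s_\alpha)})=\tau^*_\alpha<v_k$, using the prefix-monotonicity of $\eta$ you correctly justified.
\item Claim~\ref{cl:basester} applied to $I^{(t)}$ then finishes.
\end{itemize}

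\paragraph{What is correct.} Your treatment of the event $M>v_k/\alpha$ (Steps~1 and~3) is correct and matches the paper's first step.
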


\begin{proof}
First, observe that
\begin{equation}
    \thra(I, v_k \mid M > v_k/\alpha)
    \;=\;
    \thra(I, v_{k+1} \mid M > v_k/\alpha),
\end{equation}
since for all realizations with \(M > v_k/\alpha\) the two base thresholds induce the same effective threshold.

Indeed, if \(M \ge v_{k+1}/\alpha\), then in both cases \(\zeta = \alpha M\).
Otherwise, if
\[
M \in \left( \frac{v_k}{\alpha},\, \frac{v_{k+1}}{\alpha} \right],
\]
then under base threshold \(v_k\) we have \(\zeta = \alpha M \in (v_k,\, v_{k+1}]\), which is equivalent to using threshold \(v_{k+1}\).
Thus, in all cases with \(M > v_k/\alpha\), the algorithm behaves identically under base thresholds \(v_k\) and \(v_{k+1}\), and the conditional expectations coincide.

Second, let $t = \arg \max_i \{v_i \leq v_k/\alpha\}$. We note that $\thra(I,v_{k} | M \leq v_k/\alpha) = \thr(I^{(t)}, v_k)$
and 
$\thra(I,v_{k+1} | M \leq v_k/\alpha) = \thr(I^{(t)}, v_{k+1})$, that is the prediction doesn't have an effect for base thresholds $v_k$ and $v_{k+1}$.
Next, we will show that $v_k >   \eta(I^{(t)})$.
 If $\eta(I^{(t)})\geq v_k$, since we know $v_k \geq v_t \cdot \alpha$ by definition of $t$, we get that $\eta(I^{(t)})\geq v_t \cdot \alpha$ which implies $s_\alpha\geq t$ by definition of $s_\alpha$. Thus we obtain
$$ v_k > \tau^*_\alpha= \eta(I^{(s_\alpha)}) \geq \eta(I^{(t)}) \geq v_k,$$
 a contradiction. Thus we must have $v_k >   \eta(I^{(t)})$.

Therefore, $v_k > \eta(I^{(t)})$ and by Claim~\ref{cl:basester} we have 
\begin{equation}
    \thra(I,v_{k} | M \leq v_k/\alpha) = \thr(I^{(t)}, v_k)
    \geq \thr(I^{(t)},v_{k+1}) = \thra(I,v_{k+1} | M \leq v_k/\alpha)
\end{equation} 

By putting this together, we have
\begin{align*}
    \thra(I,v_k) &=  \thra(I,v_{k} | M > v_k/\alpha) \cdot \pr[ M > v_k/\alpha] + \thra(I,v_{k} | M \leq v_k/\alpha) \cdot \pr[ M \leq v_k/\alpha]
    \\ 
    &=  \thra(I,v_{k+1} | M > v_k/\alpha) \cdot \pr[ M > v_k/\alpha] + \thra(I,v_{k} | M \leq v_k/\alpha) \cdot \pr[ M \leq v_k/\alpha]\\
    &\geq  \thra(I,v_{k+1} | M > v_k/\alpha) \cdot \pr[ M > v_k/\alpha] + \thra(I,v_{k+1} | M \leq v_k/\alpha) \cdot \pr[ M \leq v_k/\alpha]\\
&=  \thra(I,v_{k+1})
\end{align*}

\end{proof}

\begin{proof}[Proof of Lemma~\ref{lem:mainknown}]
Consider a sorted, scaled Bernoulli instance $I$.  
If $\eta(I) \ge \alpha \cdot v_n$, then $\tau^*_\alpha = \eta(I)$, and by Corollary~\ref{cor:ester-bound},
\[
\frac{\thra(I,\tau^*_\alpha)}{\opt(I)}
    \;\ge\;
    \frac{\tau^*_\alpha}{2\tau^*_\alpha - (\tau^*_\alpha)^2 / v_n}
    \;=\;
    \frac{1}{2 - \tau^*_\alpha/v_n}
    \;\ge\;
    \frac{1}{2 - \alpha}.
\]

Otherwise, define a modified instance $I(x)$ in which we replace 
$X_n = (v_n,p_n)$ with $X_n' = (v_n, x)$.  
Note that both $\thra(I(x),\tau)$ and $\opt(I(x))$ are linear functions of $x$.

Let $p_n^*$ be the value of $x$ such that $\eta(I(p_n^*)) = \alpha \cdot v_n$.  

Since $\tau^*_\alpha \leq \eta(I(p_n)) = \eta(I) < \alpha \cdot v_n$, it follows that $p_n \in (0, p_n^*)$.

Because both $\thra(I(x),\tau)$ and $\opt(I(x))$ are affine functions of $x$,
the ratio $\frac{\thra(I(x),\tau)}{\opt(I(x))}$ attains its minimum at one of the
endpoints of the interval $[0,p_n^*]$. Indeed, the ratio of two affine functions with a
positive denominator is monotone on the interval, since its derivative has a
constant sign. Therefore,
\[
\frac{\thra(I(p_n),\tau^*_\alpha)}{\opt(I(p_n))}
\;\ge\;
\min\left\{
    \frac{\thra(I(0),\tau^*_\alpha)}{\opt(I(0))},
    \;
    \frac{\thra(I(p_n^*),\tau^*_\alpha)}{\opt(I(p_n^*))}
\right\}.
\]

If
\[
\frac{\thra(I(0),\tau^*_\alpha)}{\opt(I(0))}
    \;\le\;
    \frac{\thra(I(p_n^*), \tau^*_\alpha)}{\opt(I(p_n^*))},
\]
then removing variable $n$ (i.e., setting $x=0$) yields the smaller instance $I(0)$, and we continue recursively on that instance.

Let $t = \arg \min_i \{v_i \geq \alpha \cdot v_n\}$,
by definition $\thra(I(p_n^*),v_t) = \alpha \cdot v_n$
and $\frac{\thra(I(p_n^*),v_t)}{\opt(I(p_n^*))} = \frac{\thr(I(p_n^*),v_t)}{\opt(I(p_n^*))} \geq \frac{1}{2-\alpha}$ by Corollary ~\ref{cor:ester-bound}.  

We conclude the proof by showing
\[ \thra(I(p_n^*),\tau^*_\alpha) \geq \thra(I(p_n^*),v_t). \]

First, we have
\[ \thra(I(p_n^*),\tau^*_\alpha | M = v_n) = \thr(I(p_n^*),v_t | M = v_n) = \thra(I(p_n^*),v_t | M = v_n)\]

and $\thra(I(p_n^*),\tau^*_\alpha | M < v_n) = \thra(I^{(n-1)},\tau^*_\alpha) $
and  $\thra(I(p_n^*),v_t | M < v_n) =  \thra(I^{(n-1)},v_t)$.
Note that, $\tau^*_\alpha < v_t$ is the chosen threshold also for $I^{(n-1)}$ and therefore by Claim~\ref{cl:mono_pred}, we have 
$$\thra(I^{(n-1)},\tau^*_\alpha) \geq \thra(I^{(n-1)},v_t)$$.
\end{proof}

\section{\texorpdfstring{$\alpha$}{alpha}-Oblivious Algorithms}\label{sec:threshold oblivious}

In this section, unless stated otherwise, we assume for convenience that the instance $X = (X_1,\dots,X_n)$ is a tuple of continuous random variables. At the end of \Cref{sec:extension to general}, we comment on how a modification of our algorithm can be implemented when distributions contain point masses. We start by defining our mechanism by proposing a novel base threshold in Section \ref{sec:approx_mechanism} and state our main theorem, which gives its approximation guarantees. In the following sections we prove these approximation bounds by first addressing the case without prediction ($\alpha=0$) followed by the case $\alpha \in (0,1]$. In Section \ref{sec:oblivious lower bound} we show that these bounds are tight at the extremes, that is $\alpha \in \{0,1\}$.

\subsection{Approximation Mechanism}\label{sec:approx_mechanism}

First we define the family of base thresholds among which we will choose a suitable candidate. Recall that we denote $M(X) = \max\{X_1,\ldots,X_n\}$, or $M$ when the dependence on $X$ is implicit. 
\begin{definition}
For an instance $X = (X_1,\dots,X_n)$ and a constant $c \in (0,1)$, denote by $T_X(c) \in \R_+$ the unique threshold such that
\begin{equation}\label{def:oblivious threshold}
\E[M\cdot \mathbbm{1} [M \geq T_X(c)]] = c\cdot \opt.   
\end{equation}
\end{definition}
Throughout Section \ref{sec:threshold oblivious}, we interchangeably denote $T_c = T_X(c)$ when $X$ is implied from context, and in particular denote $\bar{\tau} = T_X(3/4)$.

Our main theorem shows that setting $c = 3/4$ is indeed a good choice: we analyze the prediction augmented threshold algorithm $\talga(X,\bar{\tau})$. Notably, the lower bound on its competitive ratio is strictly increasing in $\alpha$. In particular, for $\alpha = 0$ and $\alpha = 1$, we recover $f(0) \geq 1/2$ and $f(1) \geq 3/4$, which in \Cref{sec:oblivious lower bound} we show is a tight result.
Crucially, we remark that $\talga(X,\bar{\tau})$ can be implemented \emph{without} knowledge of $\alpha$.

\begin{theorem}\label{thm:oblivious_main}
Let $\alpha \in [0,1]$. Given any instance $X$, consider the prediction quality oblivious algorithm with prediction: $\talga(X,\bar{\tau})$. Then its competitive ratio is bounded by 
\[
\left(\frac{1}{2}+\frac{\alpha^3}{4}\right) \leq f(\alpha) \leq \left(\frac{1}{2}+\frac{\alpha}{4}\right).\]
\end{theorem}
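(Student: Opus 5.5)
We prove the two inequalities separately. The lower bound $f(\alpha)\ge \frac12+\frac{\alpha^3}{4}$ comes from a surplus-type decomposition of $\talga(X,\bar\tau)$ that uses the defining identity of $\bar\tau=T_X(3/4)$. The upper bound $f(\alpha)\le\frac12+\frac{\alpha}{4}$ comes from an explicit family of two-variable instances on which we can compute $\bar\tau$ exactly.

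\emph{The case $\alpha=0$.} Let $q=\Pr[M\ge\bar\tau]$. By the standard threshold analysis (the algorithm stops with value at least $\bar\tau$ with probability $q$; if it reaches $X_i$ it collects $(X_i-\bar\tau)^+$ with probability at least $1-q$),
\[
\talg(X,\bar\tau)\ \ge\ q\bar\tau+(1-q)\,\E[(M-\bar\tau)^+]\ =\ q\bar\tau+(1-q)\bigl(\tfrac34\opt-q\bar\tau\bigr)\ =\ q^2\bar\tau+\tfrac34(1-q)\opt .
\]
From the definition, $\E[M\mathbbm{1}[M<\bar\tau]]=\frac14\opt\le(1-q)\bar\tau$, so $\bar\tau\ge \opt/(4(1-q))$. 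Hence the ratio is at least $g(q)=\frac{q^2}{4(1-q)}+\frac{3(1-q)}{4}$. A direct check gives $g'(1/2)=0$ and shows $g$ is minimized at $q=1/2$, where $g(1/2)=\frac18+\frac38=\frac12$. This recovers $f(0)\ge 1/2$; continuity of $M$ is used to make $T_X(3/4)$ well defined.

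\emph{The case $\alpha>0$.} The effective threshold $\zeta=\max\{\bar\tau,\alpha M\}$ never exceeds $M$ on the event $\{M\ge\bar\tau\}$. On that event the algorithm always accepts some value $\ge\zeta$. We will write
\[
\talga(X,\bar\tau)\ \ge\ \E\bigl[\max\{\bar\tau,\alpha M\}\mathbbm{1}[M\ge\bar\tau]\bigr]+\text{(surplus term)}.
\]
The surplus term is taken only over realizations with $\alpha M<\bar\tau$, where the algorithm coincides with $\talg(X,\bar\tau)$. We split according to whether $M\ge\bar\tau/\alpha$. On that event the gain over the $\alpha=0$ bound is $\E[(\alpha M-\bar\tau)^+]$. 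We then lower-bound the total by a function of $q$, of $q'=\Pr[M\ge\bar\tau/\alpha]$, and of the mass $\E[M\mathbbm{1}[M\ge \bar\tau/\alpha]]$. We relate these quantities using $\E[M\mathbbm{1}[M\ge\bar\tau]]=\frac34\opt$ and the trivial bound $\talga\ge\frac34\alpha\opt$. Finally we minimize over these parameters.

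I expect this last optimization to be the main obstacle. The two regimes (mass concentrated just above $\bar\tau$ versus far above $\bar\tau/\alpha$) trade off against each other. My first attempt is to take a convex combination of the $q$-based bound $g(q)$ and the bound $\frac34\alpha$, with weights depending on $\alpha$. I then aim to show the result is at least $\frac12+\frac{\alpha^3}{4}$. I expect the cubic to arise from one factor of $\alpha$ in the scaling of the prediction, and two further factors from the probability that $M$ exceeds $\bar\tau/\alpha$ given the residual mass.

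\emph{Upper bound.} We use an instance in the spirit of Example~\ref{ex: mean-ex}: $X_1$ is (essentially) a deterministic value $1$, and $X_2$ takes a large value $V$ with small probability $\varepsilon$ and is $0$ otherwise. The parameters are tuned so that $\bar\tau$ lands just above $1$, so that $X_1$ is rejected unless $\alpha M$ or $\bar\tau$ permits it. The adversarial prediction $\tilde v=\alpha M$ then forces the algorithm either to reject $X_1$ or, when $X_2=V$, to obtain only what the threshold allows. We then compute $\talga$ and $\opt$ in closed form as functions of $(V,\varepsilon)$. Taking the limit in the free parameter should give ratio $\frac12+\frac{\alpha}{4}$, which would show $f(\alpha)\le\frac12+\frac{\alpha}{4}$ for the algorithm $\talga(X,\bar\tau)$.
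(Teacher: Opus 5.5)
Your $\alpha=0$ argument is correct and is exactly the paper's Lemma~\ref{lem:alpha=0}. Combined with $\talga(X,\bar\tau)\ge\alpha\,\E[M\mathbbm{1}[M\ge\bar\tau]]=\frac34\alpha\opt$, it also settles $\alpha=1$.

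\textbf{Lower bound for $\alpha\in(0,1)$.} Here you give a plan rather than a proof, and the one concrete step you propose cannot work. Both bounds you want to combine are valid: $\talga(X,\bar\tau)\ge\talg(X,\bar\tau)\ge g(q)\opt$, where the first inequality holds realization-wise because $\zeta\le M$. But any convex combination of $g(q)$ and $\frac34\alpha$ is at most $\max\{g(q),\frac34\alpha\}$. For instances with $q=\frac12$ this equals $\max\{\frac12,\frac34\alpha\}$. The target $\frac12+\frac{\alpha^3}{4}$ exceeds $\frac12$, and it exceeds $\frac34\alpha$ by $\frac{(1-\alpha)^2(2+\alpha)}{4}>0$. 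So the quantities you track cannot reach the target.

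What is missing are the two estimates the paper's proof rests on, applied after cutting at $\bar\tau/\alpha^3$ rather than $\bar\tau/\alpha$.
\begin{itemize}
\item On $A=\{\bar\tau\le M\le\bar\tau/\alpha^3\}$ the prediction is ignored, and one exploits that the relevant values are bounded by $\bar\tau/\alpha^3$. The Samuel-Cahn-type reduction to three-point Bernoulli instances and the optimization in Proposition~\ref{prop:reduced_optimization} give Lemma~\ref{lem:smallvalues_noprediction}. Applied with $\alpha^3$ in place of $\alpha$, plus monotonicity in $c$ (truncation gives $\bar\tau=T_Y(c)$ with $c\le\frac34$), it yields ratio $\frac{2+\alpha^3}{3}$ against $\E[M\mathbbm{1}_A]$.
\item On $B=\{M>\bar\tau/\alpha^3\}$ the algorithm gets $M$ unless another value $\ge\alpha M>\bar\tau/\alpha^2$ arrives earlier. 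The bound $\Pr[M\ge\bar\tau/\beta]\le\frac{3\beta}{1+3\beta}$ comes from the $3{:}1$ split defining $\bar\tau$ (Lemma~\ref{lem:T/alpha probability upper bound}, used with $\beta=\alpha^2$). Together with the order-statistics inequality of Lemma~\ref{lem:order_stats_bound}, it caps that probability at $\frac{3\alpha^2}{1+3\alpha^2}$. This gives ratio $\frac{1+3\alpha^3}{1+3\alpha^2}\ge\frac{2+\alpha^3}{3}$ against $\E[M\mathbbm{1}_B]$.
\end{itemize}
Since $\E[M\mathbbm{1}_{A\cup B}]=\frac34\opt$, the overall ratio is $\frac12+\frac{\alpha^3}{4}$. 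Your sketch contains neither the bounded-support improvement nor any control of the second-largest value, and these are exactly what the two regimes require.

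\textbf{Upper bound.} Your instance fails as described. Take $X_1\equiv1$, $X_2\in\{0,V\}$ and $\bar\tau\in(1,V]$. The definition of $\bar\tau$ forces $V\Pr[X_2=V]=\frac34\opt$. The algorithm always rejects $X_1$ (since $\zeta\ge\bar\tau>1$) and always accepts $X_2=V$ (since $\zeta\le M=V$), so $\talga(X,\bar\tau)=\frac34\opt$ for every $\alpha$. More generally, whenever $\alpha V$ exceeds every value $X_1$ can take, the algorithm collects exactly $M\mathbbm{1}[M\ge\bar\tau]$ on every realization, so the ratio is exactly $\frac34$. Making $V$ large is therefore the wrong direction, because it only lets the prediction help.

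The extremal instance is the bounded one behind Proposition~\ref{prop:reduced_optimization}. It has three features:
\begin{itemize}
\item a top value $1/\alpha$ with $\bar\tau$ just above $1$, so $\alpha M\le1<\bar\tau$ and the prediction never binds;
\item a value at $\bar\tau$ that stops the algorithm with probability about $\frac12$ before the top value;
\item mass just below $\bar\tau$ carrying $\frac14\opt$.
\end{itemize}
Concretely, the paper takes $X_0=1$ w.p.\ $\frac{1+\varepsilon}{1+2\varepsilon}$, $X_1=1+\varepsilon$ w.p.\ $\frac{1}{2(1+\varepsilon)}$, and $X_2=1/\alpha$ w.p.\ $\frac{\alpha}{1+\alpha}$. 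This gives $\bar\tau=1+\varepsilon$ and ratio $\frac12+\frac\alpha4+\frac{\varepsilon}{4(1+\varepsilon)}$ (Lemma~\ref{lem:tight_oblivious_instance}).
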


The main part of this section consists of proving the lower bound on $f(\alpha)$. 

We start by analyzing our $\alpha$-augmented threshold algorithm in the case of $\alpha=0$ or equivalently the setting without prediction in Section \ref{sec:alpha = 0}. Next, we consider $\alpha >0$. Our approach is to separately analyze the cases where the threshold is equal to the base threshold in Section \ref{sec:base threshold}, and where the threshold is equal to the prediction in Section \ref{sec:prediction_threshold}. Finally, we carefully combine the two results and prove the main theorem in Section \ref{sec:oblivious main theorem}.

\subsection{Robustness when \texorpdfstring{$\alpha = 0$}{alpha = 0}}\label{sec:alpha = 0}

We prove a stronger claim than what is strictly necessary for the proof of the main theorem when $\alpha = 0$, and parametrize the bound on $\talg(X,\bar{\tau})$ by the probability $\Pr[M \ge \bar{\tau}]$.

\begin{lemma}\label{lem:alpha=0}
Consider an instance $X$ such that $p = \Pr[M(X) \ge \bar{\tau}] \in (0,1)$. Then 
\begin{equation}\label{eq:alpha_0 ratio}
   \talg(X,\bar{\tau}) \ge \left(\frac{4p^2 - 6p + 3}{4(1-p)}\right)\cdot{\opt}.
\end{equation}
This ratio is exactly $1/2$ if and only if $p = 1/2$, and is strictly greater than $1/2$ for all other $p \in (0, 1)$.
\end{lemma}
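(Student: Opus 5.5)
The plan is the standard threshold decomposition of $\talg(X,\tau)$, applied at $\tau=\bar\tau$, with the defining equation~\eqref{def:oblivious threshold} of $\bar\tau$ used twice: once to control the surplus above the threshold, and once to give a lower bound on $\bar\tau$ itself. Write $p=\Pr[M\ge\bar\tau]$. Since the variables are continuous, $\Pr[M=\bar\tau]=0$.

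\emph{Step 1 (classical bound).} I will use the usual Samuel-Cahn-type bound
\[
\talg(X,\tau)\;\ge\;\tau\cdot\Pr[M\ge\tau]+\Pr[M<\tau]\cdot\E\bigl[(M-\tau)^+\bigr].
\]
Its proof is the usual one. The algorithm collects $\tau$ whenever it stops. In addition, for each $i$ it collects $(X_i-\tau)^+$ whenever all earlier variables were below $\tau$. That event has probability at least $\Pr[M<\tau]$ and is independent of $X_i$. Finally, $\sum_i(X_i-\tau)^+\ge (M-\tau)^+$.

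\emph{Step 2 (surplus term).} By the definition of $\bar\tau$,
\[
\E[(M-\bar\tau)^+]=\E[M\,\mathbbm{1}[M\ge\bar\tau]]-\bar\tau p=\tfrac34\opt-\bar\tau p.
\]
Substituting into Step~1 gives
\[
\talg(X,\bar\tau)\ \ge\ \bar\tau p+(1-p)\bigl(\tfrac34\opt-\bar\tau p\bigr)=\bar\tau p^2+\tfrac34(1-p)\opt.
\]
The coefficient of $\bar\tau$ is $p^2\ge 0$, so any lower bound on $\bar\tau$ can be plugged in.

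\emph{Step 3 (lower bound on $\bar\tau$).} By the definition of $\bar\tau$, $\E[M\,\mathbbm{1}[M<\bar\tau]]=\tfrac14\opt$. The left-hand side is at most $\bar\tau(1-p)$. Hence $\bar\tau\ge \opt/(4(1-p))$, which uses $p<1$. Plugging this into Step~2 gives
\[
\talg(X,\bar\tau)\ \ge\ \opt\cdot\frac{p^2+3(1-p)^2}{4(1-p)}=\frac{4p^2-6p+3}{4(1-p)}\opt,
\]
which is exactly~\eqref{eq:alpha_0 ratio}.

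\emph{Step 4 (when the ratio equals $1/2$).} Compute
\[
\frac{4p^2-6p+3}{4(1-p)}-\frac12=\frac{4p^2-4p+1}{4(1-p)}=\frac{(2p-1)^2}{4(1-p)}.
\]
For $p\in(0,1)$ this is nonnegative, and it vanishes if and only if $p=1/2$.

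The only delicate points are bookkeeping. The independence argument in Step~1 must be made carefully. The existence and uniqueness of $\bar\tau$, and the identity $\Pr[M\ge\bar\tau]=\Pr[M>\bar\tau]$, both rely on continuity of the distributions. I expect no real obstacle beyond noticing that Step~3's lower bound on $\bar\tau$ is the ingredient that converts the bound of Step~2 into a function of $p$ alone.
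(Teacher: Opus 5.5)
Your proof is correct and follows the paper's argument essentially step for step. You use the Samuel-Cahn-type bound (the paper cites it from the Correa et al.\ survey, while you prove it directly), substitute $\E[(M-\bar\tau)^+]=\tfrac34\opt-\bar\tau p$, and get the lower bound $\bar\tau\ge \opt/(4(1-p))$ from $\E[M\,\mathbbm{1}[M<\bar\tau]]=\tfrac14\opt$. The only difference is cosmetic: your identity $\frac{4p^2-6p+3}{4(1-p)}-\frac12=\frac{(2p-1)^2}{4(1-p)}$ replaces the paper's convexity argument and gives the ``if and only if'' claim more directly.
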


\begin{proof}
Since $X_i$ are continuous for $i\in [n]$, so is $M$. Denote by $F_M$ its cumulative distribution function and by $f_M$ its probability density function. 

By section 2 in \cite{CorreaSurvey2019}, we know that
\begin{equation}\label{eq:first alpha_0 ineq}
\talg(X,\bar{\tau}) \geq p\cdot \bar{\tau} + (1-p)\int_{\bar{\tau}}^\infty (1-F_{M}(x))dx.
\end{equation}

We expand the second term and apply the definition of $T$ to obtain
\begin{align}\label{eq:max_above_T}
    \begin{split}
    & \int_{\bar{\tau}}^{\infty} (1-F_M)dx = \int_{\bar{\tau}}^\infty \int_{x}^\infty f_{M}(t)dt dx = \int_{\bar{\tau}}^\infty \int_{\bar{\tau}}^t f_{M}(t) dx dt \\
    & = \int_{\bar{\tau}}^\infty xf_{M}(t) |_{\bar{\tau}}^t dt = \int_{\bar{\tau}}^\infty(t-\bar{\tau})f_{M}(t)dt = \frac{3}{4}\E[M] - \bar{\tau}\cdot \Pr[M > \bar{\tau}].
    \end{split}
\end{align}

By combining (\ref{eq:first alpha_0 ineq}) and (\ref{eq:max_above_T}), we get
\begin{equation}\label{eq:2nd lower bound alpha_0}
\talg(X,\bar{\tau}) \geq  (1-p)\cdot(3/4)\cdot \E[M] + p^2 \cdot \bar{\tau}.
\end{equation}

We express $\opt$ using the law of total expectation:
\begin{equation*}
    \opt = \E[M] = \mathbb{E}[M \mid M \ge \bar{\tau}]p + \mathbb{E}[M \mid M < \bar{\tau}](1-p).
\end{equation*}
By the definition of $\bar{\tau}$, we substitute the first term with $\frac{3}{4}\mathbb{E}[M]$:
\begin{equation*}
    \mathbb{E}[M] = \frac{3}{4}\mathbb{E}[M] + \mathbb{E}[M \mid M < \bar{\tau}](1-p) \quad \Leftrightarrow  \quad \frac{1}{4}\mathbb{E}[M] = \mathbb{E}[M \mid M < \bar{\tau}](1-p).
\end{equation*}

Since $\mathbb{E}[M \mid M < \bar{\tau}] \le \bar{\tau}$, we have:
\begin{equation*}
    \frac{1}{4}\mathbb{E}[M] \le \bar{\tau}(1-p) \implies \bar{\tau} \ge \frac{\mathbb{E}[M]}{4(1-p)}.
\end{equation*}

Substituting this lower bound for $\bar{\tau}$ into inequality (\ref{eq:2nd lower bound alpha_0}):
\begin{align*}
    \talg(X,\bar{\tau}) &\ge (1-p)\frac{3}{4}\mathbb{E}[M] + p^2 \left( \frac{\mathbb{E}[M]}{4(1-p)} \right) = \mathbb{E}[M] \left( \frac{3(1-p)}{4} + \frac{p^2}{4(1-p)} \right).
\end{align*}
We combine the terms over a common denominator $4(1-p)$:
\[
    \left( \frac{3(1-p)}{4} + \frac{p^2}{4(1-p)} \right) = \frac{3(1-p)^2 + p^2}{4(1-p)} = \frac{4p^2 - 6p + 3}{4(1-p)},
\]
which proves the first part of the claim, (\ref{eq:alpha_0 ratio}).
The remainder of the claim is implied by showing that
$g(p) = (4p^2 - 6p + 3)/(4(1-p))$
is strictly convex and takes its minimum at $p=1/2$. Moreover, $g(1/2)=1/2$. Lemma \ref{lem:g(p) func} shows these statements and thereby concludes the proof.
\end{proof}

\begin{lemma}\label{lem:g(p) func}
Let $g: (0,1) \to \mathbb{R}$ be defined by
$
g(p) = (4p^2 - 6p + 3)/({4(1-p)}).
$
The function $g$ is strictly convex on the interval $(0,1)$ and satisfies the inequality $g(p) \ge \frac{1}{2}$ for all $p \in (0,1)$.
\end{lemma}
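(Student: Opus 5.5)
The plan is to reduce $g$ to a simpler form by polynomial division in the variable $q = 1-p \in (0,1)$.

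\emph{Step 1: rewrite $g$.} Substituting $p = 1-q$ into the numerator gives
\[
4p^2-6p+3 = 4(1-q)^2 - 6(1-q) + 3 = 4q^2 - 2q + 1 .
\]
Hence
\[
g(p) = \frac{4q^2-2q+1}{4q} = q - \frac{1}{2} + \frac{1}{4q}.
\]
So on $(0,1)$ we have the explicit form $g(p) = (1-p) - \tfrac12 + \tfrac{1}{4(1-p)}$.

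\emph{Step 2: convexity.} The terms $(1-p) - \tfrac12$ are affine in $p$, so they do not affect convexity. For the last term $h(p) = \tfrac{1}{4(1-p)}$, we compute $h''(p) = \tfrac{2}{4(1-p)^3} = \tfrac{1}{2(1-p)^3}$, which is strictly positive on $(0,1)$. Therefore $g'' = h'' > 0$, and $g$ is strictly convex.

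\emph{Step 3: lower bound.} By AM--GM, $q + \tfrac{1}{4q} \ge 2\sqrt{\tfrac14} = 1$, with equality iff $q = \tfrac{1}{4q}$, i.e.\ $q = \tfrac12$. Thus $g(p) \ge 1 - \tfrac12 = \tfrac12$, with equality exactly at $p = \tfrac12$. Equivalently, $g(p) - \tfrac12 = \tfrac{(2q-1)^2}{4q} \ge 0$.

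This also gives the "if and only if" part used in Lemma~\ref{lem:alpha=0}: the unique minimizer is $p=1/2$, and $g(p) > 1/2$ elsewhere. The only step needing care is the substitution algebra in Step 1; once $g$ is written as $q - \tfrac12 + \tfrac{1}{4q}$, the remaining steps are immediate.
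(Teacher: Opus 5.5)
Your proof is correct and follows the paper's argument: the same substitution $q=1-p$, the same form $g=q-\tfrac12+\tfrac{1}{4q}$, and the same second derivative $\tfrac{1}{2(1-p)^3}$. The only difference is in the lower bound. The paper solves $g'(p)=0$ and uses strict convexity to conclude that the critical point is the global minimum, whereas your identity $g(p)-\tfrac12=\tfrac{(2q-1)^2}{4q}$ gives the bound, with equality only at $p=\tfrac12$, directly and without appealing to convexity.
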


\subsection{Base Threshold Bound}\label{sec:base threshold}

\medskip

In this section, we give a tight bound for the special case where $X_i$ are all bounded exactly such that the prediction is never used. Moreover, our claim holds more generally for any base threshold $T_X(c)$ where $c\in [0,1]$.

\begin{lemma}\label{lem:smallvalues_noprediction}
For a constant $c \in [0,1]$, and $\alpha\in (0,1]$, if $X_i \in [0, T_X(c)/\alpha]$ for all $i \in [n]$, then $\talga(X,T_X(c)) = \talg(X,T_X(c))$ obtains
\[
f_X(\alpha) \ge c \left( \frac{2 + (\alpha - 2)c + 2\sqrt{1-c}}{2 - c + 2\sqrt{1-c}} \right).
\]
\end{lemma}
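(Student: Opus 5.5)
The plan is to reduce the claim to a one-dimensional optimization over the tail probability $p=\Pr[M\ge T]$, where $T=T_X(c)$. The bound $\frac{2+(\alpha-2)c+2\sqrt{1-c}}{2-c+2\sqrt{1-c}}$ would then appear as the value of that optimization.

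\emph{Step 1: the prediction is inert.} Since every $X_i\le T/\alpha$, we have $\alpha\,\xm(R)\le T$ for every realization. Hence $\max\{T,\alpha \xm(R)\}=T$ and $\talga(X,T)=\talg(X,T)$, which is the equality in the statement.

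\emph{Step 2: bound the algorithm.} As in the proof of Lemma~\ref{lem:alpha=0}, I would combine \eqref{eq:first alpha_0 ineq} with the computation \eqref{eq:max_above_T}, now with $c$ in place of $3/4$. This gives
\[
\talg(X,T)\ \ge\ pT+(1-p)\bigl(c\,\opt-pT\bigr)\ =\ (1-p)c\,\opt+p^2T .
\]

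\emph{Step 3: lower-bound $T$.} There are two independent ways to do this.
\begin{itemize}
\item Upper cap: since $M\le T/\alpha$, the definition of $T$ gives $c\,\opt=\E[M\mathbbm{1}[M\ge T]]\le pT/\alpha$, so $T\ge \alpha c\,\opt/p$.
\item Lower part: $(1-c)\opt=\E[M\mathbbm{1}[M<T]]\le (1-p)T$, so $T\ge (1-c)\opt/(1-p)$, exactly as in Lemma~\ref{lem:alpha=0}.
\end{itemize}
Substituting the larger of the two into Step 2 gives
\[
f_X(\alpha)\ \ge\ \max\Bigl\{\,c\bigl(1-(1-\alpha)p\bigr),\ (1-p)c+\tfrac{(1-c)p^2}{1-p}\,\Bigr\},
\]
and it remains to minimize this maximum over $p\in(0,1)$. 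The second expression is convex in $p$, with minimizer at $(1-p)^2=1-c$, i.e.\ $p=1-\sqrt{1-c}$. The first is decreasing and linear in $p$.

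Writing $s=\sqrt{1-c}$, note that $2-c+2\sqrt{1-c}=(1+s)^2$ and $c=(1-s)(1+s)$. So the target equals $c\bigl(1-(1-\alpha)p^\star\bigr)$ with $p^\star=\frac{1-s}{1+s}$. This is the first expression evaluated at a point that does not depend on $\alpha$. One would therefore like to show that the worst case is at $p=p^\star$ and is governed by the cap constraint.

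\emph{Main obstacle.} Taking the maximum of the two separate bounds does not seem to be tight enough to land exactly at $p^\star$. At $\alpha=0$ the second expression at $p^\star$ exceeds the first. So I expect the two constraints have to be used jointly, not separately.

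What I would try first is to write the lower-tail contribution through $\E[M\mathbbm{1}[M<T]]$ and optimize over feasible distributions of $M$ with the two constraints coupled. Concretely: mass $1-p$ on $[0,T)$ with mean $(1-c)\opt/(1-p)$, and mass $p$ on $[T,T/\alpha]$ with total contribution $c\,\opt$. I would also sharpen the factor $(1-p)$ in \eqref{eq:first alpha_0 ineq}, for instance by reducing to two-point or Bernoulli-type extremal instances in the spirit of Lemma~\ref{lem:transform}. In those instances the probability that no earlier variable exceeds $T$ can be tracked exactly.

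After this refinement, I would verify by differentiation that the resulting expression in $p$ is minimized at $p^\star=(1-s)/(1+s)$, which yields the stated constant.
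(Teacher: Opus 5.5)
Your Steps 1--3 are valid, but they do not prove the lemma. The last part of the proposal is a plan rather than an argument.

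\textbf{Your bound falls short.} What you actually obtain is
\[
\min_{p}\max\Bigl\{c\bigl(1-(1-\alpha)p\bigr),\ (1-p)c+\tfrac{(1-c)p^2}{1-p}\Bigr\}.
\]
This reaches the target only at $\alpha=1$, where the first term is identically $c$, and in the limit $\alpha\to0$, where the minimum $2s(1-s)$ of the second term already equals the target. So your $\alpha=0$ diagnosis is not where the trouble lies. For intermediate $\alpha$ the bound is strictly weaker. At $c=3/4$, $\alpha=1/2$ the two terms cross at $p=3/5$ with common value $21/40$, while the claim is $5/8$.

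\textbf{Your first remedy cannot close this.} Once you use \eqref{eq:first alpha_0 ineq}, the certified value $(1-p)c\,\opt+p^2T$ depends on the instance only through $(p,T,\opt)$. A law of $M$ on $[0,T/\alpha]$ with tail mass $p$, contribution $c\,\opt$ above $T$ and $(1-c)\opt$ below $T$, exists (up to strictness of one inequality) precisely when
\[
\max\{\alpha c\,\opt/p,\ (1-c)\opt/(1-p)\}\le T\le c\,\opt/p .
\]
So coupling the constraints on the law of $M$ yields nothing beyond your two separate bounds. The slack is in \eqref{eq:first alpha_0 ineq} itself.

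To see this, take the extremal instance for these parameters: an atom just below $T$ with probability $\to1$, then $T$ w.p.\ $1/2$, then $2T$ w.p.\ $1/3$. The algorithm earns $\frac58\opt$, yet \eqref{eq:first alpha_0 ineq} only certifies $\frac{7}{12}\opt$. Here $\Pr[M\ge T]=2/3\neq p^\star=1/3$. In such three-atom instances one has $\talg=c\,\opt-(1-\alpha)\frac{T}{\alpha}x_1x_2$, where $x_1,x_2$ are the probabilities of the atoms at $T$ and $T/\alpha$. So $p^\star$ is the normalized weight of the event that the $T$-atom blocks the $T/\alpha$-atom, not a tail probability of $M$.

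\textbf{Your second remedy is the paper's route, but it is the whole proof.} Lemma~\ref{lem:transform} cannot simply be invoked, because $T=T_X(c)$ depends on the instance. Every transformation must not increase $\talg(\cdot,T)$ while keeping $T$ the $c$-threshold of the new instance and controlling $\opt$. The paper does this in four stages.
\begin{enumerate}
\item It applies Samuel-Cahn's two-point replacement, moving the mass of each $X_i$ on $[T,T/\alpha]$ to atoms at $T$ and $T/\alpha$. This preserves $\Pr[X_i\ge T]$ and $\E[X_i\mid X_i\ge T]$, so $\talg$ is unchanged. Her Lemma~2.1 says this can only increase $\E[M\mid M\ge T]$. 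The top atoms are then shrunk to undo that increase, which only lowers $\talg$.
\item All mass below $T$ is replaced by a single atom just below $T$ carrying the same contribution.
\item Each variable is split into Bernoullis, all $T$-atoms are placed before all $T/\alpha$-atoms (the worst order), and these are merged into one Bernoulli at $T$ and one at $T/\alpha$.
\item What remains is the constrained three-variable problem of Proposition~\ref{prop:reduced_optimization} with $k=1/\alpha$, $t=c/(1-c)$. There the ratio equals $\frac{t}{t+1}\bigl(1-(k-1)\frac{x_1x_2}{kx_2+x_1(1-x_2)}\bigr)$. The last fraction is nondecreasing in $x_1$ and $x_2$, so the worst case has $x_0=1$ and $x_2=t/(k+t+k\sqrt{t+1})$. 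This yields exactly the stated constant.
\end{enumerate}
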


\begin{proof}
For simplicity of notation, though we will define several instances, denote the base threshold of the original instance by $T_c = T_X(c)$. 
The proof will proceed by taking a number of steps to construct an instance $H$ that satisfies several properties.
\begin{enumerate}
    \item $T_X(c) = T_H(c)$
    \item $\E[M(X)] = \E[M(H)]$
    \item $\talg(X,T_c) \geq \talg(H,T_c)$
    \item $H$ is a sequence of Bernoulli distributions, increasing in weight.
\end{enumerate}
First, we define $Y_i$ similarly as in \citep{samuel1988prophet}, that is for $i\in [n]$,
\[
Y_i = 
\begin{cases}
(1/\alpha)\cdot T_c &\text{ w.p } p_i = \frac{1}{1-T_c} \int_{T_c}^{(1/\alpha)T_c} (x - T_c) dF_{X_i}\\
T_c &\text{ w.p } q_i = \frac{1}{1-T_c} \int_{T_c}^{(1/\alpha)T_c} (1 - x) dF_{X_i}\\
0 &\text{ otherwise.}
\end{cases}
\]
where $F_{X_i}$ is the cumulative distribution function of $X_i$.

By section 2 in \citep{samuel1988prophet}, this construction ensures that the conditional expectations and probabilities are preserved, namely 
\[\mathbb{E}[X_i | X_i \ge T_c] = \mathbb{E}[Y_i | Y_i \ge T_c]\text{ and }\Pr[X_i \ge T_c] = \Pr[Y_i \ge T_c] \quad \forall i\in [n].\]
From these properties, it immediately follows that  
\begin{equation}
\talg(X,T_X(c))  = \talg(Y,T_X(c))  \label{eq:samuelcahn eq}.
\end{equation}
However, we don't have $T_X(c) = T_Y(c)$. We fix this by further transforming $Y$ in two steps, first by addressing values greater or equal to $T$. By Lemma 2.1 in \citep{samuel1988prophet}, $\mathbb{E}[M(X) | M(X) \geq T_c] \le \mathbb{E}[M(Y) | M(Y) \geq T_c]$.
Thus we may transform $Y$ once more to an instance $Z$ such that 
$\mathbb{E}[M(X) | M(X) \geq T_c] = \mathbb{E}[M(Z) | M(Z) \geq T_c]$.
One simple way of doing this is to note there exists a unique $\beta_1 \leq 1$ such that if we define
\[
Z_i = 
\begin{cases}
(1/\alpha)\cdot T_c &\text{ w.p } \beta_1 p_i \\
T_c &\text{ w.p } q_i \\
0 &\text{ otherwise}
\end{cases}
\]

then 
\begin{equation}
\mathbb{E}[M(X) | M(X) \geq T_c] = \mathbb{E}[M(Z) | M(Z) \geq T_c].\label{eq:XZ_equality}
\end{equation}

Clearly this transform decreases the algorithm value as well, hence by (\ref{eq:samuelcahn eq}), we have
\begin{equation}
\talg(X,T_c) \geq \talg(Z,T_c) \label{eq:XZ_alg_ineq}.
\end{equation}
Second, we need to add back to $Z$ some weight on values strictly smaller than $T_c$.
\begin{equation*}
 \Pr[X_i \geq T_c] \geq \Pr[Z_i \geq T_c] \quad \forall i \in [n] 
\Rightarrow \Pr[M(X) \geq T_c] \geq \Pr[M(Z) \geq T_c].   
\end{equation*}
Define $\beta_2\leq 1$ by
\[
\beta_2 = \frac{\Pr[M(X) < T_c]}{\Pr[M(Z) < T_c]}. 
\]
We now aim to define a variable $Z_0$ so that $T_Z(c)=T_c$.

By continuity, there exists some $\varepsilon>0$ such that $\E[M(X) | M(X) < T_c] \leq T_c-\varepsilon$. 
Define
\[
Z_0 = 
T_c-\varepsilon \text{ w.p } z = \frac{\beta_2\cdot\E[M(X) | M(X) < T_c]}{T_c-\varepsilon}.
\]
and update notation $Z = \{Z_0,Z_1,\dots,Z_n\}$.

Since $\beta_2\leq 1$ and $\E[M(X) | M(X) < T_c] \leq T_c- \varepsilon$, $Z_0$ is well defined. 

Then we get 
\begin{align*}
& \E[M(Z)|M(Z)<T_c]\cdot \Pr[M(Z)<T_c] \\
= &  (T_c-\varepsilon)\cdot\frac{\beta_2\cdot\E[M(X) | M(X) < T_c]}{T_c-\varepsilon} \cdot (1/\beta_2)\Pr[M(X) <T_c] \\
= & \E[M(X) | M(X) < T_c] \cdot \Pr[M(X) <T_c]
\end{align*}
and conclude that $T_Z(c) = T_c$.
Moreover, by (\ref{eq:XZ_equality}) and (\ref{eq:XZ_alg_ineq}), it follows that 
\begin{equation}\label{eq:XtoZ}
    f_X(\alpha) \geq f_Z(\alpha).
\end{equation}

We further reduce the instance by a transformation  into Bernoulli random variables. 
For $i \in [n]$ let
\[
V_i^1 = T_c \ \quad w.p \quad q_i \quad \text{and} \quad V_i^2 = T_c/\alpha \ \quad w.p \quad \beta_1p_i/(1-q_i).
\]

This is well defined since if $\beta_1p_i/(1-q_i)>1 \Leftrightarrow q_i + \beta_1p_i >1$, contradicting that $Z_i$ is well defined. 
We compute $\talg(V,T_c)$. The probability of stopping at $V_i^1$ or $V_i^2$ given that the algorithm has not stopped at any index $<i$ is: $q_i + (1-q_i)\cdot (\beta_1pi/(1-q_i)) = q_i + \beta_1p_i$. Similarly, the expected value of the algorithm at $V_i^1$ and $V_i^2$ conditioned on reaching $i$ is $q_i\cdot T_c + \beta_1p_i \cdot (T_c/\alpha)$. 
Following these calculations, we consider the threshold algorithm on the instance 
\[
V = \{Z_0,V_1^1, V_1^2, V_2^1, V_2^2,\ldots, V_n^1, V_n^2\}.
\]
We get
\begin{equation}
    \talg(Z,T_c)  = \talg(V,T_c).
\end{equation}
Also, it is clear from the definition of $V$ and $Z$ that
\begin{equation}
    \E[M(Z)] \leq \E[M(V)].
\end{equation}
As before, we instead force the expected maxima to remain the same and the value of the algorithm to decrease. 
Indeed, there exists $\beta_3 \leq 1$ such that if for $i \in [n]$ we  let
\[
W_i^1 = T_c \ \quad w.p \quad a_i = q_i \quad \text{and} \quad W_i^2 = T_c/\alpha \ \quad w.p \quad b_i = \beta_3\beta_1p_i/(1-q_i) 
\]
and define 
\[
W = \{Z_0,W_1^1, W_1^2, W_2^1, W_2^2,\ldots, W_n^1, W_n^2\}
\]
then we get
\begin{equation}
\talg(Z,T_c)  \geq \talg(W,T_c) \quad \text{and} \quad  \E[M(Z)] = \E[M(W)].
\end{equation}  
This also implies $T_W(c) = T_c$ and hence 
\begin{equation}\label{eq:ZtoW}
f_Z(\alpha) \geq f_W(\alpha).
\end{equation}

We now reduce the instance to only $3$ Bernoulli variables. 
Let
\[
H_1 = T_c  \quad w.p \quad r_1 = 1- \prod_{i=1}^n (1-a_i) \quad \text{and} \quad H_2 = T_c/\alpha  \quad w.p \quad r_2 = 1- \prod_{i=1}^n (1-b_i) 
\]
and denote $H = \{Z_0,H_1,H_2\}$. 
Then 
\[\E[M(H)|M \geq T_c]\cdot \Pr[M(H)\geq T_c] = r_2\cdot T_c/\alpha + r_1(1-r_2)\cdot T_c = \E[M(W)|M \geq T_c]\cdot \Pr[M(W)\geq T_c].
\]
Thus $T_H(c) = T_c$ and we can compute
\[\talg(H,T_c) = \talg(Z_0,W_1^1,\ldots,W_n^1,W_1^2,\dots,W_n^2,T_c) \leq \talg(W,T_c)\]
which allows us to conclude  
\[f_H(\alpha)\leq f_W(\alpha)\]
and hence $f_H(\alpha) \leq f_X(\alpha)$ by the previous reductions, (\ref{eq:XtoZ}) and (\ref{eq:ZtoW}).

We aim to obtain a lower bound on $f_H(\alpha)$. For fixed $T_c$ and $\alpha$, we may formulate this as an optimization problem. Given any $3$ Bernoulli random variables of weights $T_c-\varepsilon, T_c, T_c/\alpha$ with probabilities $x_0,x_1,x_2$ respectively, in order for the definition of the threshold to be satisfied, we must have
\[
T_c/\alpha \cdot x_2 + T_c\cdot(1-x_2)x_1 = (c/(1-c))\cdot(T_c-\varepsilon)\cdot(1-x_2)(1-x_1)x_0.
\]
The value of the threshold algorithm is given by
\[
T_c\cdot x_1 + (T_c/\alpha)\cdot x_2(1-x_1).
\]
Since we let $\varepsilon \rightarrow 0$, we may omit it, as well as normalize $T_c=1$.
This corresponds to the optimization problem of Proposition \ref{prop:reduced_optimization} with $k = 1/\alpha$ and $t=c/(1-c)$, which proves our claim. 
\end{proof}

The following proposition is the optimization problem that corresponds to analyzing the worst case of $f_H(\alpha)$ over all simple bounded instances $H$ of 3 weighted Bernoulli random variables, as defined precisely in the reduction of Lemma \ref{lem:smallvalues_noprediction}.

\begin{proposition}\label{prop:reduced_optimization}
Let $k>1$ and $t>0$ be constants. For $x_0, x_1, x_2 \in [0,1]$, the minimum value of the function
\[ f(x_0, x_1, x_2) = \frac{x_1 + k x_2 (1-x_1)}{k x_2 + (1-x_2)(1-(1-x_1)(1-x_0))} \]
subject to the constraint
\[ t \cdot x_0(1-x_1)(1-x_2) = k x_2 + (1-x_2)x_1 \]
is
\[ f_{min} = \frac{t}{t+1} \left( \frac{2k + t + 2k\sqrt{t+1}}{k(2+t+2\sqrt{t+1})} \right). \]
\end{proposition}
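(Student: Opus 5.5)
The plan is to eliminate $x_0$ with the constraint, which collapses the denominator, and then to reduce to a one-variable maximization. Write $A = kx_2 + (1-x_2)x_1$ for the right-hand side of the constraint. Since
\[
1-(1-x_1)(1-x_0) = x_1 + (1-x_1)x_0,
\]
the denominator equals
\[
kx_2 + (1-x_2)x_1 + (1-x_2)(1-x_1)x_0 = A + \tfrac{A}{t}
\]
by the constraint. Hence, on the feasible set,
\[
f = \frac{t}{t+1}\cdot \frac{x_1 + kx_2(1-x_1)}{x_1 + kx_2 - x_1x_2}.
\]
The only remaining role of $x_0$ is feasibility. Setting $x_0 = A/(t(1-x_1)(1-x_2)) \le 1$, the problem becomes the minimization of
\[
g(x_1,x_2) = \frac{x_1+kx_2-kx_1x_2}{x_1+kx_2-x_1x_2}
\]
over the region
\[
R = \{\,x_1 + kx_2 - x_1x_2 \le t(1-x_1)(1-x_2)\,\}.
\]
The constant $\frac{t}{t+1}$ already matches the prefactor in $f_{\min}$.

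Next I rewrite
\[
g = 1 - (k-1)\,h, \qquad h = \frac{x_1x_2}{x_1+kx_2-x_1x_2},
\]
so it suffices to maximize $h$ over $R$. For $x_1,x_2>0$,
\[
\frac{1}{h} = \frac{1}{x_2} + \frac{k}{x_1} - 1,
\]
which is strictly decreasing in each variable. So $h$ is increasing in both coordinates. The left side of the constraint increases and the right side decreases in each variable, so $R$ is a down-closed region. Therefore the maximum of $h$ lies on the boundary curve where the constraint is tight, i.e.\ $x_0 = 1$. On that curve I solve the (linear in $x_2$) constraint:
\[
x_2 = \frac{t-(t+1)x_1}{k+t-(t+1)x_1}, \qquad x_1 \in \Bigl[0,\tfrac{t}{t+1}\Bigr].
\]

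Substituting this into $1/h = 1/x_2 + k/x_1 - 1$ gives a single-variable function to minimize. With $s = t - (t+1)x_1$, one has $1/x_2 = 1 + k/s$, so
\[
\frac{1}{h} = \frac{k}{s} + \frac{k(t+1)}{t-s}, \qquad s \in (0,t).
\]
The first-order condition is
\[
\frac{1}{s^2} = \frac{t+1}{(t-s)^2},
\]
giving
\[
s = \frac{t}{1+\sqrt{t+1}},
\]
and convexity of both terms confirms this is the minimizer. Plugging back in, with $\sqrt{t+1}$ appearing naturally, yields
\[
\min \frac{1}{h} = \frac{k\,(1+\sqrt{t+1})^2}{t},
\]
hence
\[
\max h = \frac{t}{k(2+t+2\sqrt{t+1})}.
\]
Then
\[
g_{\min} = 1 - \frac{(k-1)\,t}{k(2+t+2\sqrt{t+1})} = \frac{2k + t + 2k\sqrt{t+1}}{k(2+t+2\sqrt{t+1})},
\]
which after multiplying by $\frac{t}{t+1}$ is the claimed $f_{\min}$.

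The main obstacle is mostly bookkeeping. The monotonicity and down-closedness argument justifying restriction to the tight boundary needs some care at the degenerate edges ($x_1=0$, $x_2=0$, or $x_1=1$, where the constraint forces $x_2=0$); there $h=0$, which is not the maximum. The final algebraic simplification is the one place where a slip is likely, so I would verify it by checking a special case, e.g.\ $t=3$, where $\sqrt{t+1}=2$.
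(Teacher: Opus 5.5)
Your proposal is correct and follows the paper's proof step for step. Like the paper, you eliminate $x_0$ via the constraint to get $f=\frac{t}{t+1}\bigl(1-(k-1)h\bigr)$ with $h=x_1x_2/A$, use monotonicity of $h$ to push to the tight boundary $x_0=1$, and minimize $1/h=1/x_2+k/x_1-1$ along that curve. The differences are minor improvements: you parametrize the boundary by $s=t-(t+1)x_1$ (the paper solves for $x_1$ in terms of $x_2$), which gives the cleaner form $k/s+k(t+1)/(t-s)$, and you justify down-closedness of the feasible region and use convexity to certify the critical point as the global minimizer, both of which the paper leaves implicit.
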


Our application of Lemma \ref{lem:smallvalues_noprediction} will actually not be for precisely $c=3/4$, but for some unknown $c\in [0,3/4]$. Therefore, we need to understand the behavior of the approximation ratio obtained, in function of $c$.

\begin{lemma}\label{lem:smallval fn decreasing}
For any fixed constant $\alpha \in (0, 1)$, the function $g:[0, 1] \to \mathbb{R}$ defined by
\[ g(c) = \frac{2 + (\alpha - 2)c + 2\sqrt{1-c}}{2 - c + 2\sqrt{1-c}} \]
is strictly decreasing on its domain.
\end{lemma}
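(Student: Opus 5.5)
The natural route is to substitute $s=\sqrt{1-c}\in[0,1]$, which removes the square root. Then $c=1-s^2$, and $s$ is a strictly decreasing bijection from $c\in[0,1]$ onto $s\in[0,1]$. So it suffices to show that the resulting function of $s$ is strictly increasing on $[0,1]$.

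First rewrite numerator and denominator in terms of $s$:
\[
N(s)=2+(\alpha-2)(1-s^2)+2s=\alpha+2s+(2-\alpha)s^2,
\qquad
D(s)=2-(1-s^2)+2s=(1+s)^2 .
\]
Thus $h(s)=g(1-s^2)=\frac{\alpha+2s+(2-\alpha)s^2}{(1+s)^2}$.

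Next simplify by writing $N$ in powers of $(1+s)$. With $u=1+s$, so that $s=u-1$,
\[
N=\alpha+2(u-1)+(2-\alpha)(u-1)^2=(2-\alpha)u^2-2(1-\alpha)u+\alpha .
\]
This gives
\[
h(s)=(2-\alpha)-\frac{2(1-\alpha)}{u}+\frac{\alpha}{u^2},
\qquad u=1+s\in[1,2].
\]
(A quick check: at $s=0$ this gives $h=\alpha$, matching $g(1)=\alpha$; at $s=1$ it gives $h=1$, matching $g(0)=1$.)

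Now differentiate in $u$:
\[
\frac{dh}{du}=\frac{2(1-\alpha)}{u^2}-\frac{2\alpha}{u^3}=\frac{2\bigl((1-\alpha)u-\alpha\bigr)}{u^3}.
\]
This is positive exactly when $u>\alpha/(1-\alpha)$.

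The main obstacle is here. For $\alpha\le 1/2$ we have $\alpha/(1-\alpha)\le 1\le u$, so $h$ is strictly increasing on $[1,2]$ (the derivative can vanish only at the endpoint $u=1$ when $\alpha=1/2$). For $\alpha>1/2$, however, the derivative is negative on $[1,\alpha/(1-\alpha))$. In that range $h$ first decreases, so $g$ would not be strictly decreasing near $c=1$.

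I would therefore double-check the algebra first, then a numeric instance. With $\alpha=0.9$ and $c$ near $1$: $g(1)=0.9$, while $g(0.99)=(2-1.1\cdot0.99+0.2)/(1.01+0.2)=1.111/1.21\approx 0.918$. This exceeds $g(1)$, which suggests the claim fails on part of the domain.

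So the plan is as follows. Prove strict monotonicity via the $u$-representation for $\alpha\le 1/2$. For general $\alpha\in(0,1)$, prove it on the restricted range $c\le 3/4$, i.e. $s\ge 1/2$, $u\ge 3/2$, where $dh/du>0$ requires $\alpha<3/5$. If the application needs all $\alpha$, fall back on the weaker statement that $g(c)\ge g(3/4)$ for $c\in[0,3/4]$. This follows since $h$ is unimodal (decreasing then increasing in $u$), so on $u\in[3/2,2]$ its minimum is at $u=3/2$ or at the interior critical point. The latter must then be compared against the value used downstream.
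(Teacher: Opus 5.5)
\textbf{Verdict: there is a genuine error, and it derails your plan.} Your substitution $s=\sqrt{1-c}$ is correct, and so are $N(s)=\alpha+2s+(2-\alpha)s^2$ and $D(s)=(1+s)^2$; this is exactly the paper's rewrite. The mistake is in the re-expansion in $u=1+s$. The constant term there is $\alpha-2+(2-\alpha)=0$, not $\alpha$. The correct expansion is
\[
N=(2-\alpha)u^2-2(1-\alpha)u,\qquad h=(2-\alpha)-\frac{2(1-\alpha)}{u},\qquad \frac{dh}{du}=\frac{2(1-\alpha)}{u^2}>0 ,
\]
which holds for every $\alpha\in(0,1)$. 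So $h$ is strictly increasing in $s\in[0,1]$. Composing with the strictly decreasing bijection $c\mapsto\sqrt{1-c}$, $g$ is strictly decreasing on all of $[0,1]$, and there is no threshold at $\alpha=1/2$. In the paper's variable $\sqrt{1-c}$, the same computation gives derivative $2(1-\alpha)/(\sqrt{1-c}+1)^2$.

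Your endpoint check should have caught this. Your formula gives $h=2\alpha$ at $s=0$ and $h=1+\alpha/4$ at $s=1$, not the values $\alpha$ and $1$ you reported.

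Your numerical ``counterexample'' is also misread. You found $g(0.99)\approx 0.918>0.9=g(1)$ with $0.99<1$, which is exactly what strict decrease predicts, so it supports the lemma. Consequently the restriction to $\alpha<3/5$ or $c\le 3/4$, and the unimodality fallback, are unnecessary. As written, the proposal does not prove the statement; it argues against a true statement.

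With the constant term fixed, your route is essentially the paper's. Your framing (strictly increasing in $s$, composed with a strictly decreasing bijection) is in fact slightly cleaner than the paper's chain-rule argument. It avoids the singularity of $du/dc=-1/(2\sqrt{1-c})$ at $c=1$, which the paper has to patch with a continuity remark.
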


Lastly, we prove the upper bound part of Theorem \ref{thm:oblivious_main}, which is based on the the worst case $n=3$ instance analyzed in \Cref{prop:reduced_optimization}. We defer its proof to \Cref{app:deferred_proofs}.

\begin{lemma}\label{lem:tight_oblivious_instance} 
For any $\alpha \in [0,1]$ and $\varepsilon>0$, there exists an instance $X$ such that
\[\talga(X,\bar{\tau}) \leq \left(\left(\frac{1}{2}+\frac{\alpha}{4}\right) + \varepsilon\right)\cdot \opt.\]
\end{lemma}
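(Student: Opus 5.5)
The plan is to instantiate the three-variable worst case that came out of the reduction in Lemma~\ref{lem:smallvalues_noprediction}, with $c=3/4$, so that the prediction is never binding. First fix $\alpha\in(0,1)$ and set $k=1/\alpha$, $t=c/(1-c)=3$. Plugging into the closed form of Proposition~\ref{prop:reduced_optimization} gives
\[
f_{\min}=\frac{3}{4}\cdot\frac{2k+3+4k}{9k}=\frac{3}{4}\left(\frac{2}{3}+\frac{1}{3k}\right)=\frac{1}{2}+\frac{\alpha}{4},
\]
which is exactly the target bound. This suggests building the instance from a (near-)minimizer $(x_0,x_1,x_2)$ of that proposition.

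The instance $X$ consists of three (or, for continuity, slightly smoothed) Bernoulli variables, in arrival order:
\begin{itemize}
\item $H_1$, equal to $1$ w.p.\ $x_1$;
\item $H_2$, equal to $1/\alpha$ w.p.\ $x_2$;
\item $Z_0$, equal to $1-\varepsilon'$ w.p.\ $x_0$.
\end{itemize}
The position of $Z_0$ is irrelevant since it lies below the threshold. The constraint of Proposition~\ref{prop:reduced_optimization} with $t=3$ is precisely the statement
\[
\E[M\,\mathbbm 1[M\ge 1]]=3\,\E[M\,\mathbbm 1[M<1]]
\]
with $\varepsilon'\to 0$. Hence $\E[M\mathbbm 1[M\ge 1]]=\tfrac34\opt$, which gives $\bar\tau=T_X(3/4)=1$.

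The key observation is that every realization satisfies $M\le 1/\alpha$, so $\alpha M\le 1=\bar\tau$. The effective threshold is therefore always $\bar\tau$, and $\talga(X,\bar\tau)=\talg(X,1)=x_1+k x_2(1-x_1)$, whose ratio to $\opt$ is the objective $f(x_0,x_1,x_2)$ of the proposition. Choosing $(x_0,x_1,x_2)$ within $\varepsilon/2$ of the infimum and $\varepsilon'$ small then gives ratio at most $\tfrac12+\tfrac\alpha4+\varepsilon$.

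To respect the paper's standing continuity assumption, each point mass can be replaced by a uniform distribution on a tiny interval. This perturbs $\bar\tau$, $\opt$ and the algorithm's value continuously, which costs at most another $\varepsilon$-fraction. The same continuity argument handles the technical point that $\bar\tau$ must be uniquely defined.

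The endpoints need separate treatment, and this is the main obstacle I expect. For $\alpha=0$ the prediction is vacuous, so I will take the $\alpha\to 0$ limit of the above family (equivalently $k\to\infty$), which yields the classical ratio-$1/2$ behaviour. Alternatively, I can use an instance with $p=\Pr[M\ge\bar\tau]=1/2$ where Lemma~\ref{lem:alpha=0} is tight. For $\alpha=1$ the value $1/\alpha$ coincides with $\bar\tau$, and a prediction equal to $M$ would make the algorithm select the maximum. So the bound $3/4$ must be read via the worst-case semantics of $\talga$ with ties broken adversarially, or obtained as a limit $\alpha\uparrow 1$. I would check carefully which interpretation the paper's definitions support, and otherwise restrict the construction to $\alpha<1$ and use monotonicity or continuity in $\alpha$.

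Beyond that, the other delicate step is verifying that the infimum in Proposition~\ref{prop:reduced_optimization} is (approximately) attained by feasible probabilities in $[0,1]$. Here I would solve the stationarity conditions explicitly, expecting an interior $x_2$ and $x_1$ determined by $\sqrt{t+1}=2$.
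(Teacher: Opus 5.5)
Your construction for $\alpha\in(0,1)$ is the paper's, and the step you flag as delicate closes explicitly. For $k=1/\alpha$, $t=3$, the minimizer of Proposition~\ref{prop:reduced_optimization} is $x_0=1$, $x_1=\tfrac12$, $x_2=\tfrac{\alpha}{1+\alpha}$: both sides of the constraint equal $\tfrac{3}{2(1+\alpha)}$, $\opt=\tfrac{2}{1+\alpha}$, and the algorithm earns $\tfrac{2+\alpha}{2(1+\alpha)}$. The paper uses a perturbation of exactly this point, with values $1,\,1+\varepsilon,\,1/\alpha$ and probabilities $\tfrac{1+\varepsilon}{1+2\varepsilon},\,\tfrac{1}{2(1+\varepsilon)},\,\tfrac{\alpha}{1+\alpha}$. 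This makes the defining equation of $T_X(3/4)$ hold exactly on the discrete instance, so no smoothing is needed. In your parametrization, the optimum sits at $x_0=1$, so the shift of $Z_0$ to $1-\varepsilon'$ cannot be absorbed by raising $x_0$. Lower $x_1$ instead: for example, $x_0=1$ and $x_1=\tfrac{2-3\varepsilon'}{4-3\varepsilon'}$ satisfy the constraint exactly.

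The genuine gap is the endpoint $\alpha=1$, which you leave open with fallbacks that do not work. ``Monotonicity in $\alpha$'' runs the wrong way. By the argument of Claim~\ref{claim:threshold_equivalence}, $\alpha' M\in[\alpha M,M]$ for $\alpha\le\alpha'\le 1$, so $\talga(X,\tau)$ is non-decreasing in $\alpha$. An upper bound proved for $\alpha<1$ therefore gives nothing at $\alpha=1$. Yet the case is trivial and needs no tie-breaking convention. At $\alpha=1$ the effective threshold is $\max\{\bar\tau,M\}$, so the algorithm collects exactly $M$ on $\{M\ge\bar\tau\}$ and $0$ otherwise. Hence $\talg_1(X,\bar\tau)=\E[M\,\mathbbm{1}[M\ge\bar\tau]]=\tfrac34\opt$ for \emph{every} instance; the paper exhibits $Y_0=1$ w.p.\ $2/3$, $Y_1=2$ w.p.\ $1/2$. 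Your worry that ``the algorithm selects the maximum'' overlooks that it gets nothing on $\{M<\bar\tau\}$, which carries exactly a quarter of $\opt$.

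At $\alpha=0$ your first option is fine: use your instance for a small $\alpha_0>0$, and note that the prediction is non-binding at $\alpha=0$ as well. Your second option is unsupported, since Lemma~\ref{lem:alpha=0} only gives a lower bound. The paper simply appeals to the classical $1/2$ hardness.
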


\subsection{Prediction Threshold Bound}\label{sec:prediction_threshold}

Now, we address the performance of our threshold algorithm when $M(X) > \bar{\tau}/\alpha$, that is, when the effective threshold is set to be the prediction. Since the behavior of the threshold is complicated in this case, our analysis is based on bounding $\Pr[M(X) > \bar{\tau}/\alpha]$ in Lemma \ref{lem:T/alpha probability upper bound}, whose proof we defer to \Cref{app:deferred_proofs}.
Such a bound becomes useful along with the observation that for any realization of $X$, unless there exists a second value in $[M(X)\cdot \alpha, M(X)]$ other than $M(X)$, the algorithm obtains the maximum value. In Lemma \ref{lem:>T/alpha values}, we leverage this fact to get a bound on the expected value of $\talga(X,\bar{\tau})$ in the event that $M(X)$ is sufficiently large.

\begin{lemma}\label{lem:T/alpha probability upper bound}
\[
\Pr[M(X) \geq \bar{\tau}/\alpha] \leq  \frac{1}{1/(3\alpha) +1}.
\]
\end{lemma}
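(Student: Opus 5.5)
Let $q = \Pr[M \ge \bar{\tau}/\alpha]$ and $p = \Pr[M \ge \bar{\tau}]$. The plan is to use only the defining equation of $\bar{\tau}$, namely $\E[M\cdot\mathbbm{1}[M\ge\bar{\tau}]] = \tfrac34\opt$, which splits $\opt$ into a tail part and a head part.

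\textbf{Step 1 (tail).} Every realization with $M\ge\bar{\tau}/\alpha$ contributes at least $\bar{\tau}/\alpha$, so
\[
\tfrac34\opt \;=\; \E[M\,\mathbbm{1}[M\ge\bar{\tau}]] \;\ge\; \E[M\,\mathbbm{1}[M\ge\bar{\tau}/\alpha]] \;\ge\; q\,\bar{\tau}/\alpha .
\]

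\textbf{Step 2 (head).} The complementary identity, already derived in the proof of Lemma~\ref{lem:alpha=0}, reads $\tfrac14\opt = \E[M\,\mathbbm{1}[M<\bar{\tau}]]$. This piece does not lower-bound $\bar{\tau}$ in terms of $q$, so I will not use it to bound the head.

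\textbf{Step 3 (combine).} I will combine Step 1 with a lower bound on $\opt$ in terms of $\bar{\tau}$ that involves $q$. Write
\[
\opt \;=\; \E[M\,\mathbbm{1}[M<\bar{\tau}/\alpha]] + \E[M\,\mathbbm{1}[M\ge\bar{\tau}/\alpha]].
\]
Then
\[
\tfrac14\opt \;=\; \opt - \tfrac34\opt \;\le\; \E[M\,\mathbbm{1}[M<\bar{\tau}]] .
\]
This is tautological, so it carries no new information.

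The target bound $q \le 3\alpha/(1+3\alpha)$ rearranges to $q/(1-q) \le 3\alpha$, so I need $q\,\bar{\tau}/\alpha \le 3(1-q)\bar{\tau}$. In other words the inequality to prove is
\[
\tfrac34\opt \;\lesssim\; 3(1-q)\bar{\tau}, \qquad\text{i.e.}\qquad \opt \le 4(1-q)\bar{\tau}.
\]
Since $\opt \le 4(1-p)\bar{\tau}$ fails in general, a lower bound on $\opt$ in this form is not available directly.

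\textbf{Alternative route (algorithm-free).} I would instead exploit that $M$ is the maximum of independent variables, a product structure, together with the $\alpha=0$ prophet guarantee. Conditionally on $M\ge\bar{\tau}/\alpha$, $\talg$ collects at least $\bar{\tau}$. Bounding $\opt$ from above via the prophet inequality would then give
\[
\tfrac34\opt \;\le\; \E[M\,\mathbbm{1}[M\ge\bar{\tau}]] \;\le\; \dots
\]

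\textbf{Main obstacle.} I expect the key step to be the inequality
\[
\E[M\,\mathbbm{1}[M<\bar{\tau}/\alpha]] \;\ge\; \tfrac{1-q}{4}\cdot\ldots,
\]
and more precisely the claim $\tfrac14\opt \ge (1-q)\,q\,\bar{\tau}/(3\alpha)\cdot$ (a suitable normalization). My first attempt at this will be a worst-case reduction to a two-point distribution: mass $q$ at $\bar{\tau}/\alpha$, mass just below $\bar{\tau}$, and the rest at $0$. On that instance the inequality is an explicit computation, and I expect that computation to yield exactly $1/(1/(3\alpha)+1)$.
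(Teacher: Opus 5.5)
\textbf{There is a genuine gap: the step that finishes the proof is one you explicitly discard.} Your Step~1 is correct and is exactly the first inequality of the paper's proof. You also correctly reduce the claim to \(q\bar{\tau}/\alpha \le 3(1-q)\bar{\tau}\), i.e.\ \(\opt \le 4(1-q)\bar{\tau}\). But your assertion that \(\opt \le 4(1-p)\bar{\tau}\) ``fails in general'' is false. The head identity \(\tfrac14\opt = \E[M\,\mathbbm{1}[M<\bar{\tau}]]\) carries no information on its own. On the event \(\{M<\bar{\tau}\}\), however, you have the pointwise bound \(M<\bar{\tau}\), so
\[
\tfrac14\opt \;=\; \E[M\,\mathbbm{1}[M<\bar{\tau}]] \;\le\; \bar{\tau}\,\Pr[M<\bar{\tau}] \;=\; \bar{\tau}(1-p) \;\le\; \bar{\tau}(1-q),
\]
where \(q\le p\) because \(\bar{\tau}/\alpha\ge\bar{\tau}\). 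This is the same inequality \(\bar{\tau}\ge \E[M]/(4(1-p))\) already derived in the proof of Lemma~\ref{lem:alpha=0}. Chaining it with Step~1 gives \(q\bar{\tau}/\alpha \le \tfrac34\opt \le 3\bar{\tau}(1-q)\), hence \(q(1/\alpha+3)\le 3\), which is the claim. This is precisely the paper's argument; it writes the middle step as \(3 \ge q/\alpha + 3p \ge (1/\alpha+3)q\).

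\textbf{As written, the proposal does not contain a proof.} The ``alternative route'' via the prophet inequality stops before stating any inequality. The ``main obstacle'' inequality \(\tfrac14\opt \ge (1-q)q\bar{\tau}/(3\alpha)\) is a \emph{lower} bound on \(\opt\). It already follows trivially from Step~1, since \(\tfrac14\opt = \tfrac13\cdot\tfrac34\opt \ge q\bar{\tau}/(3\alpha)\), so it cannot yield an upper bound on \(q\). What you need on the head side is an \emph{upper} bound on \(\opt\) in terms of \(\bar{\tau}\). The proposed worst-case reduction to a two-point distribution is also unjustified: you would have to show it preserves the defining equation of \(\bar{\tau}\) while not decreasing \(q\). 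Your intuition about the extremal case is right (tail mass exactly at \(\bar{\tau}/\alpha\), head mass just below \(\bar{\tau}\)), but the reduction is unnecessary. The two one-line bounds above, one on each side of \(\bar{\tau}\), already settle the lemma, with no appeal to independence or to the prophet inequality.
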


Combining the cases of small and large values of $M(X)$ is non-trivial, an additional approximation proves to be useful and we apply Lemma \ref{lem:T/alpha probability upper bound} not on $M(X) > \bar{\tau}/\alpha$, but $M(X) > \bar{\tau}/\alpha^3$. This gives the following bound on the expected value of our threshold algorithm for large maximum values. 

\begin{lemma}\label{lem:>T/alpha values}
Let $B$ be the event that the prediction is used, i.e., $B = \{ M(X)> \bar{\tau}/\alpha^3 \}$. Then we get
\[
 \E_{R \sim X}[\talg_{\alpha}(R\cdot\mathbbm{1}_B,\bar{\tau})] \ge ((1-g(\alpha)) + \alpha g(\alpha))\cdot \mathbb{E}[M(X) \cdot \mathbf{1}_{B}]
\]
where $g(x) = \frac{1}{1/(3x^2)+1}$.
\end{lemma}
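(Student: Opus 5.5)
The plan is to work realization by realization on $B$, then average. Fix a realization $R$ in $B$ with maximum $m = \xm(R) > \bar{\tau}/\alpha^3$. Since $\alpha m > \bar{\tau}/\alpha^2 \ge \bar{\tau}$, the effective threshold is $\zeta = \alpha m$, so the algorithm accepts the first value in $[\alpha m, m]$. This value is at least $\alpha m$ in every case. If no coordinate other than the maximizer lies in $[\alpha m, m]$, the algorithm obtains exactly $m$. Hence
\[
\talga(R,\bar{\tau}) \ge m - (1-\alpha)\, m\, \mathbb{1}[E_R],
\]
where $E_R$ is the event that some other coordinate $X_j$, with $j$ not the argmax, satisfies $X_j \ge \alpha m$. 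Integrating over $B$, it suffices to show
\[
\E\bigl[M \mathbb{1}_B \mathbb{1}_E\bigr] \le g(\alpha)\, \E[M \mathbb{1}_B].
\]

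To prove this, I would decompose by the index $i$ attaining the maximum. Let $M_{-i} = \max_{j\ne i} X_j$, and condition on $X_i = x$ with $x > \bar{\tau}/\alpha^3$. Then $E$ becomes $\{M_{-i} \ge \alpha x\}$ together with $M_{-i} \le x$. By independence of $X_i$ and $M_{-i}$, its conditional probability is at most $\Pr[M_{-i} \ge \alpha x]$. Since $\alpha x > \bar{\tau}/\alpha^2$, this is at most $\Pr[M_{-i} \ge \bar{\tau}/\alpha^2] \le \Pr[M \ge \bar{\tau}/\alpha^2]$.

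The key step is to invoke Lemma~\ref{lem:T/alpha probability upper bound} with $\alpha^2$ in place of $\alpha$. Its proof only uses the definition of $\bar{\tau}$ and a generic parameter in $(0,1]$, so it yields $\Pr[M \ge \bar{\tau}/\alpha^2] \le \frac{1}{1/(3\alpha^2)+1} = g(\alpha)$. This is exactly why the event $B$ uses the threshold $\bar{\tau}/\alpha^3$: one factor of $\alpha$ is spent converting $M > \bar{\tau}/\alpha^3$ into the condition that a second value exceeds $\bar{\tau}/\alpha^2$.

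It remains to sum over $i$. We have
\[
\E[M \mathbb{1}_B \mathbb{1}_E] \le \sum_i \E\bigl[X_i \mathbb{1}[X_i = M,\ X_i > \bar{\tau}/\alpha^3] \cdot \mathbb{1}[M_{-i} \ge \alpha X_i]\bigr].
\]
The subtle part is that the event $\{X_i = M\}$ itself depends on $M_{-i}$, so one cannot simply factor out the probability. I would handle this by writing $\{X_i = M\} \cap \{M_{-i} \ge \alpha X_i\} = \{\alpha X_i \le M_{-i} \le X_i\}$. Conditioning on $X_i = x$, independence gives $\Pr[\alpha x \le M_{-i} \le x] \le g(\alpha)$.

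The problem is that summing $\E\bigl[X_i \mathbb{1}[X_i > \bar{\tau}/\alpha^3]\bigr] g(\alpha)$ over $i$ gives $\sum_i \E[X_i \mathbb{1}[\cdot]]$, not $\E[M \mathbb{1}_B]$. This is the main obstacle. The fix I would try first is to compare the joint term with the event $\{X_i = M\}$ directly, using
\[
\Pr[\alpha x \le M_{-i} \le x] \le \frac{\Pr[\alpha x \le M_{-i} \le x]}{\Pr[M_{-i} \le x]}\, \Pr[M_{-i} \le x].
\]
The term $\Pr[M_{-i} \le x]$ restores the factor $\mathbb{1}[X_i = M]$. The remaining task is to bound the conditional ratio $\Pr[M_{-i} \ge \alpha x \mid M_{-i} \le x]$ by $g(\alpha)$. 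If this conditional bound is not directly available from Lemma~\ref{lem:T/alpha probability upper bound}, I would fall back on using $\Pr[M_{-i} \le x] \ge 1 - \Pr[M \ge \bar{\tau}/\alpha^2]$ and absorb the resulting constant, re-deriving the lemma's estimate in conditional form.

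Finally, for the average over $B$, this gives $\E[\talga \mathbb{1}_B] \ge \E[M \mathbb{1}_B] - (1-\alpha)\, g(\alpha)\, \E[M \mathbb{1}_B] = \bigl((1-g(\alpha)) + \alpha g(\alpha)\bigr)\, \E[M \mathbb{1}_B]$, which is the claimed bound.
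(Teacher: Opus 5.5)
Your setup matches the paper's proof. On $B$ the effective threshold is $\alpha M$, and the loss is at most $(1-\alpha)M$. A loss occurs only when a second value lies in $[\alpha M, M]$, hence above $\bar\tau/\alpha^2$. Lemma~\ref{lem:T/alpha probability upper bound} with $\alpha^2$ in place of $\alpha$ then bounds $\Pr[M\ge\bar\tau/\alpha^2]$ by $g(\alpha)$.

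The gap is the step you flag yourself. You never prove $\Pr[M_{-i}\ge\alpha x\mid M_{-i}\le x]\le g(\alpha)$, and this is the nontrivial content of the lemma; the paper covers it with Lemma~\ref{lem:order_stats_bound}. Your fallback does not rescue it. Using $\Pr[M_{-i}\le x]\ge 1-g(\alpha)$ only gives the factor $g(\alpha)/(1-g(\alpha))=3\alpha^2$ instead of $g(\alpha)$. That constant is strictly weaker and becomes vacuous once $\alpha\ge 1/\sqrt{3}$. A ``conditional form'' of Lemma~\ref{lem:T/alpha probability upper bound} is also not readily available, since its proof uses the defining identity of $\bar\tau$ for the unconditional law of $M$.

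The missing ingredient is an elementary negative-correlation fact. For a real random variable $Z$ and $a\le b$, the events $\{Z\ge a\}$ and $\{Z\le b\}$ cover the sample space, so
\[
\Pr[a\le Z\le b]=\Pr[Z\ge a]+\Pr[Z\le b]-1=\Pr[Z\ge a]\Pr[Z\le b]-\bigl(1-\Pr[Z\ge a]\bigr)\bigl(1-\Pr[Z\le b]\bigr)\le\Pr[Z\ge a]\Pr[Z\le b].
\]
Take $Z=M_{-i}$, $a=\alpha x$, $b=x$ with $x>\bar\tau/\alpha^3$. This gives $\Pr[\alpha x\le M_{-i}\le x]\le \Pr[M\ge\bar\tau/\alpha^2]\,\Pr[M_{-i}\le x]\le g(\alpha)\Pr[M_{-i}\le x]$, which is exactly the conditional bound you wanted. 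Integrate against $F_{X_i}$ over $x>\bar\tau/\alpha^3$, using independence of $X_i$ and $M_{-i}$, and sum over $i$. The right-hand sides add up to $g(\alpha)\sum_i\E\bigl[X_i\mathbbm{1}[X_i>\bar\tau/\alpha^3,\ M_{-i}\le X_i]\bigr]=g(\alpha)\E[M\mathbbm{1}_B]$, since ties have probability zero under this section's continuity assumption. This yields $\E[M\mathbbm{1}_B\mathbbm{1}_E]\le g(\alpha)\E[M\mathbbm{1}_B]$ with the exact constant.

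With this inserted your per-index argument is complete. It gives the $M$-weighted inequality pointwise in the value of the maximum. By contrast, Lemma~\ref{lem:order_stats_bound} conditions on the event $\{M_1>L_1\}$ rather than on the value of the maximum. To get the weighted bound from it, one applies it at every level $L'\ge L_1$ and integrates.
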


\begin{proof}
For simplicity denote $M =M(X)$ and let $Y$ be the value obtained by the algorithm. The event of interest is $B = \{ M > \bar{\tau}/\alpha^3 \}$. In this event, the algorithm's threshold is $M\cdot \alpha$. In this case, the algorithm's value is always at least its threshold, so $Y \ge M\cdot \alpha$.

The algorithm's value is in fact exactly $M$ unless it is "stopped early" by a value $v_j \sim X_j$ that arrives before the maximum item $M=v_k \sim X_k$, where $v_j \ge M\cdot \alpha$ and $j<k$. Let's condition on the outcome of the maximum value, $M=v_k$, where $v_k>\bar{\tau}/\alpha^3$.  
Let $S_{<k}$ be the event that the algorithm stopped before $k$, given that $M = v_k$.

We give a simple bound on the expected value, 
\[ \mathbb{E}[Y | M=v_k] \ge v_k (1 - \Pr[S_{<k}]) + (v_k \cdot \alpha)\Pr[S_{<k}] = v_k - v_k (1-\alpha)\cdot\Pr[S_{<k}] \]
Let $S = \{i \mid v_i \ge \bar{\tau}/\alpha^2\}$. For a given maximum $M=v_k$, the stopping probability $\Pr[S_{<k}]$ is at most the probability that at least one item in $S \setminus \{k\}$ is realized. Let this event be $E_{k}$. That is, $\Pr[S_{<k}]  \leq \Pr[E_{k}]$. Note how $E_k$ measures the conditional performance of the second order statistic. In order to bound $\Pr[E_k]$ we give the following Lemma and defer its proof to \Cref{app:deferred_proofs}. 

\begin{lemma}\label{lem:order_stats_bound}
Let $X_1, \dots, X_n$ be independent random variables. Let $M_1$ and $M_2$ denote the largest and second largest values among the $X_i$, respectively (i.e., the $n$-th and $(n-1)$-th order statistics). For any fixed thresholds $L_1$ and $L_2$ such that $L_1 > L_2$, the following inequality holds:
\[
\Pr[M_2 > L_2 \mid M_1 > L_1] \leq \Pr[M_1 > L_2].
\]
\end{lemma}

Thus combining \Cref{lem:order_stats_bound} with \Cref{lem:T/alpha probability upper bound}, we obtain $\Pr[E_{k}] \le g(\alpha)$ where $g(x) = 1/((1/3x^2)+1)$. 
Averaging over all possible outcomes for the maximum value $M$ within the event $B$, we have:
\[ \mathbb{E}[Y \cdot \mathbf{1}_B] = \mathbb{E}_M\left[ \mathbb{E}_Y[Y|M] \cdot \mathbf{1}_B \right] \ge \mathbb{E}\left[ (M - {M}(1-\alpha)\Pr[S_{<k}]) \cdot \mathbf{1}_B \right] \]
Using the bound $\Pr[S_{<k}]  \le g(\alpha)$:
\[ \mathbb{E}[Y \cdot \mathbf{1}_B] \ge \mathbb{E}\left[ (M - (1-\alpha){M} \cdot g(\alpha)) \cdot \mathbf{1}_B \right] =\left((1-g(\alpha)) + \alpha g(\alpha)\right)\mathbb{E}[M \cdot \mathbf{1}_B] \]
\end{proof}

Since the bound we obtained from Lemma \ref{lem:>T/alpha values} has a complicated form, we give a simpler, monotone lower bound.

\begin{lemma}\label{lem:compare_ratios}
Let $g(x) = \frac{1}{1/(3x^2)+1}$. For all $x \in [0,1]$, the following inequality holds:
\[ (1-g(x))+ xg(x) \ge \frac{2+x^3}{3}. \]
\end{lemma}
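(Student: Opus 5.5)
The plan is to reduce the statement to a single polynomial inequality on $[0,1]$ and verify that inequality by AM--GM. No earlier result of the paper is needed; this is a purely algebraic fact about $g(x)=\frac{1}{1/(3x^2)+1}$.

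\textbf{Step 1: rewrite both sides.} For $x>0$ we have $g(x)=\frac{3x^2}{1+3x^2}$, and this formula also extends continuously to $x=0$ with value $0$. The left-hand side then becomes
\[
(1-g(x))+xg(x) \;=\; 1-(1-x)\,\frac{3x^2}{1+3x^2}.
\]
The right-hand side can be written as $\frac{2+x^3}{3}=1-\frac{1-x^3}{3}$. So the claim is equivalent to
\[
(1-x)\,\frac{3x^2}{1+3x^2}\;\le\;\frac{(1-x)(1+x+x^2)}{3}.
\]

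\textbf{Step 2: reduce to a polynomial.} At $x=1$ both sides vanish, so equality holds. For $x\in[0,1)$ we divide by $1-x>0$. Multiplying through by the positive quantity $3(1+3x^2)$, the inequality becomes $9x^2\le (1+x+x^2)(1+3x^2)$. Expanding the product and moving everything to one side, this is
\[
3x^4+3x^3-5x^2+x+1\;\ge\;0 .
\]

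\textbf{Step 3: verify the polynomial inequality.} This is the only step with real content, and it is short. Group the terms as $(1+3x^4)+(x+3x^3)$ and apply AM--GM to each pair:
\[
1+3x^4\;\ge\;2\sqrt{3}\,x^2,
\qquad
x+3x^3\;\ge\;2\sqrt{3}\,x^2 .
\]
The second bound uses $x\ge 0$. Adding them gives
\[
3x^4+3x^3+x+1\;\ge\;4\sqrt{3}\,x^2\;\approx\;6.93\,x^2\;\ge\;5x^2 .
\]
This holds for all $x\in[0,1]$, so the polynomial is nonnegative there. Reversing the equivalences of Steps 1 and 2 then proves the claim.
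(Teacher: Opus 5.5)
Your proof is correct. It takes a different route from the paper's in the decisive step, although both start the same way: write $g(x)=3x^2/(1+3x^2)$ and reduce to a polynomial inequality.

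\textbf{The paper's route.} The paper clears denominators directly and works with the quintic $N(x)=-3x^5+8x^3-6x^2+1$. It shows $N$ is non-increasing on $[0,1]$ via $N'(x)=-3x(5x^3-8x+4)$. This requires checking that the cubic $5x^3-8x+4$ stays positive; its minimum, at $\sqrt{8/15}$, equals $4-\tfrac{16}{3}\sqrt{8/15}\approx 0.105$, so the numerical check is fairly tight. The paper then concludes from $N(1)=0$.

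\textbf{Your route.} You factor out $1-x$ before clearing denominators. Indeed $N(x)=(1-x)(3x^4+3x^3-5x^2+x+1)$, so your quartic is exactly the paper's quintic with the root at $x=1$ divided out. This makes the equality case at $x=1$ transparent, and you finish with two AM--GM bounds, using no calculus and with visible slack $(4\sqrt{3}-5)x^2$. As a byproduct, your quartic is nonnegative on all of $[0,\infty)$, so the hypothesis $x\le 1$ enters only through the sign of the factor $1-x$.

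\textbf{Comparison.} The paper's approach is mechanical: differentiate and locate a minimum, at the cost of a delicate numerical estimate. Yours is shorter, fully elementary, and shows why the inequality holds with room to spare away from $x=1$.
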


\subsection{Proof of the Main Theorem}\label{sec:oblivious main theorem}

Finally, we are ready to combine the results of the previous section in order to prove the main result on our robust mechanism.
The proof establishes the global lower bound by partitioning the outcome space into two disjoint events determined by the magnitude of the maximum realization $M(X)$. We distinguish between a ``base threshold regime'' (denoted as event $A$), where $M(X)$ falls within the interval $[\bar{\tau}, \bar{\tau}/\alpha^3]$, and a ``prediction regime'' (denoted as event $B$), where $M(X)$ exceeds $\bar{\tau}/\alpha^3$. By the results from Section \ref{sec:base threshold} and \ref{sec:prediction_threshold}, we know that the algorithm achieves a conditional competitive ratio of at least $(2+\alpha^3)/3$ in both regimes. We conclude by using the definition of $\bar{\tau}=T_X(3/4)$ once again, which implies that $3/4$ of welfare is obtained in $A \cup B$, that is
$\E[M(X) \cdot \mathbbm{1}_{A\cup B}] = (3/4)\cdot \E[M(X)]$.

\begin{proof}[Proof of Theorem \ref{thm:oblivious_main}]
The upper bound holds by Corollary \ref{lem:tight_oblivious_instance}. We proceed to prove the lower bound. 
For $\alpha = 0$, the claim holds by Lemma \ref{lem:alpha=0}.
Now let $\alpha \in (0,1]$.
Define the events $A= \{X: \bar{\tau} \leq M(X)\leq \bar{\tau}/\alpha^3\}$ and $B = \{X: M(X) >  \bar{\tau}/\alpha^3\}$. Also, let
\[
Y_i = 
\begin{cases}
    &  X_i \quad \text{when} \quad X_i \leq \bar{\tau}/\alpha^3\\
    & 0 \quad \text{otherwise}.
\end{cases}
\]
Then
\[
\E[M(X)\mathbbm{1}_A] = \E[M(Y)\mathbbm{1}_A], \quad \E_{R \sim X}[\talg_{\alpha}(R\cdot\mathbbm{1}_A,\bar{\tau})] = \E_{R \sim Y}[\talg_{\alpha}(R\cdot\mathbbm{1}_A,\bar{\tau})]   
\]
\[
\text{ and } \E[M(X)\mathbbm{1}[M(X) < \bar{\tau}]] = \E[M(Y)\mathbbm{1}[M(Y) < \bar{\tau}]].
\]
However, we have that $\bar{\tau} = T_Y(c)$ for some $c\leq 3/4$, with the inequality possibly being strict.
Since certainly $\talg_{\alpha}(Y,\bar{\tau}) \geq \talg_{\alpha^3}(Y,\bar{\tau})$, we may apply Lemma \ref{lem:smallvalues_noprediction} to obtain
\[
\frac{\talg_{\alpha}(Y,\bar{\tau})}{\mathbb{E}[M(Y)]} \ge \frac{\talg_{\alpha^3}(Y,\bar{\tau})}{\mathbb{E}[M(Y)]} \ge c \left( \frac{2 + (\alpha^3 - 2)c + 2\sqrt{1-c}}{2 - c + 2\sqrt{1-c}} \right).
\]
Since by definition $c\cdot \mathbb{E}[M(Y)] = \mathbb{E}[M(Y)\mathbbm{1}_A]$, we get 

\[
\E_{R \sim Y}[\talg_{\alpha}(R\cdot \mathbbm{1}_A,\bar{\tau})] \geq 
\left(\frac{2 + (\alpha^3 - 2)c + 2\sqrt{1-c}}{2 - c + 2\sqrt{1-c}} \right)\cdot \mathbb{E}[M(Y)\mathbbm{1}_A].
\]
We further bound this by noting that $c\leq 3/4$, hence by Lemma \ref{lem:smallval fn decreasing}, we know the minimum is taken at $c = 3/4$. Indeed, setting $c$ at this value,
\[
\left(\frac{2 + (\alpha^3 - 2)c + 2\sqrt{1-c}}{2 - c + 2\sqrt{1-c}} \right) = \frac{2+\alpha^3}{3}
\]
implying 
\begin{equation}\label{eq:interval A ineq}
    \E_{R \sim Y}[\talg_{\alpha}(R\cdot\mathbbm{1}_A,\bar{\tau})] \geq 
\left(\frac{2+\alpha^3}{3} \right)\cdot \mathbb{E}[M(Y)\mathbbm{1}_A].
\end{equation}

By Lemma \ref{lem:>T/alpha values}, we directly get
\[
\E_{R \sim X}[\talg_{\alpha}(R\cdot\mathbbm{1}_B,\bar{\tau})] \ge ((1-g(\alpha)) + \alpha g(\alpha)) \cdot {\mathbb{E}[M(X) \cdot \mathbf{1}_{B}]}
\]
and by Lemma \ref{lem:compare_ratios} we know that 
$(1-g(\alpha)) + \alpha g(\alpha) \geq (2+\alpha^3)/3$ so that we have
\begin{equation}\label{eq:interval B ineq}
\E_{R \sim X}[\talg_{\alpha}(R\cdot\mathbbm{1}_B,\bar{\tau})]  \ge \left(\frac{2+\alpha^3}{3} \right) \cdot {\mathbb{E}[M(X) \cdot \mathbf{1}_{B}]}
\end{equation}

Finally, by (\ref{eq:interval A ineq}) and (\ref{eq:interval B ineq}),

\begin{align*}
\frac{\talg_{\alpha}(X,\bar{\tau})]}{\E[M(X)]}
= & \left(\frac{3}{4}\right)\frac{ \E_{R \sim X}[\talg_{\alpha}(R\cdot\mathbbm{1}_A,\bar{\tau})] + \E_{R \sim X}[\talg_{\alpha}(R\cdot\mathbbm{1}_B,\bar{\tau})]}{\E[M(X)\mathbbm{1}_A]+ \E[M(X)\mathbbm{1}_B]}\\
= & \left(\frac{3}{4}\right)\frac{\E_{R \sim Y}[\talg_{\alpha}(R\cdot\mathbbm{1}_A,\bar{\tau})] + \E_{R \sim X}[\talg_{\alpha}(R\cdot\mathbbm{1}_B,\bar{\tau})]}{ \mathbb{E}[M(Y)\mathbbm{1}_A]+ \E[M(X)\mathbbm{1}_B]}\\
\geq & \left(\frac{3}{4}\right)\left(\frac{2+\alpha^3}{3} \right) = \frac{1}{2}+\frac{\alpha^3}{4}.
\end{align*}
\end{proof}

\subsection{Lower bound for \texorpdfstring {$\alpha=0,1$} {alpha}}\label{sec:oblivious lower bound}
We show that no (possibly randomized) algorithm that does not know $\alpha$ in advance can simultaneously guarantee
$f(0) \ge 0.5$ and $f(1) \ge 0.75 + \epsilon$ for any $\epsilon>0$.

\begin{lemma}\label{lem:unknown-alpha-lb}
For every $\epsilon>0$, there exists an instance and an adversarial choice of advice
such that any algorithm that does not know $\alpha$ cannot achieve both
$f(0)\ge 0.5$ and $f(1)\ge 0.75+\epsilon$.
\end{lemma}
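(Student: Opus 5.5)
The plan is to exhibit a single two-variable instance, parametrized by small $\delta>0$ and $\eta>0$, on which an $\alpha$-oblivious algorithm receives the \emph{same} observation in two different prediction regimes. We then show that the robustness constraint at $\alpha=0$ forces a behaviour that caps consistency at $\alpha=1$ near $3/4$. The instance is $X_1=1$ deterministically and $X_2=H:=a/\delta$ with probability $\delta$ (else $0$), where $a=1+\eta$. Then $\opt=(1-\delta)\cdot 1+\delta H=1-\delta+a$, which tends to $1+a$ as $\delta\to 0$.

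\textbf{Step 1: the common observation.} When $X_1=1$ arrives, the algorithm has seen only $X_1=1$ and the prediction $\tilde v$. In the $\alpha=0$ regime the adversary may set $\tilde v=1$ on every realization, which is feasible since $M\ge 1$ always. In the $\alpha=1$ regime we must have $\tilde v=M$, so $\tilde v=1$ exactly when $X_2=0$, and $\tilde v=H$ otherwise. Since the algorithm does not know $\alpha$, its (possibly randomized) decision upon seeing $(X_1,\tilde v)=(1,1)$ is a single acceptance probability $q\in[0,1]$, common to both regimes. We may also assume it accepts $X_2$ whenever $X_2=H$ arrives, as this only helps.

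\textbf{Step 2: robustness forces $q\le 1/2$.} Under $\alpha=0$ with $\tilde v\equiv 1$, the algorithm earns $q\cdot 1+(1-q)\delta H=q+(1-q)a$. Requiring this to be at least $\tfrac12\opt=\tfrac12(1-\delta+a)$ and letting $\delta\to0$ gives
\[
q(1-a)\ \ge\ \tfrac{1-a}{2}+O(\delta).
\]
Since $a=1+\eta>1$, this yields $q\le \tfrac12+O(\delta/\eta)$. Choosing $\delta\ll\eta$ makes the error negligible.

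\textbf{Step 3: consistency is then at most about $3/4$.} Under $\alpha=1$, on the event $X_2=0$ (probability $1-\delta$) the algorithm sees $\tilde v=1$ and earns $q$. On the event $X_2=H$ it earns at most $H$. Hence $\alg\le q(1-\delta)+a$, and
\[
\frac{\alg}{\opt}\ \le\ \frac{q+a}{1+a}+O(\delta)\ \le\ \frac{1/2+1+\eta}{2+\eta}+O(\delta/\eta).
\]
The right-hand side tends to $3/4$ as $\eta\to0$ and $\delta/\eta\to0$. So for any $\epsilon>0$ we can pick $\eta$, and then $\delta$, small enough that $f(0)\ge 1/2$ implies $f(1)<3/4+\epsilon$, which proves the lemma.

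\textbf{Main obstacle.} The delicate point is the order of limits. We need $a>1$ strictly so that the $\alpha=0$ constraint genuinely binds and forces $q\le1/2$. At the same time we need $a\to1$ so that the $\alpha=1$ ratio approaches $3/4$. The $O(\delta)$ slack in Step 2 is divided by $a-1=\eta$, so $\delta$ must be taken much smaller than $\eta$; I would make this explicit, for instance with $\delta=\eta^2$.
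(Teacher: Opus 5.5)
Your proposal is correct and is essentially the paper's proof. With your suggested choice $\delta=\eta^2$, your instance is exactly the paper's ($X_1\equiv 1$, $X_2=\frac{1}{\epsilon}+\frac{1}{\sqrt{\epsilon}}$ with probability $\epsilon$), with the same adversarial advice ($\tilde v\equiv 1$ for $\alpha=0$, $\tilde v=M$ for $\alpha=1$) and the same single acceptance probability for the shared observation $(X_1,\tilde v)=(1,1)$. The only cosmetic difference is that the paper splits into the cases $p\ge \frac12+\sqrt{\epsilon}$ and $p<\frac12+\sqrt{\epsilon}$, whereas you solve the $\alpha=0$ constraint directly to get $q\le \frac12+\frac{\delta}{2\eta}$, which amounts to the same argument.
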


\begin{proof}
We consider a two-variable instance.
Let $X_1 \sim B(1,1)$ and
$X_2 \sim B\!\left(\frac{1}{\epsilon}+\frac{1}{\sqrt{\epsilon}},\epsilon\right)$.
Note that
$E[X_2] = 1 + \sqrt{\epsilon}.$ The adversary chooses the advice parameter $\alpha \in \{0,1\}$ as follows:
\begin{enumerate}
    \item For $\alpha = 1$, the adversary sets $v^* = X_2$ when $X_2>0$, and
          $v^* = 1$ otherwise.
    \item For $\alpha = 0$, the adversary sets $v^* = 1$.
\end{enumerate}

Let $p$ denote the probability that the algorithm selects the first item when $v^*=1$.
Without loss of generality, we assume that the algorithm selects the second item whenever
$v^*>1$.
We denote by $\alg(\alpha,p)$ the expected value obtained by the algorithm with advice
parameter $\alpha$, where $p$ is the probability of selecting the first item in the case
$v^*=1$.

The optimal expected value is
\[
\opt
= \left(\frac{1}{\epsilon}+\frac{1}{\sqrt{\epsilon}}\right)\epsilon
  + 1\cdot(1-\epsilon)
= 2+\sqrt{\epsilon}-\epsilon.
\]

The algorithm’s expected reward in the two scenarios is given by
\begin{align*}
\alg(0,p)
&= p\cdot 1 + (1-p)\cdot E[X_2]
= p + (1-p)(1+\sqrt{\epsilon}),\\
\alg(1,p)
&= \left(\frac{1}{\epsilon}+\frac{1}{\sqrt{\epsilon}}\right)\epsilon
  + 1\cdot(1-\epsilon)\cdot p.
\end{align*}

We now distinguish between two cases.

\paragraph{Case 1: $p \ge 0.5 + \sqrt{\epsilon}$.}
Since $\alg(0,p)$ is monotone decreasing in $p$, we obtain
\[
\alg(0,p) \le \frac{2-2\epsilon+\sqrt{\epsilon}}{2} \quad \Rightarrow \quad \frac{\alg(0,p)}{\opt}
\le
\frac{\frac{\sqrt{\epsilon}-2\epsilon+2}{2}}
     {2+\sqrt{\epsilon}-\epsilon}
\le \frac{1}{2}.
\]

\paragraph{Case 2: $p < 0.5 + \sqrt{\epsilon}$.}
In this case,
\begin{align*}
 \frac{\alg(1,p)}{\opt}
 &\le
 \frac{-\epsilon^{3/2} + 2\sqrt{\epsilon} - \tfrac12\epsilon + \tfrac32}
      {(2-\sqrt{\epsilon})(\sqrt{\epsilon}+1)} \le
 \frac{3}{4} + \sqrt{\epsilon}.
\end{align*}

Thus, for any choice of $p$, either
$\frac{\alg(0,p)}{\opt} < 0.5$
or
$\frac{\alg(1,p)}{\opt} < 0.75 + \sqrt{\epsilon}$.
In particular, it is impossible to guarantee both
$f(0)\ge 0.5$ and $f(1)\ge 0.75+\epsilon$ for any $\epsilon>0$.
\end{proof}

\bibliographystyle{ACM-Reference-Format}
\bibliography{bibliography}

\appendix
\section{General Predictions}
\label{sec:genpred}
In this section we study a fully general prediction model, in which the algorithm
receives a prediction for each individual random variable.
Formally, in addition to observing the distributions of
$X_1,\ldots,X_n$, the online algorithm is given a vector of predictions
$\tilde{v} = (\tilde v_1,\ldots,\tilde v_n)$, where each $\tilde v_i \in \mathbb{R}_+$
is an arbitrary prediction for the realized value of $X_i$.
No structural assumptions are imposed on the prediction vector:
the predictions may overestimate or underestimate the realized values, and may be
chosen adversarially.
The algorithm may use $\tilde{v}$ arbitrarily when making its online decisions.

We show that under such general predictions the consistency--robustness tradeoff is
essentially trivial.
Specifically, any algorithm that is $\con$-consistent and $\rob$-robust must
satisfy $\con+\rob \le 1$.
Conversely, a simple algorithm that interpolates between trusting the prediction and
executing a worst-case optimal strategy achieves this bound for any $\con$.
We begin by establishing the impossibility result.

\begin{lemma}\label{lem:general-upper-bound}
Every algorithm that is $\con$-consistent and $\rob$-robust
under general predictions satisfies
\[
\con+\rob \le 1.
\]
\end{lemma}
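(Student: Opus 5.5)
We will prove Lemma~\ref{lem:general-upper-bound} with a two-variable instance on which one adversarially chosen prediction vector is compatible with two realization scenarios. In one scenario the prediction is correct; in the other it is arbitrarily wrong. Both scenarios show the algorithm the same information at the first decision point, so its behaviour there is a single probability $p$ of accepting $X_1$. Consistency and robustness then bound $p$ from opposite sides, and the two bounds add up to $\con+\rob\le 1$.

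\textbf{Instance and prediction.} Take $X_1=1$ deterministically. Let $X_2$ equal $H=1/\epsilon$ with probability $\epsilon$ and $0$ otherwise, where $\epsilon>0$ is small. Then $\opt=\epsilon H+(1-\epsilon)=2-\epsilon$. Give the algorithm the fixed prediction vector $\tilde v=(1,0)$. This prediction is correct exactly on realizations with $X_2=0$, and it is wrong when $X_2=H$.

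\textbf{Consistency constraint.} Consider the prediction-correct world, where the prediction always equals the realization, so $X_2$ is conditioned to be $0$. There $\opt=1$, and the algorithm, seeing $\tilde v=(1,0)$, earns $p$. So $\con$-consistency forces $p\ge \con$. To make this rigorous I will use the convention that consistency is measured against instances where the prediction is correct, i.e.\ the instance $X_2\equiv 0$. That instance is indistinguishable from the original at time~1 if its distribution is not used at time~1; to ensure this I expect the cleanest route is to consider a family of instances differing only in the law of $X_2$. If needed, I will instead let the adversary report $\tilde v_2=0$ regardless of the realization, so that $\tilde v$ carries no information.

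\textbf{Robustness constraint.} In the original instance with the adversarial prediction $(1,0)$, the algorithm's first decision is again to accept with probability $p$. It therefore earns at most $p\cdot 1+(1-p)\cdot \epsilon H=p+(1-p)=1$ against $\opt=2-\epsilon$, which only gives $\rob\le 1/(2-\epsilon)$. To get the additive tradeoff we instead need $H$ to take over the optimum, so I will set $X_2=H$ with probability $\delta$ where $\delta H\to\infty$, normalized appropriately. Let $X_2=1/\epsilon^2$ with probability $\epsilon$, so $\opt\approx 1/\epsilon$. The algorithm earns $p+(1-p)/\epsilon$, and the robustness ratio is about $(1-p)+O(\epsilon)$. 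Hence $\rob\le 1-p+O(\epsilon)\le 1-\con+O(\epsilon)$. Letting $\epsilon\to 0$ gives $\con+\rob\le 1$.

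\textbf{Main obstacle.} The main obstacle is formalizing the indistinguishability step: the consistency world (where $X_2=0$, so the prediction is right) and the robustness world (where $X_2$ is rarely huge) must present identical information at step~1, including the known distributions. I will handle this by defining consistency on the same instance but conditioned on the event that the prediction is correct. Alternatively, I will have the prediction encode the realization so that in the correct world with $X_2=1/\epsilon^2$ the algorithm sees $\tilde v_2=1/\epsilon^2$, while the adversary shows $\tilde v_2=0$ even when $X_2$ is large. The first decision then depends only on $\tilde v=(1,0)$ in both cases, and the algorithm must choose $p$ once.
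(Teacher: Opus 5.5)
There is a genuine gap, and it is the step you flagged as the main obstacle. Consistency is an expectation guarantee on a fixed instance whose distributions the algorithm knows: when predictions are correct, $\E[\alg]\ge \con\cdot\opt$, with $\opt=\E[\max_i X_i]$ of that same instance.

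You cannot get $p\ge\con$ by ``conditioning on $X_2=0$'', because the guarantee is not stated for conditioned worlds. You also cannot get it by passing to the instance $X_2\equiv 0$. That instance has a different known law, so the algorithm may accept $X_1$ there and behave quite differently on the Bernoulli instance. Within a single instance with correct predictions, the algorithm's behaviour on $\tilde v_2=0$ and on $\tilde v_2=H$ are two independent parameters. Consistency only constrains a combination of them.

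Your last paragraph does not repair this. In the correct world with $X_2=1/\epsilon^2$ the algorithm sees $(1,1/\epsilon^2)$, not $(1,0)$, so ``the first decision depends only on $\tilde v=(1,0)$ in both cases'' is false.

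Worse, the rescaled instance $X_2=1/\epsilon^2$ w.p.\ $\epsilon$ cannot exhibit any tradeoff at all. The prediction-ignoring rule ``always skip $X_1$'' earns $\E[X_2]=1/\epsilon$ against $\opt=1/\epsilon+1-\epsilon$. It is therefore $(1-O(\epsilon))$-competitive for every prediction, i.e.\ simultaneously almost $1$-consistent and almost $1$-robust there.

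You correctly noticed that a single fixed wrong vector $(1,0)$ on the instance $X_1\equiv 1$, $X_2=1/\epsilon$ w.p.\ $\epsilon$ only gives $\rob\le 1/(2-\epsilon)$. The remedy is not to rescale $H$. Instead, keep this hard instance (where $\E[X_2]=X_1$ and $\opt=2-\epsilon$) and parametrize the policy by both
\[
p=\Pr[\text{accept }X_1\mid\tilde v_2=0],\qquad q=\Pr[\text{skip }X_1\mid\tilde v_2=1/\epsilon].
\]
Then let the adversary \emph{swap} the two prediction values.

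\begin{itemize}
\item The correct-prediction reward is $(1-\epsilon)p+q+\epsilon(1-q)=p+q+\epsilon(1-p-q)$.
\item The swapped-prediction reward is $(1-\epsilon)(1-q)+\epsilon p+(1-p)=2-\bigl(p+q+\epsilon(1-p-q)\bigr)$.
\end{itemize}

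These sum to exactly $2$ for every $(p,q)$, and both are measured against the same $\opt=2-\epsilon$. Hence $(\con+\rob)(2-\epsilon)\le 2$, and letting $\epsilon\to 0$ gives $\con+\rob\le 1$. This is the paper's argument: the two-sided lie couples consistency and robustness through the same pair $(p,q)$ on one instance. That coupling is exactly what your decoupled ``$p\ge\con$ from one world, $\rob\le 1-p$ from another'' scheme lacks.
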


\begin{proof}
Consider the classical two-item prophet inequality instance.
Item~$1$ has a deterministic value equal to~$1$.
Item~$2$ takes value $1/\epsilon$ with probability~$\epsilon$ and value~$0$ otherwise.

The expected offline optimum is
\[
\opt
= \epsilon \cdot \frac{1}{\epsilon} + (1-\epsilon)\cdot 1
= 2-\epsilon.
\]

Assume that the algorithm is given a prediction $\tilde{v}$ of the item values.
By definition of the instance, $\tilde{v}(1)=1$ and
$\tilde{v}(2)\in\{0,1/\epsilon\}$.

Any prediction-based policy can be parameterized by two probabilities $p,q\in[0,1]$:
\begin{itemize}
    \item If $\tilde{v}(2)=0$, the policy selects item~$1$ with probability~$p$.
    \item If $\tilde{v}(2)=1/\epsilon$, the policy selects item~$2$ with probability~$q$.
\end{itemize}

\paragraph{Consistency.}
When the prediction is correct, the expected reward of the policy is
\begin{align*}
&(1-\epsilon)\cdot p \cdot 1
+ (1-\epsilon)\cdot (1-p)\cdot 0
\\ &\qquad
+ \epsilon \cdot q \cdot \frac{1}{\epsilon}
+ \epsilon \cdot (1-q)\cdot 1
\\ &=
p + q + \epsilon \cdot (1-p-q).
\end{align*}

\paragraph{Robustness.}
When the prediction is adversarially chosen, the expected reward becomes
\begin{align*}
&(1-\epsilon)\cdot q \cdot 0
+ (1-\epsilon)\cdot (1-q)\cdot \frac{1}{\epsilon}
\\ &\qquad
+ \epsilon \cdot p \cdot 1
+ \epsilon \cdot (1-p)\cdot \frac{1}{\epsilon}
\\ &=
2 - \bigl(p + q + \epsilon(1-p-q)\bigr).
\end{align*}

Normalizing by $\opt$, the consistency and robustness guarantees satisfy
\[
\con + \rob
\le
\frac{2}{2-\epsilon}.
\]
Since $\epsilon>0$ is arbitrary, letting $\epsilon\to 0$ yields
$\con+\rob \le 1$.
\end{proof}

We now show that this bound is tight.

\begin{lemma}\label{lem:general-achievability}
There exists an algorithm such that for any $\rob \in [0, \tfrac{1}{2}]$,
the algorithm is $\con$-consistent and $\rob$-robust, with
$\con+\rob = 1$.
\end{lemma}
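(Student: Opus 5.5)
The plan is to use a randomized mixture of two base policies. Policy $P$ trusts the prediction: it accepts item $i$ as soon as $\tilde v_i = \max_j \tilde v_j$ and $X_i \ge \tilde v_i$, i.e., it selects the item predicted to be the maximum. Policy $Q$ is a worst-case optimal prophet algorithm that ignores $\tilde v$, for example the threshold rule of \citet{Kleinberg2012} with $\tau = \tfrac12\opt$. Given $\rob \in [0,\tfrac12]$, the algorithm flips a coin once, before any arrival. With probability $\lambda = 2\rob \in [0,1]$ it runs $Q$ for the whole sequence, and with probability $1-\lambda$ it runs $P$. It is a single algorithm parametrized by $\rob$, as the statement requires.

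\textbf{Robustness.} Policy $P$ always gets at least $0$. Policy $Q$ gets at least $\tfrac12\opt$ regardless of the predictions. By linearity of expectation over the initial coin, the mixture gets at least $\lambda\cdot\tfrac12\opt = \rob\cdot\opt$.

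\textbf{Consistency.} When the prediction is correct ($\tilde v_i = X_i$ for all $i$), policy $P$ knows the realized vector. It can therefore select an index attaining the maximum: it reaches that index, and $X_i \ge \tilde v_i$ holds with equality. Ties are broken by taking the first such index, which keeps $P$ online. So $P$ earns exactly $\opt$. Policy $Q$ still earns at least $\tfrac12\opt$, so the mixture earns at least $(1-\lambda)\opt + \lambda\cdot\tfrac12\opt = (1-\rob)\opt$. This gives $\con \ge 1-\rob$. Combined with Lemma~\ref{lem:general-upper-bound}, which gives $\con+\rob\le 1$, we obtain exactly $\con+\rob=1$.

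\textbf{Expected obstacle.} The only delicate point is making $P$ a legitimate online policy that attains $\opt$ under correct predictions. I will check that the rule ``stop at the first $i$ with $\tilde v_i = \max_j\tilde v_j$ and $X_i\ge\tilde v_i$'' uses only information available at time $i$. It does, since the full vector $\tilde v$ is revealed upfront. I will also check that randomizing once at the start, rather than per step, is what makes the linearity argument clean. A minor edge case is $\rob=\tfrac12$: there $\lambda=1$ and the algorithm reduces to $Q$, giving $\con=\rob=\tfrac12$, which is consistent with the claim.
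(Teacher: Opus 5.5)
Your proposal is correct and matches the paper's proof: the paper also randomizes once between a policy that trusts the prediction (with probability $p = 1-2\rob$) and a worst-case $\tfrac12$-competitive threshold algorithm, obtaining $\con = p + (1-p)\cdot\tfrac12 = 1-\rob$ and robustness $(1-p)\cdot\tfrac12 = \rob$, exactly your computation with $\lambda = 1-p$. Your explicit online rule (stop at the first index of maximal predicted value with $X_i \ge \tilde v_i$) is a harmless sharpening of the paper's informal ``select the item with the highest predicted value.''
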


\begin{proof}
Consider the following strategy.
With probability~$p$, the algorithm follows a \emph{prediction-based policy},
i.e., a policy that selects the item with the highest predicted value.
With probability~$1-p$, it ignores the prediction and executes a worst-case
$c$-competitive algorithm.

If the prediction is correct, the prediction-based policy obtains value~$\opt$
with probability~$p$, and with probability~$1-p$ the algorithm obtains reward
at least $c\cdot \opt$.
Hence, the expected reward is at least
$p\cdot \opt + (1-p)\cdot c\cdot \opt$,
yielding a consistency guarantee of
\[
\con = p + (1-p)c.
\]

If the prediction is adversarial, the prediction-based policy may obtain zero reward.
However, with probability~$1-p$ the algorithm ignores the prediction and executes the worst-case $c$-competitive algorithm, obtaining expected reward at least $(1-p)c\cdot \opt$.
This yields a robustness guarantee of
\[
\rob = (1-p)c.
\]

Therefore,
\[
\con+\rob = p + 2c(1-p).
\]
For $c=\tfrac12$ (e.g., the classical threshold algorithm), this gives
\[
\con+\rob = 1.
\]

Finally, for any desired $\rob \in [0, \tfrac12]$, one can choose
$p = 1 - 2\rob \in [0,1]$, which yields the corresponding robustness guarantee $\con = 1-\rob$ and establishes $\con+\rob=1$.
\end{proof}

\section{Motivating example for known $\alpha$.}
\label{app:motivating_example}
We illustrate that, even in very simple instances, selecting the appropriate base threshold is non-trivial once predictions are incorporated.

Consider an instance consisting of three independent \emph{sorted scaled Bernoulli} random variables with prediction quality parameter $\alpha = 1/2$:
\[
X_1 = (1,1), \qquad X_2 = (2,p_2), \qquad X_3 = (4,p_3),
\]
where $X_i = (v_i,p_i)$ denotes a Bernoulli random variable that takes value $v_i$ with probability $p_i$ and $0$ otherwise.

The prophet’s expected value is
\[
\opt
= \mathbb{E}\!\left[\max\{X_1,X_2,X_3\}\right]
= 4p_3 - 2p_2(p_3-1) + (p_2-1)(p_3-1).
\]

We evaluate the performance of prediction-augmented threshold algorithms that post a base threshold equal to one of the available values $v_1=1$, $v_2=2$, or $v_3=4$.
Let $\mathrm{th}(i)$ denote the expected value obtained when the base threshold is set to $v_i$. A direct calculation yields
\[
\mathrm{th}(1) = 1- p_3 + 2\cdot p_2\cdot p_3 + 4\cdot(1-p_2)\cdot p_3, \qquad
\mathrm{th}(2) = 2p_2 + 4p_3(1-p_2), \qquad
\mathrm{th}(3) = 4p_3.
\]

We now examine how the optimal choice of threshold depends on the parameters $p_2$
and $p_3$.

\medskip
\noindent\textbf{Case 1: $p_3 = 0.2$.}
In this case, achieving a competitive ratio of at least $2/3$ requires:
\begin{itemize}
    \item choosing threshold $v_1=1$ when $p_2 \le \tfrac{2}{5}$, and
    \item choosing threshold $v_2=2$ when $p_2 > \tfrac{4}{7}$.
\end{itemize}

\medskip
\noindent\textbf{Case 2: $p_2 = 0.7$.}
Here, achieving a competitive ratio of at least $2/3$ requires:
\begin{itemize}
    \item choosing threshold $v_2=2$ when $p_3 \le \tfrac{17}{37}$, and
    \item choosing threshold $v_3=4$ when $p_3 > \tfrac{4}{5}$.
\end{itemize}

\medskip
This example demonstrates that the optimal base threshold depends delicately on the underlying distribution parameters, even when the prediction quality $\alpha$ is fixed.
This sensitivity motivates the need for a principled method for selecting thresholds, and highlights the challenge faced by algorithms that must perform well across a wide
range of instances.

\section{Proof of Lemma \ref{lem:transform}}

We present a sequence of transformations that converts an arbitrary
instance into a sorted scaled Bernoulli instance. We show that each transformation is
\emph{pessimistic} for threshold algorithms: it does not increase the algorithm’s expected reward (for any fixed base threshold, and for any prediction quality parameter), and thus cannot improve the competitive ratio. Consequently, the worst-case performance is attained within the class of sorted scaled Bernoulli instances.

\subsubsection*{Reduction from general to discrete distributions}

We show that it suffices to analyze instances consisting of discrete random
variables. Fix an instance
$I = (D_1,\dots,D_n)$, a base threshold $\tau$, and a prediction quality parameter
$\alpha \in [0,1]$. Let $\delta>0$ be arbitrary, and define
\[
\alpha' \coloneqq \frac{\alpha}{1+\delta}.
\]
We construct a discretized instance
$I' = (D'_1,\dots,D'_n)$ such that the performance of the prediction-augmented
threshold algorithm on $I'$ lower-bounds its performance on $I$, up to the
adjusted parameter~$\alpha'$.

\medskip
Let $X_i \sim D_i$ and define
\[
X_{\max} = \max_{1 \le i \le n} X_i, \qquad
M = \mathbb{E}[X_{\max}].
\]
Define the truncation level
\[
M' \coloneqq \inf\{v \ge 0 : \mathbb{E}[(X_{\max}-v)^+] \le \delta M\}.
\]
Since $\lim_{v\to\infty} \mathbb{E}[(X_{\max}-v)^+]=0$, the value $M'$ is well defined.

Next, define the discrete value set
\[
V \coloneqq \left\{ \delta(1+\delta)^k M :
0 \le k \le
\left\lceil \log_{1+\delta} \frac{M'}{\delta M} \right\rceil \right\}.
\]
For each $i\in[n]$, define a discretized random variable
\[
X_i' \coloneqq \min\{v \in V : v \ge \min(X_i,M')\},
\]
and let $D_i'$ denote its distribution. Thus, $X_i'$ is obtained by truncating
$X_i$ at $M'$ and rounding the result \emph{up} to the nearest value in $V$.

By construction, each $X_i'$ is a discrete nonnegative random variable.
Moreover, truncation ensures that the contribution of values above $M'$ is at
most $\delta M$, while rounding up preserves values within a multiplicative
factor of $1+\delta$. Hence,
\[
\mathbb{E}\!\left[\max_i X_i'\right]
\ge (1-\delta) M.
\]

We now compare the performance of prediction-augmented threshold algorithms on
$I$ and $I'$.

\begin{claim}
\label{cl:general_to_discrete}
For any realization $R$ of $I$ and the corresponding realization $R'$ of $I'$
obtained by applying the above rounding coordinatewise,
\[
\thr\!\left(R,\ \max\{\tau,\, \alpha \cdot \xm(R)\}\right)
\;\ge\;
\thr\!\left(R',\ \max\{\tau,\, \alpha' \cdot \xm(R')\}\right).
\]
Consequently,
\[
\thra(I,\tau) \ \ge\ \thr_{\alpha'}(I',\tau).
\]
\end{claim}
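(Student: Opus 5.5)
The plan is to prove the pathwise inequality first and then take expectations. Since the coupling between $R$ and $R'$ is coordinatewise, the ``Consequently'' part follows immediately once the realization-level statement holds. Fix $R=(v_1,\dots,v_n)$ and let $v_i'=\min\{u\in V: u\ge \min(v_i,M')\}$.

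\textbf{Step 1: structural properties of the rounding map} $\phi(x)=\min\{u\in V: u\ge \min(x,M')\}$. I will record three facts.
\begin{enumerate}
\item[(a)] $\phi$ is weakly increasing, so the argmax index and the relative order of the values are preserved. In particular $\xm(R')=\phi(\xm(R))$.
\item[(b)] For $\delta M\le x\le M'$ we have $x\le \phi(x)\le (1+\delta)x$.
\item[(c)] Values above $M'$ are collapsed to the single level $\phi(M')$.
\end{enumerate}
From (b), on the untruncated range, $\alpha'\xm(R')=\frac{\alpha}{1+\delta}\phi(\xm(R))\le \alpha\,\xm(R)$. So the prediction part of the effective threshold on $R'$ is no larger than on $R$, measured in the rounded scale.

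\textbf{Step 2: compare stopping indices.} Write $\zeta=\max\{\tau,\alpha\xm(R)\}$ and $\zeta'=\max\{\tau,\alpha'\xm(R')\}$. I will try to show that every index accepted under $\zeta$ on $R$ is also acceptable under $\zeta'$ on $R'$. Indeed, $v_i\ge\zeta$ gives $v_i'\ge v_i\ge\tau$, and $v_i'\ge v_i\ge \alpha\xm(R)\ge\alpha'\xm(R')$. Hence $i^*(R',\zeta')\le i^*(R,\zeta)$.

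\textbf{Step 3: close the comparison of values.} Here I will split into two cases.
\begin{itemize}
\item[(i)] $i^*(R',\zeta')=i^*(R,\zeta)$. The two realized values are $v_i'$ and $v_i$ at the same index. These agree up to the factor $1+\delta$ from rounding, which is absorbed by the accounting already done in $\E[\max_i X_i']\ge(1-\delta)M$.
\item[(ii)] $i^*(R',\zeta')<i^*(R,\zeta)$, i.e.\ $R'$ stops earlier. I intend to argue that this only happens at indices with $v_j'\ge\zeta'$ but $v_j<\zeta$. Those are values that were pushed across the threshold by rounding up, or by the $\delta$-slack between $\alpha$ and $\alpha'$.
\end{itemize}

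\textbf{Main obstacle: case (ii).} Rounding \emph{up} can push a value that is rejected in $R$ above $\tau$ in $R'$, so the stated inequality is not automatic pathwise. A single variable just below $\tau$ that rounds above it is an example: $R$ then collects $0$ while $R'$ collects a positive value. My first attempt is to use the fact, established in \eqref{eq:thr}, that posting $\tau$ is equivalent to posting the smallest support value above it. This lets me assume without loss of generality that $\tau\in V$, or replace $\tau$ by $\phi(\tau)$. Then $v_j<\tau\le v_j'$ is impossible, because $\phi(v_j)<\tau$ whenever $v_j<\tau$ and $\tau$ is a grid point. With this normalization the only remaining discrepancies come from the prediction term and from truncation at $M'$.

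For the prediction term, the factor $1+\delta$ in $\alpha'$ is designed exactly to absorb them: $v_j<\alpha\xm(R)$ implies $\phi(v_j)\le \phi(\alpha\xm(R))$. I then need $\phi(\alpha\xm(R))\le\alpha'\phi(\xm(R))$ up to one grid step, which I would handle by defining the grid comparison carefully.

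For truncation, the realizations with $\xm(R)>M'$ contribute at most $\delta M$ in total. If the pathwise inequality fails on this event, I would fall back to proving the expectation inequality $\thra(I,\tau)\ge \thr_{\alpha'}(I',\tau)-O(\delta)M$. That weaker form still suffices for Lemma~\ref{lem:transform} after letting $\delta\to0$.
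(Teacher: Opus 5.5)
Your Step 2 is correct and you locate the obstacle correctly, but Step 3 has a genuine gap. It cannot be closed with $\phi$ as defined, because $\phi$ rounds \emph{up}.

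\begin{itemize}
\item In case (i) the claim needs $v_i'\le v_i$, whereas $\phi(v_i)>v_i$ whenever $v_i\in[\delta M,M']$ is off the grid. Appealing to $\E[\max_i X_i']\ge(1-\delta)M$ does not help, since that bounds $\opt$, not the algorithm's value.
\item Your repair of case (ii) rests on a false step. For a grid point $\tau$ and $v_j<\tau$ you only get $\phi(v_j)\le\tau$. Equality holds for every $v_j\in(\tau/(1+\delta),\tau)$ below $M'$, so a value rejected in $R$ is accepted in $R'$, exactly as in your own obstacle example.
\item The additive fallback also fails when the same $\tau$ is used on both instances. Take $\delta=0.01$, $X_1\equiv 1$, and $X_2=10$ w.p.\ $0.01$ (else $0$). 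Then $M=1.09$, $M'=8.91$, and $1\notin V$. For $\alpha=0$ and $\tau=\phi(1)\in(1,1.01)$, you get $\thr(I,\tau)=0.1$ but $\thr(I',\tau)=\tau$. That is a loss of order $\opt$, not $O(\delta)M$.
\end{itemize}
So with upward rounding the statement itself is false, both pathwise and in expectation. No argument along your lines can rescue it without changing the construction.

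The missing idea is to discretize \emph{downward}. Truncate at $M'$, then round down onto a ratio-$(1+\delta)$ grid containing $\tau$, sending values below $\delta M$ to $0$. This gives:
\begin{itemize}
\item $v'\le v$;
\item $v'\ge v/(1+\delta)$ for untruncated $v\ge\delta M$;
\item $\xm(R')\le\xm(R)$;
\item $v\ge\tau\iff v'\ge\tau$.
\end{itemize}
These are exactly the properties the paper's Case A/Case B argument invokes, even though the text phrases the construction as rounding up. It uses $X_i\ge X_i'$ at a common stopping index, $X_r'\le X_r$ when $R'$ stops earlier, $X_i'\ge X_i/(1+\delta)$ to exclude $R'$ stopping later, and $\tau\in V$ to transfer membership in $\{\cdot\ge\tau\}$.

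With these properties your Step 2 still holds. Use $v_i'\ge v_i/(1+\delta)\ge\alpha'\xm(R)\ge\alpha'\xm(R')$, or $v_i'=\max_j v_j'$ if $v_i$ is truncated. The proof then closes immediately:
\begin{itemize}
\item If $R$ stops at $i$ and $R'$ at $r\le i$, either $r=i$ and $v_i'\le v_i$, or $r<i$ and $v_r'\le v_r<\zeta\le v_i$.
\item If $R$ never stops, then $\xm(R)<\tau$, so all $v_j'<\tau$ and $R'$ never stops either.
\end{itemize}
This single monotone comparison is a slightly cleaner route than the paper's case split, but it only works in the downward direction.
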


\begin{proof}
Fix a realization $R=(X_1,\dots,X_n)$ and let $R'=(X_1',\dots,X_n')$ be its
discretized counterpart. By construction,
\[
X_i' \le (1+\delta)X_i
\quad\text{and hence}\quad
\alpha \cdot \xm(R) \ge \alpha' \cdot \xm(R').
\]

We distinguish two cases.

\medskip
\noindent\textbf{Case A:} $\tau \ge \alpha \cdot \xm(R)$.  
In this case, the effective threshold in $R$ is $\tau$. Suppose that
\[
\thr\!\left(R',\max\{\tau,\alpha' \cdot \xm(R')\}\right)=X_i' > 0.
\]
Then $X_i' \ge \tau$ and $X_j' < \tau$ for all $j < i$. Since $\tau \in V$ and the
rounding operation is monotone, it follows that $X_i \ge \tau$ and
$X_j < \tau$ for all $j < i$ as well. Hence,
\[
\thr\!\left(R,\max\{\tau,\alpha \cdot \xm(R)\}\right) = X_i \ge X_i',
\]
as claimed.

\medskip
\noindent\textbf{Case B:} $\alpha \cdot \xm(R) \ge \tau$.  
Let
\[
\thr\!\left(R,\max\{\tau,\alpha\xm(R)\}\right)=X_i
\quad\text{and}\quad
\thr\!\left(R',\max\{\tau,\alpha'\xm(R')\}\right)=X_r'.
\]
Note that $i\neq\emptyset$ since the effective threshold is at most the realized
maximum.

If $i=r$, then $X_i \ge X_i'$ by construction.

If $i>r$, then $X_r' \leq X_r < \alpha \cdot \xm(R)$ while $X_i \ge \alpha \cdot \xm(R)$,
implying $X_i > X_r'$.

Finally, the case $i<r$ is impossible: if $X_i \ge \alpha \cdot \xm(R)$, then
\[
X_i' \ge \frac{X_i}{1+\delta}
\ge \frac{\alpha \cdot \xm(R)}{1+\delta}
= \alpha' \cdot \xm(R)
\ge \alpha' \cdot \xm(R'),
\]
and also $X_i' \ge \tau$. Hence $X_i'$ would have been selected before index~$r$,
a contradiction.

In all cases, the realized value under $R$ is at least that under $R'$.
\end{proof}

Since $\delta>0$ is arbitrary, we conclude that restricting attention to instances with discrete random variables incurs no loss of generality.

\subsubsection*{Discrete to Scaled}
Consider a general discrete instance $I = (X_1,\dots,X_n)$.
For each $i \in [n]$, suppose that $X_i$ is supported on a finite set
\[
\mathcal{D}_i = \{0, v_i^1, \dots, v_i^k\},
\qquad
0 < v_i^1 < \cdots < v_i^k.
\]
We construct $k$ new random variables $X_i^1,\dots,X_i^k$, where each $X_i^j$ is a Bernoulli
variable supported on $\{0, v_i^j\}$ with probability
\[
\Pr[X_i^j = v_i^j]
    = \frac{\Pr[X_i = v_i^j]}{\Pr[X_i \le v_i^j]}.
\]
By construction, the maximum of the set $\{X_i^1,\dots,X_i^k\}$ has the same distribution as $X_i$,
that is,
\[
X_i^{\max} := \max_{1\le j\le k} X_i^j \ \overset{d}{=} \ X_i.
\]
Let $\tilde{I}$ denote the expanded instance in which each variable $X_i$ is replaced by the
collection $X_i^1,\dots,X_i^k$.

\begin{claim}
For any base threshold $\tau$ and any prediction quality $\alpha$,  
\[
\thra(I,\tau) \ \ge \ \thra(\tilde{I}, \tau).
\]
\end{claim}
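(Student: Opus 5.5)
The plan is a pointwise coupling argument. We realize the original instance $I$ and the expanded instance $\tilde I$ on a common probability space so that, for every realization, the two instances have the same global maximum and the same stopping block, and the value picked in $\tilde I$ is at most the value picked in $I$. Throughout, the variables of $\tilde I$ are presented in the order $X_1^1,\dots,X_1^k,X_2^1,\dots,X_2^k,\dots$, i.e.\ blockwise in the original order and, within block $i$, by increasing support value $v_i^1<\dots<v_i^k$. I take this to be the order intended by the listing $X_i^1,\dots,X_i^k$. It is essential, since the within-block step below uses that the first realized element at or above the threshold is the smallest one.

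\textbf{Step 1 (coupling).} First I will verify the distributional identity $X_i^{\max}\overset{d}{=}X_i$. By independence, $\Pr[X_i^{\max}\le v_i^j]=\prod_{l>j}(1-\Pr[X_i^l=v_i^l])$. Moreover $1-\Pr[X_i^l=v_i^l]=\Pr[X_i\le v_i^{l-1}]/\Pr[X_i\le v_i^l]$, so the product telescopes to $\Pr[X_i\le v_i^j]$. Given this, I draw the blocks $(X_i^1,\dots,X_i^k)$ independently across $i$ and set $X_i:=X_i^{\max}$. This gives a valid realization $R$ of $I$ coupled with the realization $\tilde R$ of $\tilde I$. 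Under the coupling, $\xm(\tilde R)=\max_i X_i^{\max}=\xm(R)$. Hence the effective threshold $\zeta=\max\{\tau,\alpha\,\xm(R)\}$ in the definition of $\thra$ is identical for the two realizations.

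\textbf{Step 2 (same stopping block).} Block $i$ of $\tilde R$ contains an element $\ge\zeta$ if and only if $X_i^{\max}\ge \zeta$, which holds if and only if $X_i\ge\zeta$. Hence the threshold rule with threshold $\zeta$ passes over blocks $1,\dots,i-1$ in $\tilde R$ exactly when it passes over $X_1,\dots,X_{i-1}$ in $R$. It stops inside block $i^*$ exactly when it stops at $X_{i^*}$ in $R$, and it selects nothing in either case when no value reaches $\zeta$.

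\textbf{Step 3 (value comparison).} Inside block $i^*$, the algorithm on $\tilde R$ takes the first realized $X_{i^*}^j\ge\zeta$. This value is at most $X_{i^*}^{\max}=X_{i^*}$, which is what the algorithm on $R$ takes. Therefore $\thra(R,\tau)\ge\thra(\tilde R,\tau)$ for every coupled pair. Taking expectations, and using that the marginals of $R$ and $\tilde R$ are exactly $I$ and $\tilde I$, gives $\thra(I,\tau)\ge\thra(\tilde I,\tau)$ for all $\tau$ and $\alpha$.

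The only genuinely delicate point is Step 1: checking the telescoping formula, including that each Bernoulli probability lies in $[0,1]$, which holds since $\Pr[X_i=v_i^j]\le\Pr[X_i\le v_i^j]$. The other delicate point is making explicit that the effective threshold depends on the realization only through the maximum, which is preserved exactly by the coupling. Everything after that is a direct comparison of first-passage indices.
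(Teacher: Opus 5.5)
Your proof is correct and uses the same coupling idea as the paper. Both arguments preserve each coordinate's maximum, hence $\xm$ and the effective threshold $\zeta$, note that both runs stop in the same block, and bound the value taken there by that block's maximum; your version (generating $R$ from $\tilde R$ via $X_i:=X_i^{\max}$ and checking the telescoping identity) is a cleaner formalization than the paper's. One small correction: contrary to your remark, the within-block order is not essential, since Step 3 only uses that every element of block $i^*$ is at most $X_{i^*}^{\max}$; only the block order matters, and that is for Step 2.
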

\begin{proof}
Fix a realization $R = (v_1,\dots,v_n)$ of $I$, where $v_i \in \mathcal{D}_i$, and suppose
$v_i = v_i^j$ for some $j$.  
We couple this realization with a corresponding realization in the expanded instance $\tilde{I}$ as follows:
all variables $X_\ell$ with $\ell \neq i$ take the same value as in $R$, and among the Bernoulli copies
$X_i^1,\dots,X_i^k$, exactly one copy (namely $X_i^j$) takes the value $v_i^j$, while all higher-indexed copies
take the value $0$.  
This ensures that the realized maximum at coordinate~$i$ is the same in both instances, and hence
$\xm(R)$ is preserved.

Under prediction quality $\alpha$, the prediction-augmented threshold used by the algorithm is
\[
\zeta = \max\{\tau,\ \alpha \cdot \xm(R)\},
\]
and therefore both instances use the same effective threshold~$\zeta$.

If either $\zeta > v_i^j$ or $\zeta \le v_i^j$ but some $\ell < i$ satisfies $v_\ell \ge \zeta$,
then the algorithm selects a coordinate strictly before $i$ in both the original and expanded instances.
Thus the chosen index $i^*$ is the same, and the output values coincide.

Otherwise, we must have $\zeta \le v_i^j$ and $\zeta > \max_{\ell < i} v_\ell$, meaning that
coordinate~$i$ is the first to exceed the threshold.  
In the original instance $I$, the algorithm outputs $v_i^j$.  
In the expanded instance $\tilde{I}$, the algorithm outputs at most $v_i^j$, since among the Bernoulli
copies only $X_i^j$ equals $v_i^j$ while all others are~0.

Thus in every coupled realization, the algorithm’s output under $\tilde{I}$ is never larger than its
output under $I$.  
Taking expectations over realizations yields
\[
\thra(I,\tau) \ \ge \ \thra(\tilde{I}, \tau),
\]
completing the proof.
\end{proof}

Second, we prove that for an instance $I$ of scaled Bernoulli variables, we may assume without
loss of generality that it is sorted.

Formally, recall that in such an instance each $X_i$ is supported on $\{0,v_i\}$.
Let $I = (X_1,\dots,X_n)$ be any such instance, and let $\pi$ be a permutation of $[n]$ such that
\[
v_{\pi(1)} \le v_{\pi(2)} \le \cdots \le v_{\pi(n)}.
\]
Define the sorted instance
\[
I^{\mathrm{sort}} = (Y_1,\dots,Y_n) \quad\text{where } Y_k := X_{\pi(k)}.
\]

\begin{claim}
For any base threshold $\tau$ and prediction quality $\alpha$,
\[
\thra(I,\tau) \ \ge\ \thra(I^{\mathrm{sort}},\tau).
\]
Thus, it suffices to analyze sorted scaled Bernoulli instances.
\end{claim}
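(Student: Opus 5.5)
We claim that sorting a scaled Bernoulli instance by value can only hurt the prediction-augmented threshold algorithm. The natural route is an adjacent-swap argument. Every permutation can be turned into the sorted order by a sequence of adjacent transpositions, each of which moves a larger value behind a smaller one (bubble sort). So it suffices to show the following: if $v_j > v_{j+1}$ and we exchange the positions of $X_j$ and $X_{j+1}$, then $\thra(\cdot,\tau)$ does not increase. The base threshold $\tau$ and $\alpha$ are fixed throughout, and the swap leaves the distribution of $\xm$ unchanged, because the maximum is order-invariant.

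To prove the swap inequality, we couple the two instances on the same realization of all variables. Call the original order $I_1$ and the swapped order $I_2$. Fix a realization $R$. The effective threshold $\zeta=\max\{\tau,\alpha\xm(R)\}$ depends only on the multiset of realized values, so it is identical in both orders. The two runs therefore agree unless the algorithm stops at position $j$ or $j+1$.

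If some earlier index $\ell<j$ already has $v_\ell\ge\zeta$, both runs output the same value. If neither $X_j$ nor $X_{j+1}$ clears $\zeta$, the runs proceed identically afterwards. It remains to consider the case where at least one of the two clears $\zeta$. Write $a=v_j$ and $b=v_{j+1}$, with $a>b$, for the realized positive values.

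\begin{itemize}
\item If only the larger value $a$ clears $\zeta$, both runs select $a$.
\item If only $b$ clears $\zeta$, then $b\ge\zeta$ while $a<\zeta$, which contradicts $a>b$. This case cannot occur.
\item If both $a$ and $b$ clear $\zeta$, the original order picks $a$ and the swapped order picks $b<a$.
\end{itemize}

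The case analysis must also handle realizations where one of the two variables is $0$. These fall under the subcases above: if $X_j$ realizes $0$, only $X_{j+1}$ can clear, and both orders select it.

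Hence, pointwise on every realization, $\thr(R,\zeta)$ under $I_1$ is at least its value under $I_2$. Taking expectations gives $\thra(I_1,\tau)\ge\thra(I_2,\tau)$. Chaining over the bubble-sort transpositions yields $\thra(I,\tau)\ge\thra(I^{\mathrm{sort}},\tau)$.

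Finally, since $\opt$ is invariant under permutation, the competitive ratio of the sorted instance is no larger. Therefore it suffices to analyze sorted scaled Bernoulli instances, completing the chain of reductions for Lemma~\ref{lem:transform}.

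The only delicate point is confirming that the effective threshold is unaffected by the reordering. This holds because $\zeta$ depends on $\xm(R)$ rather than on arrival order, and Claim~\ref{claim:threshold_equivalence} lets us work with the worst-case prediction $\alpha\xm(R)$. With that settled, the rest is the routine case check above.
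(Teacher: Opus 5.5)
Your proposal is correct and follows essentially the same route as the paper: an adjacent-swap (bubble-sort) reduction, a coupling on identical realizations using that $\zeta=\max\{\tau,\alpha\,\xm(R)\}$ is order-invariant, and the same case check on which of the two swapped values clears $\zeta$. One minor caveat, shared by the paper's proof, is that your step ``if $X_j$ realizes $0$, only $X_{j+1}$ can clear'' requires $\zeta>0$, which fails only in the degenerate case $\tau=0$, $\alpha=0$ (there the algorithm always stops at index $1$ and the inequality can actually fail), so you should state the assumption $\tau>0$ or $\alpha>0$.
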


\begin{proof}
It is enough to show that swapping any adjacent inverted pair cannot increase the
algorithm’s expected value.  
Consider two instances that differ only in swapping $X_i$ and $X_{i+1}$, where
$v_i > v_{i+1}$.  
Couple the two instances so that every $X_\ell$ has the same realized value in both;
only their order differs.  
Since the multiset of realized values is unchanged, the prophet value $\xm(R)$ and the
effective threshold
\[
\zeta = \max\{\tau,\, \alpha \cdot \xm(R)\}
\]
are identical in both instances.

Up to position $i-1$, both instances behave identically.  
If neither or only one of the two values exceeds $\zeta$, both instances either skip or select
the same value, so the outputs match.  
If both exceed $\zeta$, then in the original order the larger value $v_i$ appears first and is
selected, while in the swapped order the smaller value $v_{i+1}$ appears first and is selected.
Thus the swapped instance never yields a larger output.

Taking expectations over the coupling, we obtain
\[
\thra(I,\tau) \ge \thra(I',\tau),
\]
where $I'$ is the instance with the pair swapped.  
Successively eliminating all inversions yields the sorted instance $I^{\mathrm{sort}}$ and the claim.
\end{proof}

The lemma follows immediately by combining the three preceding claims: the first reduction shows that we may assume, without loss of generality, that each random variable is discrete the second reduction
shows that we may assume, without loss of generality, that each random variable is supported on $\{0,v_i\}$, and the second reduction shows that we may further assume the values are sorted in nondecreasing order. Therefore, it suffices to analyze sorted scaled Bernoulli instances.

\section{Deferred Proofs}\label{app:deferred_proofs}

\begin{proof}[Proof of Lemma~\ref{lem:known_alpha_lb}]
Consider an instance with two random variables. Let $X_1 \equiv 1$ be deterministic, and let
\[
X_2 =
\begin{cases}
\frac{1}{\alpha}, & \text{with probability } \alpha,\\
0, & \text{otherwise}.
\end{cases}
\]
The prophet obtains
\[
\opt
= \mathbb{E}[\max\{X_1,X_2\}]
= \alpha\cdot \frac{1}{\alpha} + (1-\alpha)\cdot 1
= 2-\alpha.
\]

Now consider the (fixed) prediction $\tilde v \equiv 1$. Since $\max\{X_1,X_2\} \ge 1$ always
and, in this instance, $\max\{X_1,X_2\} \le \frac{1}{\alpha}$, we have
\[
\tilde v = 1 \;\ge\; \alpha \cdot \max\{X_1,X_2\}.
\]

Under this prediction, no online algorithm can obtain an expected value exceeding~$1$.
Indeed, upon observing $X_1 = 1$, the algorithm either accepts and receives~$1$, or rejects
and proceeds to $X_2$, whose expectation is
\[
\mathbb{E}[X_2]
= \alpha\cdot \frac{1}{\alpha} + (1-\alpha)\cdot 0
= 1.
\]
Consequently, the best achievable competitive ratio is at most
\[
\frac{1}{\opt} = \frac{1}{2-\alpha},
\]
as claimed.
\end{proof}

\begin{proof}[Proof of Lemma \ref{lem:g(p) func}]
To analyze the convexity, we compute the second derivative of $g$ with respect to $p$. We begin by simplifying the expression using the substitution $u = 1-p$. Note that since $p \in (0,1)$, we have $u \in (0,1)$.

Substituting $p = 1-u$ into the numerator:
\begin{align*}
4p^2 - 6p + 3 &= 4(1-u)^2 - 6(1-u) + 3 \\
&= 4(1 - 2u + u^2) - 6 + 6u + 3 \\
&= 4u^2 - 2u + 1.
\end{align*}
The function $g(p)$ can thus be rewritten more simply in terms of $u$:
\[
g(p) = \frac{4u^2 - 2u + 1}{4u} = u - \frac{1}{2} + \frac{1}{4}u^{-1}.
\]
Differentiating twice using the chain rule and substituting back $u = 1-p$, we obtain:
\[
g''(p) = \frac{1}{2(1-p)^3}.
\]
Hence  $g''(p) > 0$ for $p \in (0,1)$. Therefore, $g$ is strictly convex on $(0,1)$.

To prove the bound, we observe that the strictly convex function $g$ attains its global minimum at the critical point where $g'(p) = 0$.
Setting $g'(p) = 0$:
\[
\frac{1}{4(1-p)^2} - 1 = 0 \implies (1-p)^2 = \frac{1}{4} \implies p = \frac{1}{2}.
\]
Evaluating the function at this minimum:
\[
g\left(\frac{1}{2}\right) = \frac{4(1/4) - 6(1/2) + 3}{4(1/2)} = \frac{1}{2}.
\]
Thus, $g(p) \ge \frac{1}{2}$ for all $p \in (0,1)$.
\end{proof}

\begin{proof}[Proof of Proposition \ref{prop:reduced_optimization}]
Let $A(x_1, x_2) = k x_2 + x_1(1-x_2)$ and $B(x_0, x_1, x_2) = x_0(1-x_1)(1-x_2)$. The constraint is $A = tB$. The denominator of $f$ is $D = A + B$. Substituting $B = A/t$, we have $D = A(1 + 1/t) = \frac{t+1}{t}A$.

The numerator of $f$ is $N = x_1 + k x_2(1-x_1) = x_1 + k x_2 - k x_1 x_2$. We can relate this to $A$ as $N = A - (k-1)x_1 x_2$.
The objective function simplifies to
\[ f = \frac{N}{D} = \frac{A - (k-1)x_1 x_2}{\frac{t+1}{t}A} = \frac{t}{t+1} \left( 1 - (k-1) \frac{x_1 x_2}{A} \right). \]
Let $g(x_1, x_2) = \frac{x_1 x_2}{A} = \frac{x_1 x_2}{x_1 + k x_2 - x_1 x_2}$. Then $f = \frac{t}{t+1} (1 - (k-1)g)$.
The feasible region $\mathcal{F}$ for $(x_1, x_2)$ is defined by $x_1, x_2 \in [0,1]$ and the implicit constraint $x_0 \le 1$, which translates to $A \le t(1-x_1)(1-x_2)$.

Since $k>1$, minimizing $f$ is equivalent to maximizing $g(x_1, x_2)$ over $\mathcal{F}$. We compute the partial derivatives of $g$ using the quotient rule.

For $x_1$:

\begin{align*}
\frac{\partial g}{\partial x_1} 
= & \frac{(x_2)(x_1 + k x_2 - x_1 x_2) - (x_1 x_2)(1 - x_2)}{A^2}\\
= &  \frac{x_1 x_2 + k x_2^2 - x_1 x_2^2 - x_1 x_2 + x_1 x_2^2}{A^2} = \frac{k x_2^2}{A^2} \ge 0.
\end{align*}

For $x_2$:
\begin{align*}
  \frac{\partial g}{\partial x_2}
  = & \frac{(x_1)(x_1 + k x_2 - x_1 x_2) - (x_1 x_2)(k - x_1)}{A^2} \\
  = & \frac{x_1^2 + k x_1 x_2 - x_1^2 x_2 - k x_1 x_2 + x_1^2 x_2}{A^2} = \frac{x_1^2}{A^2} \ge 0.
\end{align*}

As $g$ is continuous and non-decreasing on the compact set $\mathcal{F}$, its maximum must be attained on the upper boundary $A = t(1-x_1)(1-x_2)$, which corresponds to $x_0 = 1$.
This is equivalent to minimizing the reciprocal $h(x_1, x_2) = 1/g = \frac{1}{x_2} + \frac{k}{x_1} - 1$ subject to the same constraint.
We minimize $\bar{h}= \frac{1}{x_2} + \frac{k}{x_1}$ subject to the constraint, which simplifies to
\[ (t+1)x_1 + (k+t)x_2 - (t+1)x_1x_2 = t. \]
We can solve for $x_1$ and substitute into $\bar{h}$ to obtain a function of $x_2$:
\[ x_1 = \frac{t - (k+t)x_2}{(t+1)(1-x_2)} \quad \implies \quad \bar{h}(x_2) = \frac{1}{x_2} + \frac{k(t+1)(1-x_2)}{t - (k+t)x_2}. \]
The derivative is
\[ \bar{h}'(x_2) = -\frac{1}{x_2^2} + \frac{k^2(t+1)}{(t - (k+t)x_2)^2}. \]
Setting $\bar{h}'(x_2) = 0$ and taking the positive root yields the optimal point
\[ x_2^* = \frac{t}{k+t + k\sqrt{t+1}}. \]
At this point, the maximum value of $g$ is found by minimizing $\bar{h}$. The minimum value of $\bar{h}-1$ is $1/g_{max}$:
\[ \bar{h}(x_2^*) - 1 = \frac{k(2+t+2\sqrt{t+1})}{t} \implies g_{max} = \frac{t}{k(2+t+2\sqrt{t+1})}. \]
Finally, we substitute $g_{max}$ back into the expression for $f$:
\begin{align*}
 f_{min}  =  \frac{t}{t+1} \left( 1 - (k-1)g_{max} \right) & = \frac{t}{t+1} \left( 1 - \frac{t(k-1)}{k(2+t+2\sqrt{t+1})} \right) \\
   =  \frac{t}{t+1} \left( \frac{k(2+t+2\sqrt{t+1}) - t(k-1)}{k(2+t+2\sqrt{t+1})} \right) &= \frac{t}{t+1} \left( \frac{2k + t + 2k\sqrt{t+1}}{k(2+t+2\sqrt{t+1})} \right). 
\end{align*}
This completes the proof.
\end{proof}

\begin{proof}[Proof of \Cref{lem:tight_oblivious_instance}]
For $\alpha\in (0,1)$, consider the following instance, where we assume that $\varepsilon$ is chosen sufficiently small such that $1+ \varepsilon<1/\alpha$.
\begin{align*}
X_0 &= 1 \quad \text{w.p}\quad \frac{1+\varepsilon}{1+2\varepsilon}\\
X_1 &= 1+\varepsilon \quad \text{w.p}\quad \frac{1}{2(1+\varepsilon)}\\ 
X_2 &= 1/\alpha \quad \text{w.p} \quad \frac{\alpha}{\alpha+1}
\end{align*}

To compute $\mathbb{E}[M \cdot \mathbbm{1}[M \ge 1+\varepsilon]]$, we observe that the condition $M \ge 1+\varepsilon$ is satisfied if $M = 1/\alpha$ or $M = 1+\varepsilon$. Since $1/\alpha$ is the unique maximum value possible, $M=1/\alpha$ occurs precisely when $X_2 = 1/\alpha$, regardless of the values of $X_1$ and $X_0$. The value $M=1+\varepsilon$ occurs when $X_1 = 1+\varepsilon$ and $X_2 = 0$. Summing these contributions yields:

\begin{align*}
\mathbb{E}[M \cdot \mathbbm{1}[M \ge 1+\varepsilon]] &= \frac{1}{\alpha} \cdot {\Pr}\left[X_2 = \frac{1}{\alpha}\right] + (1+\varepsilon) \cdot {\Pr}[X_1 = 1+\varepsilon, X_2 = 0] \\
&= \frac{1}{\alpha} \left( \frac{\alpha}{\alpha+1} \right) + (1+\varepsilon) \left( \frac{1}{2(1+\varepsilon)} \right) \left( \frac{1}{\alpha+1} \right) \\
&= \frac{1}{\alpha+1} + \frac{1}{2(\alpha+1)} \\
&= \frac{3}{2(\alpha+1)}.
\end{align*}

Next, we compute $\mathbb{E}[M \cdot \mathbbm{1}[M < 1+\varepsilon]]$. The condition $M < 1+\varepsilon$ implies that $M$ cannot take the values $1/\alpha$ or $1+\varepsilon$. The contribution is:

\begin{align*}
\mathbb{E}[M \cdot \mathbbm{1}[M < 1+\varepsilon]] &= 1 \cdot {\Pr}[X_0=1, X_1=0, X_2=0] \\
&= \left( \frac{1+\varepsilon}{1+2\varepsilon} \right) \left( 1 - \frac{1}{2(1+\varepsilon)} \right) \left( \frac{1}{\alpha+1} \right) \\
&= \left( \frac{1+\varepsilon}{1+2\varepsilon} \right) \left( \frac{1+2\varepsilon}{2(1+\varepsilon)} \right) \left( \frac{1}{\alpha+1} \right)\\
&= \frac{1}{2(\alpha+1)}.
\end{align*}

Hence, even though distributions are not continuous, $T_X(3/4)$ is well defined. In this case we obtain 
$T_X(3/4) = 1+\varepsilon$.
We also compute
\begin{equation*}
\talga(X,T_X(3/4)) = \frac{1}{2} + \frac{1+2\varepsilon}{2(\alpha+1)(1+\varepsilon)}.
\end{equation*} to obtain
\begin{align*}
\frac{\talga(X,T_X(3/4)) }{\mathbb{E}[M]} &= \frac{\alpha}{4} + \frac{1}{4} + \frac{(1+\varepsilon) + \varepsilon}{4(1+\varepsilon)} \\
&= \frac{1}{2} + \frac{\alpha}{4} + \frac{\varepsilon}{4(1+\varepsilon)},
\end{align*}
proving the claim.

For $\alpha = 1$, consider the instance $Y = (Y_0,Y_1)$ where $Y_0 = 1$ w.p $2/3$ and $Y_1 = 2$ w.p $1/2$. Then $\E[M\mathbbm{1}[M \geq 2]] = 1$ and $\E[M\mathbbm{1}[M < 2]] = (1/2)\cdot (2/3)=1/3$. Thus $T_Y(3/4) = 2$ and so $\talga(Y,T_Y(3/4)) = 1$ while $\E[M] = 4/3$, as desired.

Finally, recall that $\alpha = 0$ is just the standard prophet inequality. 
\end{proof}

\begin{proof}[Proof of Lemma \ref{lem:smallval fn decreasing}]
We introduce the substitution $u = \sqrt{1-c} \Rightarrow c = 1 - u^2$. 
We analyze the function's monotonicity with respect to $u$ on the domain $u \in [0, 1]$.

By the chain rule, $\frac{dg}{dc} = \frac{dg}{du} \cdot \frac{du}{dc}$. First, we compute
\[ \frac{du}{dc} = \frac{d}{dc}(1-c)^{1/2} = -\frac{1}{2\sqrt{1-c}} = -\frac{1}{2u}. \]
Since $u > 0$ for $c \in [0, 1)$, $\frac{du}{dc}$ is strictly negative on $[0, 1)$. We must show that $g(u)$ is strictly increasing.

We rewrite $g$ as a function of $u$ 
\[ g(u) = \frac{(2 - \alpha)u^2 + 2u + \alpha}{(u+1)^2}. \]
We differentiate $g(u)$ using the quotient rule:
\begin{align*}
    \frac{dg}{du} &= \frac{[2(2 - \alpha)u + 2](u+1)^2 - [(2 - \alpha)u^2 + 2u + \alpha][2(u+1)]}{(u+1)^4} \\
    &= \frac{(4 - 2\alpha)u^2 + (6 - 2\alpha)u + 2 - (4 - 2\alpha)u^2 - 4u - 2\alpha}{(u+1)^3} \\
    &= \frac{(2 - 2\alpha)u + (2 - 2\alpha)}{(u+1)^3}\\
    &= \frac{2(1 - \alpha)}{(u+1)^2}.
\end{align*}
Since $\alpha \in (0, 1)$, the numerator $2(1 - \alpha)$ is a positive constant. The denominator $(u+1)^2$ is also positive for all $u \in [0, 1]$.
Therefore, $\frac{dg}{du}$ is strictly positive on $[0, 1]$.

As $\frac{dg}{du} > 0$ on $[0, 1]$ and $\frac{du}{dc} < 0$ on $[0, 1)$, it follows that $\frac{dg}{dc} = \frac{dg}{du} \cdot \frac{du}{dc} < 0$ for all $c \in [0, 1)$.
Since $g(c)$ is continuous on the closed interval $[0, 1]$ and strictly decreasing on $[0, 1)$, it is strictly decreasing on the entire interval $[0, 1]$.
\end{proof}

\begin{proof}[Proof of \Cref{lem:T/alpha probability upper bound}]
By Markov's inequality,
\[
    \E[M(X)| M(X) \geq \bar{\tau}]\Pr[M(X) \geq \bar{\tau}] \geq \frac{\bar{\tau}}{\alpha} \cdot \Pr[M(X)\geq \bar{\tau}/\alpha]
\]
We may further upper bound the left hand side of the equation and apply the definition of $\bar{\tau}$.
\begin{align*}
& \E[M(X)| M(X) \geq \bar{\tau}]\Pr[M(X) \geq \bar{\tau}] \\
= & 3\E[M(X)| M(X) < \bar{\tau}]\Pr[M(X) < \bar{\tau}] \\
\leq &  3\cdot \bar{\tau} \cdot (1-\Pr[M(X) \geq \bar{\tau}])   
\end{align*}
Combining these equations we obtain
\begin{align*}
& 3\cdot \bar{\tau} \cdot (1-\Pr[M(X) \geq \bar{\tau}]) \geq \frac{\bar{\tau}}{\alpha} \cdot \Pr[M(X)\geq \bar{\tau}/\alpha]\\
\Rightarrow \quad & 
3 \geq \frac{1}{\alpha} \cdot \Pr[M(X)\geq \bar{\tau}/\alpha] + 3\cdot \Pr[M(X) \geq \bar{\tau}] \geq (1/\alpha + 3) \Pr[M(X)\geq \bar{\tau}/\alpha]\\
\Rightarrow \quad & \frac{1}{1/(3\alpha)+1} \geq  \Pr[M(X)\geq \bar{\tau}/\alpha].
\end{align*}
\end{proof}

\begin{proof}[Proof of \Cref{lem:order_stats_bound}]
Let $p_i = \Pr[X_i > L_1]$ and $r_i = \Pr[X_i > L_2]$. Since $L_1 > L_2$, we have $\{X_i > L_1\} \subseteq \{X_i > L_2\}$, which implies $p_i \leq r_i$ for all $i$.
Define the events $A = \{M_1 > L_1\}$, $B = \{M_2 > L_2\}$, and $C = \{M_1 > L_2\}$. The inequality to be shown is $\Pr[B \mid A] \leq \Pr[C]$.
Using the identity $\Pr[B \mid A] = 1 - \frac{\Pr[A \cap B^c]}{\Pr[A]}$, the target inequality is equivalent to:
\[
1 - \Pr(C) \leq \frac{\Pr[A \cap B^c]}{\Pr[A]} \iff \Pr[C^c] \Pr[A] \leq \Pr[A \cap B^c].
\]
We evaluate the terms as follows:
\begin{align*}
\Pr[C^c] &= \prod_{i=1}^n (1 - r_i), \\
\Pr[A] &= 1 - \prod_{i=1}^n (1 - p_i).
\end{align*}
The event $A \cap B^c$ corresponds to the case where at least one variable exceeds $L_1$, but at most one variable exceeds $L_2$. Since $L_1 > L_2$, any variable exceeding $L_1$ also exceeds $L_2$. Thus, $A \cap B^c$ occurs if and only if exactly one variable exceeds $L_2$, and that same variable also exceeds $L_1$. Therefore:
\[
\Pr[A \cap B^c] = \sum_{k=1}^n \left( p_k \prod_{j \neq k} (1 - r_j) \right) = \left( \prod_{i=1}^n (1-r_i) \right) \sum_{k=1}^n \frac{p_k}{1-r_k}.
\]
Substituting these into the required inequality and dividing by the common positive factor $\prod_{i=1}^n (1-r_i)$, we must show:
\[
1 - \prod_{i=1}^n (1 - p_i) \leq \sum_{k=1}^n \frac{p_k}{1-r_k}.
\]
By the Union Bound, we know that $1 - \prod_{i=1}^n (1 - p_i) \leq \sum_{k=1}^n p_k$.
Furthermore, we have $1-r_k \le 1$, which implies $p_k \leq \frac{p_k}{1-r_k}$.
Combining these bounds yields:
\[
1 - \prod_{i=1}^n (1 - p_i) \leq \sum_{k=1}^n p_k \leq \sum_{k=1}^n \frac{p_k}{1-r_k}.
\]
Thus the inequality holds.
\end{proof}

\begin{proof}[Proof of Lemma \ref{lem:compare_ratios}]
Let $L(x)$ be the left-hand side of the inequality. We can substitute $g(x)$ to simplify $L(x)$:
\[ L(x) = x \left( \frac{3x^2}{1+3x^2} \right) + \left( 1 - \frac{3x^2}{1+3x^2} \right) = \frac{3x^3+1}{1+3x^2} \]
We wish to prove that $L(x) \ge \frac{2+x^3}{3}$ for $x \in [0,1]$. Let $h(x)$ be the difference:
\[ h(x) = L(x) - \frac{2+x^3}{3} = \frac{3x^3+1}{1+3x^2} - \frac{2+x^3}{3} \]
To show $h(x) \ge 0$, we place the terms over a common denominator:
\[ h(x) = \frac{3(3x^3+1) - (2+x^3)(1+3x^2)}{3(1+3x^2)} \]
The denominator $3(1+3x^2)$ is strictly positive. Thus, we only need to prove that the numerator, $N(x) = 3(3x^3+1) - (2+x^3)(1+3x^2)$, is non-negative on $[0,1]$.
Expanding $N(x)$, we have:
\[ N(x) = (9x^3 + 3) - (2 + 6x^2 + x^3 + 3x^5) = -3x^5 + 8x^3 - 6x^2 + 1 \]
We analyze $N(x)$ on the interval $[0,1]$. First, we check the boundary values:
\[ N(0) = 1 > 0, \quad N(1) = 0. \]
Next, we find the derivative of $N(x)$ to determine its monotonicity:
\[ N'(x) = -15x^4 + 24x^2 - 12x = -3x(5x^3 - 8x + 4) \]
Let $p(x) = 5x^3 - 8x + 4$. We analyze the sign of $p(x)$ on $[0,1]$.
The derivative is $p'(x) = 15x^2 - 8$, which has a single root in $[0,1]$ at $\sqrt{8/15}$. 
The value at this minimum is:
\[ p(\sqrt{8/15}) = 4 - \frac{16}{3}\sqrt{\frac{8}{15}}>0. \]
Since $p(0)=4$, $p(1)=1$, and its minimum on the interval is positive, $p(x) > 0$ for all $x \in [0,1]$.

Given that $p(x) > 0$ and $-3x \le 0$ for $x \in [0,1]$, $N'(x) = -3x \cdot p(x) \leq 0$ on $[0,1]$. This implies that $N(x)$ is a non-increasing function on $[0,1]$.
Because $N(x)$ is non-increasing, its minimum value on $[0,1]$ must be at $x=1$. We conclude that $h(x) \ge 0$ for all $x \in [0,1]$.
\end{proof}

\paragraph{\textbf{Point Masses}}\label{sec:extension to general}

When distributions of variables $X = (X_1,\ldots,X_n)$ are not continuous, it may be that there does not exist any deterministic threshold satisfying the definition of $\bar{\tau}$. However, it is easy to see that the following will hold instead. Let $\varepsilon \sim U[0,\delta]$ for some value $\delta$. Then for any $c\in (0,1)$ there exists $\delta$ sufficiently small defining a value $\tau$ such that

\[
\E[M \ | \ M \geq \tau+\varepsilon]\Pr[ M \geq \tau+\varepsilon] = c \cdot \opt,
\]
where the expectation is also taken over the randomness of $\varepsilon\sim U[0,\delta]$. Now with $c=3/4$, consider the algorithm that is posting the maximum of $\alpha\cdot M$ and the randomized base threshold $\tau' = \tau + \varepsilon$. Clearly, Lemma \ref{lem:>T/alpha values} for values that use the prediction is not affected by this change in the algorithm. For Lemma \ref{lem:T/alpha probability upper bound}, we only need
\begin{align*}
\E[M(X)| M(X) \geq {\tau'}]\Pr[M(X) \geq {\tau'}] = 3\E[M(X)| M(X) < {\tau'}]\Pr[M(X) < {\tau'}]
\end{align*}
to hold, which is true by definition of $\tau'$. The other part is to check that the new randomized algorithm obtains the same welfare in the case of the base threshold, and not the prediction being used. We observe that its performance is the same up to an additive $\delta$ of the original threshold algorithm $\max\{\bar{\tau},M\cdot\alpha\}$ being applied to the continuous instance $Y = (Y_1,\ldots,Y_n)$ where $Y_i$ is a smoothing of $X_i$ by some noise $\delta \cdot U[0,1]$.

\end{document}